\def\II{\mathbb I}
\newtheorem{Lmm}{Lemma}
\newtheorem{Thm}{Theorem}
\newtheorem{Dfn}{Definition}
\newtheorem{Crl}{Corollary}
\newtheorem{proposition}{Proposition}
\begin{document}

\title{Leftover hashing from quantum error correction:\\
Unifying the two approaches to the security proof of quantum key distribution
}
\author{Toyohiro~Tsurumaru%,~\IEEEmembership{Nonmember}
\thanks{The author is with Mitsubishi Electric Corporation, Information Technology R\&D Center,
5-1-1 Ofuna, Kamakura-shi, Kanagawa, 247-8501, Japan (e-mail: Tsurumaru.Toyohiro@da.MitsubishiElectric.co.jp).
This paper was presented in part at the 38th Quantum Information Technology Symposium (QIT38), Hiroshima, Japan, June 4-5, 2018; also in part at 2019 Symposium on Cryptography and Information Security (SCIS2019), Otsu, Japan, Jan. 22-25, 2019; and also in part at the 9th International Conference on Quantum Cryptography (QCrypt 2019), Montreal, Canada, Aug. 26-30,  2019.
}
}

\maketitle

\begin{abstract}
We show that the Mayers-Shor-Preskill approach and Renner's approach to proving the security of quantum key distribution (QKD) are essentially the same.
We begin our analysis by considering a special case of QKD called privacy amplification (PA).
PA itself is an important building block of cryptography, both classical and quantum.
The standard theoretical tool used for its security proof is called the leftover hashing lemma (LHL).
We present a direct connection between the LHL and the coding theorem of a certain quantum error correction code.
Then we apply this result to proving the equivalence between the two approaches to proving the security of QKD.
\end{abstract}

\begin{IEEEkeywords}
Privacy amplification, leftover hashing lemma, quantum error correction, quantum cryptography.
\end{IEEEkeywords}

\section{Introduction}
\label{sec:introduction}
Quantum key distribution (QKD) is a technique for distributing information-theoretically secure keys between two remote parties connected by a quantum channel \cite{BB84, Nielsen-Chuang}.
Today QKD is a real-world technology; 
tests have been carried out in various realistic situations, including metropolitan fiber networks \cite{Sasaki:11} and satellite communication \cite{PhysRevLett.120.030501}; there are also commercially available products \footnote{See, e.g., the webpages of id Quantique SA and of Quantum CTek Group.}.
Underlying this technology are information-theoretic proofs of the security of QKD.
No QKD system can be secure unless a rigorous security proof is given.

At the present, there are two major approaches for obtaining security proofs:
\begin{itemize}
\item Phase error correction (PEC)-based approach:
This is usually called the Shor-Preskill approach, or sometimes the Mayers-Shor-Preskill approach \cite{Mayers98,SP00}.
One constructs a virtual QKD protocol which has a quantum phase error correction (PEC) algorithm embedded inside.
The security analysis is then reduced to upper bounding the failure probability of the quantum PEC, e.g., by using a coding theorem.
This approach was initiated by Mayers \cite{Mayers98}, and later simplified and improved by many others including Shor and Preskill \cite{SP00}, Koashi \cite{Koashi, KoashiNJP09}, and Hayashi \cite{H07}.
\item Leftover hashing lemma (LHL)-based approach:
This is usually called Renner's approach \cite{RennerPhD}.
Once a lower bound on the (smooth) conditional min-entropy of a sifted key is estimated, one can guarantee the secrecy of the corresponding final key, simply by applying an existing formula called the leftover hashing lemma (LHL, or lemma \ref{lmm:leftover_hashing_lemma} of this paper).
\end{itemize}
As far as we know, the mathematical relations between these two methods were investigated only in restricted cases \cite{PhysRevA.78.032335,doi:10.1098/rspa.2010.0445,5967913}, and because of that, these methods have generally been considered independent of each other.
It now seems customary to publish two papers that prove the security of the same QKD protocol by using each of the two approaches: E.g., to name a few, for the asymptotic case of Bennett-Brassard 1984 (BB84) protocol \cite{BB84} there are papers of the PEC-based approach \cite{Mayers98, SP00} and of the LHL-based \cite{RennerPhD}; for the finite-size case there are refs. \cite{H07} and \cite{TLGR12}; for the Bennett 1992 (B92) protocol \cite{B92} there are refs. \cite{TKM03} and \cite{RennerPhD}.
Empirically, however, obtained results are the same for most problems of practical interest (e.g., the final key rate for a given value of the security parameter $\varepsilon$), regardless of the approach.

The goal of this paper is to show that these two approaches are in fact essentially the same.
We demonstrate this fact by presenting a direct connection between these two approaches.
That is, for QKD protocols in general, we present an explicit procedure for converting a security proof of one approach to that of another approach.
We also show that the two approaches achieve the same level of security; i.e., they give the same security bound except for the presence of an inessential constant factor.
Thus hereafter, there is no need to publish two security proofs for the same QKD protocol; one is enough.

We begin our analysis by considering privacy amplification (PA) algorithms \cite{476316}.
PA can be thought of as a special case of QKD where the sender Alice alone generates a secret key in the absence of the receiver Bob.
At the same time, PA itself is an important building block of cryptography, both classical and quantum (see, e.g., \cite{476316,Impagliazzo:1989:PGO:73007.73009,tuyls2007security}).
The LHL mentioned above is the standard theoretical tool used for guaranteeing its security \cite{RennerPhD,476316,Impagliazzo:1989:PGO:73007.73009,5961850}.
For this algorithm, we show that (i) there exists a direct connection between a PA algorithm and a certain quantum phase error correction (PEC) algorithm, and (ii) the LHL follows from a coding theorem of the quantum PEC algorithm.

Then we generalize these results on PA to QKD, and show that the same connection exists also for QKD protocols in general.
That is, for QKD protocols in general, we present an explicit procedure for converting a security proof of the LHL-based approach to that of a PEC-based approach, and vice versa.
We also show that the obtained security bounds are the same, except for the presence of an inessential constant factor.

We close this section by giving very technical remarks.
The first remark is that, throughout the paper, we assume that one always uses the universal$_2$ hash function for PA, despite the fact that there are other functions suitable for PA, e.g., the {\it almost dual} universal$_2$ hash function \cite{FS08,TH13,HT14}, Trevisan's extractor \cite{Trevisan,DPVR12}, etc. (see, e.g., \cite{HT14}). The reasons for this restriction are as follows.
First, as long as the security of a given protocol is concerned, one does not benefit from using functions other than the universal$_2$.
That is, using those functions do not result in a longer final key length than the universal$_2$.
Rather, their advantage is in performance of the implementation; e.g., they are faster on software, or require shorter public nonce (see, e.g., \cite{HT14}).
Second, the universal$_2$ hash function is widely used in classical cryptography \cite{tuyls2007security}, quantum cryptography \cite{Sasaki:11,PhysRevLett.120.030501,VanAssche}, and quantum random number generation \cite{MYCQZ16}.
In particular, it is the most used function in quantum cryptography \cite{Sasaki:11,PhysRevLett.120.030501,VanAssche}.
%In short, if one is concerned about the security of a given protocol, it suffices to prove it for a particular type of functions, and we choose the universal$_2$ function since it is the most important.

The second remark is that similar results for limited cases were obtained by the present author and a co-worker \cite{TH13}. Namely, the conversion from the PEC-based approach to the LHL-based approach has been demonstrated for the case where one uses the {\it almost dual} universal$_2$ hash function (which includes the universal$_2$ as a special case) in the BB84 protocol or in a quantum wiretapping with the Pauli channel \cite{TH13}; but the case of QKD protocols in general and the converse (i.e. the LHL-to-PEC conversion) were not proved.
This paper presents the conversions of both directions (i.e., the equivalence) for QKD protocols in general under the condition that the universal$_2$ hash function is used for PA.
The equivalence for other functions, e.g., the {\it almost dual} universal$_2$ hash function or Trevisan's extractor, still remains an open problem.

The third remark is on the relationship between the results obtained here and the previous papers by Renes and his collaborators \cite{PhysRevA.78.032335,doi:10.1098/rspa.2010.0445,5967913}.
These papers indeed consider settings similar to ours, and also share the same motivation as ours, of investigating the relationship between the PEC-based and the LHL-based approaches.
However, unlike in the present paper, they discuss certain limited cases only, and the results are not sufficient to demonstrate the equivalence of the two approaches.
More specifically, in refs. \cite{PhysRevA.78.032335,5967913}, they consider the idealized setting where Eve performs collective attacks, and only show that the rate of secret fraction are the same in the two approaches, in the asymptotic limit where the number of blocks goes to infinity, with the security parameter $\varepsilon$ being fixed.
Paper \cite{doi:10.1098/rspa.2010.0445} considers the coherent attack against a bit string of any finite length, which is more realistic, and proves essentially that if there exists a linear error correcting code with a desirable property, the PEC-based approach can be converted to the LHL-based approach. However, the paper discusses neither a practical hash function nor the conversion in the converse  (i.e. the LHL-to-PEC conversion) except for two extreme cases. Thus they do not prove the equivalence. In contrast, in the present paper, we consider the practical setting where Eve performs the coherent attack against a bit string of any finite length and the legitimate user(s) use the linear universal$_2$ hash function, and show the equivalence of the PEC-based and the LHL-based approaches.

\section{Review of Privacy Amplification}
Privacy amplification (PA) \cite{476316} is a technique for extracting a completely private bit string, from a given bit string which may be partially leaked outside.
This can also be viewed as a special case of QKD where Alice alone generates a final key in the absence of Bob.
In this section we review the definition and the security of PA.

\label{sec:privacy_amplification}
\subsection{Definition}
\label{sec:definition_of_PA}
%\subsubsection{Setting and procedure}
%\label{sec:procedure_and_output_of_PA}

\subsubsection{Initial state and algorithm}
PA starts with a sub-normalized classical-quantum (cq) state
\begin{equation}
\rho_{AE}^{\rm ini}=\sum_a|a\rangle\langle a|_A\otimes \rho^{{\rm ini},a}_E,
\label{eq:initial_state_def}
\end{equation}
where bit string $a\in \{0,1\}^n$ is owned by the legitimate user (say, Alice), and $\rho^{{\rm ini},a}_E$ by the eavesdropper Eve.
Note that Eve may be able to obtain a certain amount of information of string $a$ by measuring $\rho^{{\rm ini},a}_E$, unless $\rho^{{\rm ini},a}_E$ are all equal.

PA is a technique to extract from string $a$ a random string $k$ which Eve cannot guess.
More precisely, Alice selects a random function $g$, and then calculates a shorter string $k=g(a)$.
The idea is to realize the situation where Eve's state corresponding to $k$ ($\rho^{{\rm fin},g,k}$ of eq. (\ref{eq:final_state_rhogk}) below) are averaged, and less distinguishable than $\rho^{{\rm ini},a}_E$ in the initial (sub-normalized) state.
\begin{oframed}
\noindent\underline{{\bf Actual PA scheme} $(\Pi^{\rm ap},\rho^\text{ini-ap})$}

\medskip 

\noindent{\bf Initial state:} A sub-normalized cq state $\rho_{AE}^\text{ini-ap}$.

\medskip 

\noindent{\bf Algorithm} $\Pi^{\rm ap}_{AKG}$:
\begin{enumerate}
\item {\bf Choice of hash function}: Choose a random function $g$ and announce it publicly by writing it in a Hilbert space ${\cal H}_G$.
\item {\bf Actual PA using hash function $g$} ($\Pi^{{\rm ap},g}$):
\begin{enumerate}
\item Measure space ${\cal H}_A$ in the $Z$-basis to determine the value of $a$.
\item Calculate hash value $k=g(a)$, and store it in ${\cal H}_K$.
\end{enumerate}
\begin{itemize}
\item[---] We denote the final sub-normalized state by $\rho^\text{fin-ap}$

(=$\rho_{AEKG}^\text{fin-ap}=\Pi^{\rm ap}(\rho^\text{ini-ap})$).
\end{itemize}
\end{enumerate}
\end{oframed}

\subsubsection{Terminology}
We call a set of an algorithm and an initial (sub-normalized) state a {\it scheme}.

Whenever we say a function $g$ is {\it random}, (i) there are a predetermined set of functions ${\cal G}=\set{g}$ (also called function family ${\cal G}$) as well as a predetermined probability $p_G(g)$, and (ii) in an actual algorithm or protocol, a legitimate player, Alice or Bob, selects $g$ with probability $p_G(g)$.
In what follows, we often regard such $g$ as a random variable and denote it by capital $G$; in this case the probability is written as $p_G(g)=\Pr(G=g)$.
All random functions considered in this paper have their output shorter than their input, and for this reason we often call them random {\it hash} functions, and their output a {\it hash value}.

\subsubsection{Assumptions}
\label{sec:assumptions_PA}
Whenever we discuss a PA scheme, we impose the following assumptions.
As to Hilbert spaces,  ${\cal H}_A$ and ${\cal H}_K$ are under Alice's control, and ${\cal H}_E$ is under Eve's.
${\cal H}_G$ is a public space under both Alice's and Eve's control.
As to random hash functions, we assume:
\begin{itemize}
\item Functional forms of $g$ and probability $\Pr(G=g)$ are public.
\item Functions $g$ are of the form $g:\{0,1\}^n\to\{0,1\}^m$ with $n\ge m$, and are linear and surjective.
\end{itemize}
Note that, from the first item, it is clear that the randomness and the secrecy of $k$ are from those of the original string $a$, not of $g$.
We remark that the surjectivity of $g$ in the second item is solely for the sake of simplicity, and all the results of this paper in fact hold without it.

\subsection{Standard security criterion of PA ($\varepsilon$-secrecy)}
\label{sec:security_criterion_PA}

In this and the next subsections, we abbreviate $\rho^\text{ini-ap}$, $\rho^\text{fin-ap}$ as $\rho^\text{ini}$, $\rho^\text{fin}$.

In the security analysis, we may focus on system ${\cal H}_{KEG}$, because we are concerned with how much information of final key $k$ is accessible from Eve through her degrees of freedom ${\cal H}_{EG}$, and nothing else.
By construction of $\Pi^{\rm ap}$, it is obvious that the reduced sub-normalized state for ${\cal H}_{KEG}$ corresponding to $\rho^{\rm fin}$, i.e., $\rho_{KEG}^{\rm fin}={\rm Tr}_{A}(\rho_{KAEG}^{\rm fin})$, generally takes the form
\begin{eqnarray}
\rho_{KEG}^{\rm fin}&=&\sum_{g}\Pr(G=g)\,\rho^{{\rm fin},g}_{KE}\otimes |g\rangle\langle g|_G,
\label{eq:rho_KEG_definied}
\\
\rho_{KE}^{{\rm fin},g}&=&\sum_{k}\ket{k}\bra{k}_K\otimes \rho^{{\rm fin},g,k}_{E},
\label{eq:final_state}
\\
\rho^{{\rm fin},g,k}_{E}&=& \sum_{a\in g^{-1}(k)}\rho^{{\rm ini},a}_E.
\label{eq:final_state_rhogk}
\end{eqnarray}
Note that $\rho^{{\rm fin},g}$ can be interpreted either as the final sub-normalized state conditioned on Alice's choice $g$ of a hash function (as in eq. (\ref{eq:rho_KEG_definied})), or as the result of applying operation $\Pi^{\text{pa},g}$ (i.e., 2 with $g$ fixed) to the initial sub-normalized state; $\rho^{\rm fin}=\Pi^{\text{pa},g}(\rho^{\rm ini})$.

It is customary to measure the security of $\rho_{KEG}^{\rm fin}$ by comparing it with the ideal case.
To this end, we define the ideal sub-normalized state $\rho_{KEG}^{\rm ideal}$ corresponding to $\rho_{KEG}^{\rm fin}$ to be
\begin{eqnarray}
\rho_{KEG}^{\rm ideal}&:=&\sum_{g}\Pr(G=g)\,\rho_{KE}^{\rm ideal}\otimes |g\rangle\langle g|_G,\\
\rho_{KE}^{\rm ideal}&:=&2^{-m}\sum_{k}|k\rangle\langle k|_K\otimes \rho_E,
\label{eq:rho_KE_ideal_defined}
\\
\rho_E&=&{\rm Tr}_{KG}(\rho^{{\rm fin},g}_{KEG})={\rm Tr}_A(\rho^{\rm ini}_{AE})=\sum_a\rho^{{\rm ini},a}_{E}.
\label{eq:rho_E_defined}
\end{eqnarray}
This sub-normalized state indeed describes the ideal situation where Eve's sub-normalized states corresponding to $k$, i.e., $\rho^{{\rm ideal},k}_E=2^{-m}\rho_E$, are all equal, and thus all values of $k$ are equally probable to her.

We measure the security of $k$ against Eve by a distance between these actual and ideal (sub-normalized) states (see e.g. ref. \cite{RennerPhD}).
More specifically: We define the $L_1$ distance of two sub-normalized states $\mu,\nu$ to be the $L_1$ norm (or 1-norm) of their difference, $\|\mu-\nu\|_1$.
Then we define the quantity $d_1$ of the final (sub-normalized) state $\rho_{KEG}^{\rm fin}$ to be
\begin{eqnarray}
d_1(\rho_{KEG}^{\rm fin})&:=&\left\|\rho_{KEG}^{\rm fin}-\rho_{KEG}^{\rm ideal}\right\|_1,
\label{eq:average_trace_dist_defined}
\end{eqnarray}
i.e. the $L_1$-distance between $\rho_{KEG}^{\rm fin}$ and the corresponding ideal (sub-normalized) state $\rho_{KEG}^{\rm ideal}$.
A PA scheme is said to satisfy the $\varepsilon$-secrecy\footnote{In some literature (e.g. ref. \cite{Nielsen-Chuang}), the quantity $\frac12\|\mu-\nu\|_1$, one half of the $L_1$-distance, is called the trace distance.
In this terminology, the $\varepsilon$-secrecy means that trace distance between $\rho_{KEG}^{\rm fin}$ and $\rho_{KEG}^{\rm ideal}$ is upper bounded by $\varepsilon$.}, if $\frac12d_1(\rho_{KEG}^{\rm fin})\le\varepsilon$.
This security criterion is known to satisfy the universal composability \cite{BHLMO05}, and for this reason, considered as the standard.

We note that $d_1(\rho_{KEG}^{\rm fin})$ can also be written as an average with respect to Alice's choice of $g$
\begin{eqnarray}
d_1(\rho^{\rm fin}_{KEG})&=&\sum_g\Pr(G=g)d_1(\rho_{KE}^{{\rm fin},g}),
\label{eq:average_trace_dist_rewritten}
\\
d_1(\rho_{KE}^{{\rm fin},g})&:=&\left\|\rho_{KE}^{{\rm fin},g}-\rho_{KE}^{\rm ideal}\right\|_1.
\label{eq:d_1_defined}
\end{eqnarray}
This form will be convenient in subsequent sections.

\subsection{Existing result on the security (quantum leftover hashing lemma)}

Of various previous results on the security of PA, we focus in this paper on Renner's approach \cite{RennerPhD} using the quantum leftover hashing lemma (LHL, or lemma \ref{lmm:leftover_hashing_lemma} below).
The virtue of Renner's approach is its simplicity;
once one obtains a lower bound on the conditional min-entropy $H_{\min}^{\varepsilon}(\rho_{AE}^{\rm ini}|E)$ of the initial sub-normalized state $\rho_{AE}^{\rm ini}$, the rest of the argument is finished almost automatically, by substituting it to the LHL.
We briefly review this result below.

The conditional min-entropy $H_{\min}(\rho_{AE}|E)$ of a sub-normalized state $\rho_{AE}$ is defined to be the maximum real number $\lambda$, satisfying
\begin{equation}
2^{-\lambda} \II_A\otimes \sigma_E \ge \rho_{AE}
\end{equation}
for a normalized state $\sigma_E$ \cite{RennerPhD,TomamichelPhD}.
Further, the {\it smooth} conditional min-entropy $H_{\min}^{\varepsilon}(\rho_{AE}|E)$ of a sub-normalized state $\rho_{AE}$ is the maximum value of $H_{\min}(\bar{\rho}_{AE}|E)$ of sub-normalized states $\bar{\rho}_{AE}$ that are $\varepsilon$-close (i.e. within distance $\varepsilon$ in terms of the purified distance) to $\rho_{AE}$ ($\bar{\rho}_{AE}\approx_\varepsilon \rho_{AE}$) \cite{TomamichelPhD};
\begin{equation}
H_{\min}^{\varepsilon}(\rho_{AE}|E):=\max_{\substack{{\rm Tr}\bar{\rho}_{AE}\le1,\\ \bar{\rho}_{AE}\approx_\varepsilon \rho_{AE}}}H_{\min}(\bar{\rho}_{AE}|E).
\label{eq:smooth_min_entropy_defined}
\end{equation}
(for details of $\varepsilon$-closeness, see appendix \ref{sec:notation} or ref. \cite{TomamichelPhD}).
A random hash function $G$ is called {\it universal}${}_2$ or {\it two-universal} \cite{CARTER1979143}, if the collision probability of outputs of any distinct input pair $a,a'$ ($a\ne a'$) is bounded as
\begin{eqnarray}
\Pr(G(a)=G(a'))&=&\sum_{g}\Pr(G=g)1[g(a)=g(a')]\nonumber\\
&\le& 2^{-m},
\label{eq:def_universal$_2$_function}
\end{eqnarray}
where function $1[\cdots]$ takes value one if the condition inside brackets hold, and zero otherwise.

With these settings, one can bound $d_1(\rho_{KEG}^{\rm fin})$ of (\ref{eq:average_trace_dist_defined}) by using the following lemma:
\begin{Lmm}[Quantum leftover hashing lemma (LHL) (chapter 5, ref. \cite{RennerPhD})]
\label{lmm:leftover_hashing_lemma}
If random hash function $G$ is universal$_2$, the security of PA scheme $\Pi^{\rm ap}$ is guaranteed by
\begin{eqnarray}
d_1(\rho_{KEG}^{\rm fin})\le 2^{\frac12[m-H_{\min}(\rho_{AE}^{\rm ini}|E)]},
\label{eq:original_leftover_hashing_lemma}
\end{eqnarray}
and in terms of the smooth conditional min-entropy,
\begin{eqnarray}
d_1(\rho_{KEG}^{\rm fin})
\le 2\varepsilon+2^{\frac12[m-H^{\varepsilon}_{\rm min}(\rho_{AE}^{\rm ini}|E)]}.
\label{eq:original_smoothed_leftover_hashing_lemma}
\end{eqnarray}
\end{Lmm}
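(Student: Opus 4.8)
The plan is to bound the averaged distance $d_1(\rho^{\rm fin}_{KEG})=\sum_g\Pr(G=g)\,d_1(\rho_{KE}^{{\rm fin},g})$ of eq. (\ref{eq:average_trace_dist_rewritten}) in three moves: pass from the $L_1$ norm to a $\sigma_E$-weighted $L_2$ norm, use the universal$_2$ property to cancel the cross terms that arise, and finally invoke the min-entropy condition. Throughout I fix the normalized state $\sigma_E$ attaining $H_{\min}(\rho_{AE}^{\rm ini}|E)$, so that $2^{-H_{\min}}\,\II_A\otimes\sigma_E\ge\rho_{AE}^{\rm ini}$.

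\emph{$L_1$-to-$L_2$ reduction.} For each $g$ the difference $M=\rho_{KE}^{{\rm fin},g}-\rho_{KE}^{\rm ideal}$ is Hermitian and block-diagonal in the classical register $K$. I would write $\|M\|_1={\rm Tr}(MS)$ with $S={\rm sgn}(M)$ ($\|S\|_\infty\le1$), insert $(\II_K\otimes\sigma_E^{1/4})(\II_K\otimes\sigma_E^{-1/4})$ on each side of $M$, and apply the Hilbert--Schmidt Cauchy--Schwarz inequality. Using the elementary bound ${\rm Tr}[(\sigma_E^{1/4}S\sigma_E^{1/4})^2]\le{\rm Tr}\,\sigma_E=1$ and that the key register has dimension $2^m$, this gives
\[ d_1(\rho_{KE}^{{\rm fin},g})\le 2^{m/2}\sqrt{\Gamma_g},\qquad \Gamma_g:=\sum_k{\rm Tr}\!\left[\left(\sigma_E^{-1/4}\bigl(\rho_E^{{\rm fin},g,k}-2^{-m}\rho_E\bigr)\sigma_E^{-1/4}\right)^{2}\right]. \]
The crucial point, and what I expect to be the main obstacle, is performing this passage so that only the key dimension $2^m$ (and \emph{not} the dimension of Eve's system) appears, while the weighting by $\sigma_E$ is exactly the one matching the min-entropy reference; it is this matching that later lets the collision term be controlled by $2^{-H_{\min}}$.

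\emph{Averaging and universal$_2$.} By concavity of the square root, $\sum_g\Pr(G=g)\sqrt{\Gamma_g}\le\sqrt{\sum_g\Pr(G=g)\Gamma_g}$, so it suffices to bound the averaged weighted collision term. Writing $\tilde\rho^{a}=\sigma_E^{-1/4}\rho_E^{{\rm ini},a}\sigma_E^{-1/4}$ and $\tilde\rho_E=\sigma_E^{-1/4}\rho_E\sigma_E^{-1/4}$ and expanding the square, the cross and constant terms combine (using $\sum_k\rho_E^{{\rm fin},g,k}=\rho_E$) into $-2^{-m}{\rm Tr}[\tilde\rho_E^2]$, leaving
\[ \sum_g\Pr(G=g)\,\Gamma_g=\sum_{a,a'}\Pr\bigl(G(a)=G(a')\bigr){\rm Tr}[\tilde\rho^{a}\tilde\rho^{a'}]-2^{-m}{\rm Tr}[\tilde\rho_E^2]. \]
Separating $a=a'$ from $a\ne a'$ and applying the defining universal$_2$ inequality $\Pr(G(a)=G(a'))\le2^{-m}$ together with ${\rm Tr}[\tilde\rho^{a}\tilde\rho^{a'}]\ge0$, the off-diagonal sum is at most $2^{-m}\sum_{a,a'}{\rm Tr}[\tilde\rho^{a}\tilde\rho^{a'}]=2^{-m}{\rm Tr}[\tilde\rho_E^2]$, which exactly cancels the negative term and yields $\sum_g\Pr(G=g)\,\Gamma_g\le\sum_a{\rm Tr}[(\tilde\rho^{a})^2]$.

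\emph{Min-entropy bound and smoothing.} Setting $X=(\II_A\otimes\sigma_E^{-1/4})\rho_{AE}^{\rm ini}(\II_A\otimes\sigma_E^{-1/4})\ge0$ so that $\sum_a{\rm Tr}[(\tilde\rho^{a})^2]=\|X\|_2^2$, I would conjugate the min-entropy condition by $\II_A\otimes\sigma_E^{-1/4}$ to get $X\le2^{-H_{\min}}\,\II_A\otimes\sigma_E^{1/2}$, whence
\[ {\rm Tr}[X^2]\le 2^{-H_{\min}}{\rm Tr}[X(\II_A\otimes\sigma_E^{1/2})]=2^{-H_{\min}}{\rm Tr}\,\rho_{AE}^{\rm ini}\le 2^{-H_{\min}}. \]
Chaining the three estimates gives $d_1(\rho_{KEG}^{\rm fin})\le 2^{m/2}\cdot2^{-H_{\min}/2}$, which is (\ref{eq:original_leftover_hashing_lemma}). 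For the smooth version (\ref{eq:original_smoothed_leftover_hashing_lemma}), I would apply the non-smooth bound to the optimizer $\bar\rho_{AE}$ of (\ref{eq:smooth_min_entropy_defined}) and then use that both the PA map $\Pi^{\rm ap}$ and the map producing the ideal state are trace-non-increasing (hence do not increase the $L_1$ distance), so that by the triangle inequality the closeness $\bar\rho_{AE}\approx_\varepsilon\rho_{AE}^{\rm ini}$ contributes the additive $2\varepsilon$ term.
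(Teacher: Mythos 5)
Your proposal is correct, and it is a faithful reconstruction of Renner's original proof --- which is exactly what the paper cites for this lemma and summarizes in one sentence (bound the 1-norm $d_1$ by a 2-norm, average over $G$, suppress unwanted terms via the universal$_2$ property). Your three moves map onto that sketch one-to-one, and the details check out: the block-diagonal Cauchy--Schwarz step correctly produces only the key dimension $2^m$; the expansion and cancellation of the $-2^{-m}{\rm Tr}\bigl[\tilde\rho_E^2\bigr]$ term against the off-diagonal collision sum is right; and the final estimate ${\rm Tr}\bigl[X^2\bigr]\le 2^{-H_{\min}}{\rm Tr}\,\rho_{AE}^{\rm ini}\le 2^{-H_{\min}}$ is valid (with the usual convention that $\sigma_E^{-1/4}$ denotes the generalized inverse, which suffices because the min-entropy condition forces every $\rho_E^{{\rm ini},a}$ into the support of $\sigma_E$). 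Your smoothing step is also the standard triangle-inequality/contractivity argument, essentially the same one the paper uses to prove corollary \ref{crl:smooth_LHL_from_PEC}. What you do \emph{not} reproduce is the paper's own, genuinely different derivation: the paper never reproves lemma \ref{lmm:leftover_hashing_lemma} directly, but instead converts the PA scheme into a virtual scheme with a phase error correction step embedded, bounds $d_1$ by the PEC failure probability (eq. (\ref{eq:upper_bound_by_average_BLER})), and bounds that failure probability by $2^{m-H_{\min}}$ via the coding theorem (theorem \ref{thm_coding_theorem_vpec}), arriving at corollaries \ref{lmm:LHL_fom_CT} and \ref{crl:smooth_LHL_from_PEC}. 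The trade-off is clear: your (Renner's) direct route yields the sharp constant in (\ref{eq:original_leftover_hashing_lemma}), whereas the paper's PEC route concedes an inessential factor $2\sqrt2$ (three bits in the exponent) in exchange for exposing the structural identity between leftover hashing and quantum error correction, which is the point of the paper.
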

A proof of lemma \ref{lmm:leftover_hashing_lemma} is given in chapter 5 of Renner's Ph.D. thesis \cite{RennerPhD}.
In that thesis, Renner first bounds the 1-norm, $d_1$ defined in (\ref{eq:d_1_defined}), by the corresponding 2-norm $d_2$.
Then he averages $d_2$ with respect to $G$ and suppresses unwanted terms by exploiting the universal$_2$ property of hash function $G$, so that it is bounded by the right hand side of (\ref{eq:original_leftover_hashing_lemma}).

\section{Leftover hashing lemma as a coding theorem of quantum phase error correction}

In this section, we prove essentially the same result as the LHL, by using a method different from the original paper.
In other words, we prove inequalities which are identical to (\ref{eq:original_leftover_hashing_lemma}) and (\ref{eq:original_smoothed_leftover_hashing_lemma}) except for an inessential constant factor, without relying on the results of ref \cite{RennerPhD}.
In what follows, we will often call our version of inequalities (\ref{eq:original_leftover_hashing_lemma}) and (\ref{eq:original_smoothed_leftover_hashing_lemma}) the {\it LHL-like} bounds.

The proof proceeds in two steps: (i) we present a direct connection between the actual PA algorithm $\Pi^{\rm ap}$ and a certain quantum phase error correction (PEC) algorithm, and (ii) we obtain the LHL-like bounds as a consequence of the coding theorem of the PEC.
More precisely, in step (i), we show that, without loss of security (i.e., without affecting the value of $d_1(\rho_{KEG}^{\rm fin})$), we can modify the actual PA algorithm $\Pi^{\rm ap}$ to another algorithm $\Pi^{\rm vp}$, which has a PEC algorithm embedded inside.
Then in step (ii), we show that the security criterion, $d_1(\rho_{KEG}^{\rm fin})$, can be bounded from above by the failure probability of the PEC algorithm.
As a result of this, the LHL-like bound is obtained as a corollary of a coding theorem of the PEC.

This section can be regarded as a special case of our discussion on quantum key distribution (QKD), given in the next section.
Recall that PA can be interpreted as a special case of QKD where Alice alone generates a final key in the absence of Bob.
%Also recall that, as mentioned in Introduction, there are two approaches to the security proof of QKD, which are generally considered independent: the PEC-based approach and the LHL-based approach.
Then the argument below can be viewed as converting the LHL-based security proof of PA (i.e., the LHL itself) to the PEC-based security proof.

\subsection{Notation}
From now on, we assume that bras and kets indexed by a bit string are all eigenstates of the Pauli operator $Z$; e.g., for $z\in\{0,1\}^n$, $Z^{z'}\ket{z}=(-1)^{z'\cdot z}\ket{z}$.
We also denote Fourier transforms of these $Z$-eigenstates by adding a tilde:
\begin{equation}
\ket{\widetilde{x}}=2^{-n/2}\sum_{z}(-1)^{x\cdot z}\ket{z},\ {\rm for }\ x\in\{0,1\}^n.
\end{equation}
We stress that $\ket{\widetilde{x}}$ is a state $\ket{\widetilde{\cdots}}$ parametrized by a variable $x$, and not $\ket{\cdots}$ parametrized by $\tilde{x}$.
Note that these are eigenstates the Pauli operator $X$; i.e., $X^{x'}\ket{\widetilde{x}}=(-1)^{x'\cdot x}\ket{\widetilde{x}}$.
We will often call $\ket{z}$ the bit basis, and $|\widetilde{x}\rangle$ the phase basis, following the convention of quantum information theory.
For more details of notation, see appendix \ref{sec:notation}.

\subsection{Conversion to a virtual PA scheme}
\label{sec:conversion_to_VPA}
As we have seen in the previous section, the only goal of the security proof of PA is to upper bound $d_1(\rho_{KEG}^\text{fin-ap})$, for  $\rho_{KEG}^\text{fin-ap}$ defined in section \ref{sec:definition_of_PA} and $d_1(\cdots)$ defined in eq. (\ref{eq:average_trace_dist_defined}).
In carrying out this analysis, one loses no generality by considering any scheme,
if its final sub-normalized state equals $\rho_{KEG}^\text{fin-ap}$ of the actual scheme, when one looks only at system ${\cal H}_{KEG}$.
More precisely, we use the following definition:
\begin{Dfn}[Virtual PA schemes]
We say that a PA scheme $(\Pi^{\prime}, \rho^{{\rm ini}\prime})$ is virtual, if the reduced sub-normalized state $\rho^{{\rm fin}\prime}_{KEG}$ for system ${\cal H}_{KEG}$ corresponding to its final sub-normalized state $\rho^{{\rm fin}\prime}=\Pi^{\prime}(\rho^{{\rm ini}\prime})$ equals $\rho^{{\rm fin}\text{-}{\rm ap}}_{KEG}$.
\end{Dfn}
Here, $\rho^{{\rm fin}\text{-}{\rm ap}}_{KEG}$ is the reduced sub-normalized state for ${\cal H}_{KEG}$ corresponding to $\rho^{{\rm fin}\text{-}{\rm ap}}_{KEG}$ of the actual scheme $(\Pi^{\rm ap},\rho^{{\rm ini}\text{-}{\rm ap}})$.
Intuitively, the output of a virtual scheme is exactly the same as that of the actual scheme, if one looks only at system ${\cal H}_{KEG}$.
Hence if one wishes to prove the security of the actual scheme $(\Pi^{\rm ap},\rho^\text{ini-ap})$ by bounding $d_1(\rho_{KEG}^\text{fin-ap})$, it suffices instead to bound $d_1(\rho_{KEG}^{{\rm fin}\prime})$ with $\rho_{KEG}^{{\rm fin}\prime}$ being the output of an arbitrary virtual scheme $(\Pi^{\prime},\rho^{{\rm ini}\prime})$.

The advantage of considering such virtual schemes is that, with an appropriate choice of $(\Pi',\rho')$, one can simplify the mathematical proposition that needs to be proved.
Also, as a result of that, one can even simplify the proof itself.
We stress that virtual schemes are used only as a theoretical tool that is useful in this sense, and one never needs to care how to realize them in practice.

Below we construct an example of virtual schemes by modifying the actual scheme $(\Pi^{\rm ap},\rho^\text{ini-ap})$ in three steps.
Note that indeed none of these changes affects the corresponding final sub-normalized state $\rho_{KEG}^{\rm fin}$ in ${\cal H}_{KEG}$.

\subsubsection{Preparing a purification}
The initial sub-normalized cq state $\rho_{AE}^\text{ini-ap}$ may be replaced with its purification $\ket{\Psi^\text{ini-ap}}_{AEA'}$, with ${\cal H}_{A'}$ being an  ancilla space under Alice's control.

(One example of $\ket{\Psi^\text{ini-ap}}_{AEA'}$ can be defined as follows (cf. eq. (\ref{eq:purification_Phi_defined}) of appendix \ref{sec:proof_of_theorem}): Introduce an ancilla space ${\cal H}_{A'}:={\cal H}_{A_1}\otimes{\cal H}_{A_2}$, with ${\cal H}_{A_1}$ having a sufficiently large dimension, and ${\cal H}_{A_2}$ being an $n$-qubit space.
Then define a purification $\ket{\psi^{a}}_{EA_1}$ of $\rho_E^{{\rm ini},a}$ for each $a\in\{0,1\}^n$, and let $\ket{\Psi^{\rm ini-ap}}_{AEA'}=\sum_{a}\ket{a}_{A}\otimes |\psi^{a}\rangle_{EA_1}\otimes|a\rangle_{A_2}$.)

\subsubsection{PA as the bit basis measurements}
Since hash function $g\in {\cal G}$ is linear and surjective (see the third paragraph of section \ref{sec:definition_of_PA}), the $i$-th bit of $k$ can be represented $k_i=g_i\cdot a$ with a set of linearly independent vectors $g_1,\dots, g_m\in\{0,1\}^n$.
Hence step 2 of $\Pi^{\rm ap}_A$ is equivalent to
measuring eigenvalues $(-1)^{k_i}$ of operators
\begin{equation}
\bar{Z}_i:=Z^{g_i},\ i=1,\dots,m,
\end{equation}
i.e., $\bar{Z}_i\ket{a}=Z^{g_i}\ket{a}=(-1)^{g_i\cdot a}\ket{a}=(-1)^{k_i}\ket{a}$.
We will often call this step the bit basis measurement.

\subsubsection{Quantum error correcting code $PC^g$ defined from hash function $g$}
\label{sec:Code_PCg_defined}
Define a classical error correcting code $C^g$ to be $C^g:=(\ker g)^\perp$, i.e., a linear code having generating matrix $g=(g_1^T,\dots,g_m^T)^T$.
Select a check matrix $h=(h_{1}^T,\dots,h_{n-m}^T)^T$.
By definition, $h$ is an $(n-m)\times n$ matrix satisfying $hg^T=0$.

Then from this classical code, $C^g$, define a quantum error correcting code $PC^g$ by embedding  in phase degrees of freedom of ${\cal H}_A$;
i.e., $PC^g$ is a quantum code characterized by a code space spanned by a basis $\left\{\,\ket{\widetilde{x}}\,|\,x\in C^g\,\right\}$, and syndrome operators
\begin{equation}
\bar{X}_j:=X^{h_j},\ j=1,\dots,n-m.
\end{equation}
Note that $\bar{Z}_i$ and $\bar{X}_j$ commute with each other due to $hg^T=0$.
We note that $PC^g$ is the Hadamard-transform of a classical code having codewords in the bit basis $\left\{\,\ket{x}\,|\,x\in C^g\,\right\}$; thus $PC^g$ corrects only phase errors, not bit errors.
For this reason, we will often call $PC^g$ a phase error correcting code.
We also note that $PC^g$ can alternatively be defined as a Calderbank-Shor-Steane (CSS) code corresponding to a classical code pair $(C_1,C_2)=(\{0,1\}^n, \ker g)$, in the notation of ref. \cite{Nielsen-Chuang}.
In this terminology, $\bar{Z}_i$ and $\bar{X}_{j}$ are 
the logical $Z$ operators and the stabilizers of the CSS code, respectively.

Since $\bar{Z}_i$ and $\bar{X}_j$ commute, syndrome measurements with $\bar{X}_j$ may be inserted before bit basis measurements with $\bar{Z}_i$.
Also for the same reason, we may also insert arbitrary measurements in ${\cal H}_{A'}$, and phase flip operation using $Z_i$, after the syndrome measurement and before the bit basis measurement.

\subsubsection{Virtual PA scheme}
\label{sec:obtained_virtual_PA}
By applying these three changes to the actual PA scheme $(\Pi^{\rm ap}, \rho^\text{ini-ap})$, we obtain virtual PA scheme $(\Pi^{\rm vp},\ket{\Psi^\text{ini-ap}})$, described below.

\begin{oframed}
\noindent\underline{{\bf Virtual PA scheme} $(\Pi^{\rm vp},\ket{\Psi^\text{ini-ap}})$}

\medskip

\noindent{\bf Initial state:}
A purification $\ket{\Psi^\text{ini-ap}}_{AEA'}$ of a sub-normalized cq state $\rho^\text{ini-ap}_{AE}$.

\medskip
\noindent{\bf Algorithm} $\Pi^{\rm vp}_{A|G|A'}$:
\begin{enumerate}
\item {\bf Choice of hash function}:

Choose $g\in{\cal G}$ with probability $\Pr(G=g)$ and announce it publicly by writing it in Eve's space ${\cal H}_G$.
\item {\bf Virtual PA using phase error correcting code $PC^g$} ($\Pi^{{\rm vp},g}_{A|A'}$):
\begin{enumerate}
\item {\bf Phase error correction} $\Pi^{{\rm pec},g}_{A|A'}$:
%using  $PC^g$ assisted by ancilla measurements in ${\cal H}_{A'}$
\begin{enumerate}
\item {\bf Syndrome measurement}:

Obtain syndrome $s=(s_1,\dots,s_{n-m})$ $\in\{0,1\}^{n-m}$ of $PC^g$ by measuring eigenvalues $(-1)^{s_1},\dots$, $(-1)^{s_{n-m}}$ of operators $\bar{X}_{1}$,$\dots$,$\bar{X}_{n-m}$ in ${\cal H}_{A}$, 
\item {\bf Ancilla measurement}:

Determine phase error $e\in\{0,1\}^n$ by measuring ancilla space ${\cal H}_{A'}$ with an operator set $M^{g,s}=\{\, E^{g,s,e}_{A'}\,|\,e\in\{0,1\}^n\}$, satisfying $\sum_{e\in\{0,1\}^n} E^{g,s,e\dagger} E^{g,s,e}=\II_{A'}$.
\item {\bf Phase flip}:

Flip phases by applying $Z^{e}$ in ${\cal H}_A$.
\end{enumerate}
\begin{itemize}
\item[---] We denote the sub-normalized state here by $\rho^\text{pec-vp}$.
\end{itemize}
\item {\bf Bit basis measurement}: Measure eigenvalues $(-1)^{k_1},\dots,(-1)^{k_{m}}$ of operators $\bar{Z}_1,\cdots,\bar{Z}_m$ in ${\cal H}_{A}$, and store hash value $k=(k_1,\dots,k_m)\in\{0,1\}^{m}$ in ${\cal H}_K$.
\end{enumerate}
\end{enumerate}
\begin{itemize}
\item[---] We denote the final sub-normalized state by $\rho^\text{fin-vp}$.
\end{itemize}
\end{oframed}

The following property is evident from the construction.
\begin{Lmm}
\label{lmm:virtuality_virtual_PA}
\begin{equation}
\rho^{{\rm fin}\text{-}{\rm vp}}_{KEG}=\rho^{{\rm fin}\text{-}{\rm ap}}_{KEG},
\end{equation}
and thus the scheme above, $(\Pi^{\rm vp},\ket{\Psi^{{\rm ini}\text{-}{\rm ap}}})$, is indeed virtual.
\end{Lmm}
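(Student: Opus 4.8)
The plan is to prove $\rho^\text{fin-vp}_{KEG}=\rho^\text{fin-ap}_{KEG}$ by checking, one at a time, that each of the three modifications used in section~\ref{sec:conversion_to_VPA} to build $\Pi^{\rm vp}$ out of $\Pi^{\rm ap}$ leaves the reduced sub-normalized state on $\mathcal{H}_{KEG}$ invariant; the lemma then follows by composing the three equalities. Throughout, I keep only the registers $K$, $E$, $G$ and defer the partial trace over everything else (the space $\mathcal{H}_A$, the purifying ancilla $\mathcal{H}_{A'}$, and the classical registers that record the syndrome $s$ and the phase error $e$) to the very end. I would phrase all computations with inline expressions so that the argument reduces to a few algebraic identities rather than a single large diagram.

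First I would dispose of the purification and the rewriting of the bit extraction together. Since $\rho^\text{ini-ap}_{AE}={\rm Tr}_{A'}\ket{\Psi^\text{ini-ap}}\bra{\Psi^\text{ini-ap}}$, the standard fact that a measurement on a subsystem is computable from its reduced state shows that performing the $Z$-basis measurement of $a$ and recording $k=g(a)$ on the purification, and then taking ${\rm Tr}_{AA'}$, reproduces $\rho^\text{fin-ap}_{KE}$ exactly; tensoring with $\ket{g}\bra{g}_G$ and averaging over $g$ handles the purification step. For the passage from the fine measurement of $a$ to the coarse measurement of the eigenvalues of $\bar{Z}_i=Z^{g_i}$, the only difference between the two is the coherence left within each coset $g^{-1}(k)$ on $\mathcal{H}_A$: the coarse measurement projects onto the span of $\{\ket{a}\mid a\in g^{-1}(k)\}$ rather than onto a single $\ket{a}$. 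Writing out the post-measurement state of the coarse measurement and applying ${\rm Tr}_A$ annihilates every off-diagonal term $\ket{a}\bra{a'}_A$ with $a\neq a'$, so both measurements yield the same $\rho^\text{fin}_{KE}$, and hence the same $\rho^\text{fin}_{KEG}$.

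The crux is the third modification, namely inserting the phase error correction $\Pi^{{\rm pec},g}$ (syndrome measurement with $\bar{X}_j=X^{h_j}$, ancilla measurement on $\mathcal{H}_{A'}$, and phase flip $Z^e$) before the bit basis measurement. The key observation is that all three PEC operations commute with the bit basis measurement: $\bar{X}_j$ commutes with $\bar{Z}_i$ because $hg^T=0$; the ancilla measurement acts on the disjoint system $\mathcal{H}_{A'}$; and $Z^e$ commutes with $\bar{Z}_i=Z^{g_i}$ since both are $Z$-type. Using these commutations I would reorder the algorithm so that the bit basis measurement is carried out \emph{first}, after which the $KEG$-marginal is already $\rho^\text{fin-ap}_{KEG}$ by the two steps above. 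The remaining PEC operations then act only on $\mathcal{H}_A$ and $\mathcal{H}_{A'}$, and once their classical outcomes $s$ and $e$ are traced out they constitute a single trace-preserving map $\mathcal{E}_{AA'}$ acting trivially on $\mathcal{H}_{KEG}$. The identity ${\rm Tr}_{AA'}\bigl[(\mathcal{E}_{AA'}\otimes{\rm id}_{KEG})(\sigma)\bigr]={\rm Tr}_{AA'}[\sigma]$, valid for any trace-preserving $\mathcal{E}_{AA'}$, then shows that the $KEG$-marginal is untouched, completing the chain $\rho^\text{fin-vp}_{KEG}=\rho^\text{fin-ap}_{KEG}$.

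I expect the main obstacle to be the bookkeeping in this last step. One must verify that $Z^e$ commutes with every $\bar{Z}_i$ for each fixed value of $e$, so that the reordering remains valid even though $e$ is itself a measurement outcome determined by the earlier syndrome and ancilla measurements; and one must confirm that discarding the records of $s$ and $e$ genuinely turns this conditioned sequence of a projective measurement, a POVM on $\mathcal{H}_{A'}$, and a classically controlled unitary into a bona fide completely positive trace-preserving map on $\mathcal{H}_{AA'}$, so that the trace identity applies verbatim. Once these two points are checked, no further computation is required and the three equalities compose to give the stated result.
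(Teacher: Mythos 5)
Your proposal is correct and follows essentially the same route as the paper: the paper states Lemma \ref{lmm:virtuality_virtual_PA} as ``evident from the construction,'' with the construction in section \ref{sec:conversion_to_VPA} justified by exactly the three invariance arguments you spell out (purification leaves the $AE$-marginal unchanged, hashing equals the coarse $\bar{Z}_i$-measurement after tracing out ${\cal H}_A$, and the commutation of $\bar{X}_j$, the ancilla POVM, and $Z^e$ with the $\bar{Z}_i$-measurement lets the discarded-outcome PEC act as a trace-preserving map on ${\cal H}_{AA'}$ only). Your elaboration of the adaptive-measurement bookkeeping is a valid filling-in of the details the paper omits, not a different proof.
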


Recall that the security proof of PA is to show $d_1(\rho^{{\rm fin}\text{-}{\rm ap}}_{KEG})\le\varepsilon$, with $\rho^{{\rm fin}\text{-}{\rm ap}}_{KEG}$ being the final sub-normalized state of the actual scheme $(\Pi^{\rm ap}, \rho^\text{ini-ap})$.
Also note that the above lemma gives $d_1(\rho^{{\rm fin}\text{-}{\rm vp}}_{KEG})=d_1(\rho^{{\rm fin}\text{-}{\rm ap}}_{KEG})$.
Hence, if one wishes to prove the security of the actual scheme, it suffices instead to show $d_1(\rho^{{\rm fin}\text{-}{\rm vp}}_{KEG})\le\varepsilon$ for the output $\rho^{{\rm fin}\text{-}{\rm vp}}_{KEG}$ of the virtual scheme $(\Pi^{\rm vp},\ket{\Psi^\text{ini-ap}})$.

Note that the use of random hash function $g$ in the actual scheme has now been translated to phase error correction $\Pi^{{\rm pec},g}$ using random code $PC^g$ (followed by the $Z$-basis measurement in step 2(b)).
We will see below that the security of the PA scheme $d_1(\rho^{{\rm fin}\text{-}{\rm vp}}_{KEG})$ ($=d_1(\rho^{{\rm fin}\text{-}{\rm ap}}_{KEG})$) is also translated to the performance of the random error correction $\Pi^{{\rm pec},g}$.

Before going on, we give a technical remark.
The phase error correction algorithm $\Pi^{{\rm pec},g}$ is in fact generalized from those typically found in textbooks, such as ref. \cite{Nielsen-Chuang}, in the following two senses.
\begin{itemize}
\item In guessing phase error $e$, Alice exploits a result of an ancilla measurement in ${\cal H}_{A'}$, in addition to syndrome $s$;
i.e., Alice can access an extra hint by measuring ${\cal H}_{A'}$.
\item Operator set $M^{g,s}$ for the ancilla measurement may depend on syndrome $s$ found in the previous step; i.e., Alice has a freedom to optimize her measurement strategy $M^{g,s}$ adaptively after knowing $s$, in order to better determine $e$.
\end{itemize}

\subsection{Reducing the security to the performance of quantum phase error correction}
\label{sec:reducing_security_to_PEC}
Next we illustrate that the security of the virtual PA scheme (and thus also of the actual scheme) can be reduced to the performance of the phase error correction $\Pi^\text{pec}$ embedded inside $\Pi^{\rm vp}$.

From the structure of the virtual scheme, it is evident that the result of $\Pi^\text{pec}$, i.e. $\rho^\text{pec-vp}$, takes the form
\begin{eqnarray}
\rho^\text{pec-vp}_{AA'EG}&=&\sum_{g}\Pr(G=g)\rho^{\text{pec-vp},g}_{AA'E}\otimes\ket{g}\bra{g}_G,\\
\rho^{\text{pec-vp},g}_{AA'E}&:=&\Pi^{{\rm pec},g}_{A|A'}\left(\ket{\Psi^\text{ini-ap}}\bra{\Psi^\text{ini-ap}}_{AA'E}\right).
\label{eq:def_rho_pec_g}
\end{eqnarray}
That is, $\rho^{\text{pec-vp},g}$ is the case of fixed $g$, and $\rho^\text{pec-vp}$ the average with respect to randomly chosen $g$.
Below we discuss these two cases in detail.

\subsubsection{Case of fixed $g$}
Whenever we discuss phase error correction in this paper, we regard $\ket{\widetilde{0}}$ as the {\it correct} phase that should be recovered, and all other phases as errors.
%\begin{equation}
%\ket{\widetilde{0}_L}_A:=2^{-m/2}\sum_{z\in\{0,1\}^m}\ket{z_L}_A=\ket{\widetilde{0}}_A.
%\end{equation}
Accordingly, we evaluate the performance of $\Pi^{\text{pec},g}$ by the phase error probability in ${\cal H}_A$
\begin{eqnarray}
P^{\rm ph}_A
\left(
\rho^{\text{pec-vp},g}
\right)
&=&
%1-{\rm Tr}\left\{\left(\ket{\widetilde{0}}\bra{\widetilde{0}}\right)_A\rho^{{\rm pec},g}_A\right\},
1-\bra{\widetilde{0}}_A\rho^{\text{pec-vp},g}_A\ket{\widetilde{0}}_A,
%\nonumber\\
%&=&
%1-{\rm Tr}
%\left\{
%\left(
%\ket{\widetilde{0}}\bra{\widetilde{0}}_A
%\otimes
%\II_{EB}
%\right)
%\rho^{{\rm pec},g}_{AEB}
%\right\},
\label{eq:failure_prob_PEC_defined}
\end{eqnarray}
with $\rho^{\text{pec-vp},g}_A$ being the reduced sub-normalized state for ${\cal H}_A$ corresponding to $\rho^{\text{pec-vp},g}$; i.e., $\rho^{\text{pec-vp},g}_A={\rm Tr}_{EA'}(\rho^{\text{pec-vp},g}_{AA'E})$.
We will also call this probability the {\it failure probability} of $\Pi^{\text{pec},g}$.

It has been known that $P^{\rm ph}_A\left(\rho^{\text{pec-vp},g}\right)$ can be used to give an upper bound on $d_1(\rho_{KE}^{\text{fin-vp},g})$, i.e., the security measure of hash value $k$ obtained in the subsequent bit basis measurement, against Eve (see, e.g., ref. \cite{HT12}).
For example, consider a simple situation where the phase error probability of the initial $\ket{\Psi^\text{ini-ap}}$ is sufficiently low, and $\Pi^{{\rm pec},g}$ always succeeds: $P^{\rm ph}_A\left(\rho^{\text{pec-vp},g}\right)=0$.
Then Alice's substate is a pure state with a definite phase, $\rho^{\text{pec-vp},g}_{A}=\ket{\widetilde{0}}\bra{\widetilde{0}}_A$, and so Alice's and Eve's joint state must be a tensor product $\rho^{\text{pec-vp},g}_{AE}=\ket{\widetilde{0}}\bra{\widetilde{0}}_A\otimes \rho_E$, which subsequently becomes $\rho_{KE}^{\text{fin-vp},g}=\rho^\text{ideal-vp}_{KE}$ after $\bar{Z}_i$ measurements of step 3;
here $\rho^\text{ideal-vp}_{KE}$ is the ideal state corresponding to $\rho_{KE}^{\text{fin-vp},g}$.
Hence a perfect error correction $\Pi^{{\rm pec},g}$, achieving $P^{\rm ph}_A(\rho^{\text{pec-vp},g})=0$, implies the perfect security $d_1(\rho_{KE}^{\text{fin-vp},g})=0$.
This observation has been extended to the general case including $P^{\rm ph}>0$ as 
\begin{equation}
d_1(\rho_{KE}^{\text{fin-vp},g})\le 2\sqrt2\sqrt{P^{\rm ph}_A\left(\rho^{\text{pec-vp},g}\right)}.
\label{eq:d_1_bounded_by_P_ph_pre}
\end{equation}
Although inequality (\ref{eq:d_1_bounded_by_P_ph_pre}) is a well-known result (see, e.g., ref. \cite{HT12}), we reproduce the proof in appendix \ref{sec:Proof_lemma_phase_error_vs_trace_distance} for readers' convenience.
Note that, due to the virtuality of the virtual PA scheme (lemma \ref{lmm:virtuality_virtual_PA}), inequality (\ref{eq:d_1_bounded_by_P_ph_pre}) also implies a bound for the actual scheme
\begin{equation}
d_1(\rho_{KE}^{\text{fin-ap},g})\le 2\sqrt2\sqrt{P^{\rm ph}_A\left(\rho^{\text{pec-vp},g}\right)}.
\label{eq:d_1_bounded_by_P_ph}
\end{equation}

\subsubsection{Case of randomly chosen $g$}
When hash function $g$ is chosen randomly, we have to consider $d_1(\rho_{KEG}^\text{fin-ap})=\sum_{g}\Pr(G=g)d_1(\rho_{KE}^{\text{fin-ap},g})$, given in eqs. (\ref{eq:average_trace_dist_defined}) and (\ref{eq:average_trace_dist_rewritten}).
By combining relations (\ref{eq:average_trace_dist_rewritten}) and (\ref{eq:d_1_bounded_by_P_ph}) and by applying Jensen's inequality (square root is a concave function), we can bound $d_1(\rho_{KEG}^\text{fin-ap})$ as
\begin{eqnarray}
\lefteqn{d_1(\rho_{KEG}^\text{fin-ap})}\nonumber\\
&=&\sum_{g}\Pr(G=g)d_1(\rho_{KE}^{\text{fin-ap},g})\nonumber\\
&\le& 2\sqrt2\sqrt{\sum_{g}\Pr(G=g) P^{\rm ph}_A\left(\rho^{\text{pec-vp},g}\right)}.
\label{eq:upper_bound_by_average_BLER}
\end{eqnarray}
This means that, in order to bound the secrecy measure $d_1(\rho_{KEG}^\text{fin-ap})$ of the actual PA scheme $(\Pi^{\rm ap},\rho^\text{ini-ap})$, it suffices to bound the average failure probability $\sum_{g}\Pr(G=g)P^{\rm ph}_A\left(\rho^{\text{pec-vp},g}\right)$ of the phase error correction algorithm $\Pi^{{\rm pec},g}$, embedded inside virtual PA algorithm $\Pi^{\rm vp}$.

\subsection{LHL-like bound derived from a coding theorem of phase error correction}
\label{sec:LHL_as_coding_theorem}
If we recall inequality (\ref{eq:original_leftover_hashing_lemma}), the LHL, and compare it with (\ref{eq:upper_bound_by_average_BLER}), we see that both the exponential min-entropy $2^{\frac12[m-H_{\rm min}(\rho_{AE}^{\rm ini}|E)]}$ and the square root failure probability $2\sqrt2 \sqrt{\sum_{g}\Pr(G=g)P^{\rm ph}_A\left(\rho^{\text{pec-vp},g}\right)}$ of error correction $\Pi^{{\rm pec},g}$ are an upper bound on the same quantity, $d_1(\rho_{KEG}^\text{fin-ap})$.
In addition to that, as we have seen in section \ref{sec:conversion_to_VPA}, these two quantities are related by quantum operations: $\rho^{\text{pec-vp},g}$ is obtained by applying $\Pi^{{\rm pec},g}$ to a purification $\ket{\Psi^\text{ini-ap}}$ of $\rho_{AE}^\text{ini-ap}$; see the description of virtual PA $\Pi^{{\rm vp},g}$ and eq. (\ref{eq:def_rho_pec_g}).
Thus it is natural to suspect that these two quantities may actually be equal, or at least comparable to each other.

Theorem \ref{thm_coding_theorem_vpec} below is our answer to this question.
This theorem essentially says that, if we disregard a constant factor $2\sqrt2$, inequality (\ref{eq:upper_bound_by_average_BLER}) is stronger than (\ref{eq:original_leftover_hashing_lemma}), i.e., the quantity on the right hand side of (\ref{eq:upper_bound_by_average_BLER}) can always be bounded from above by that of (\ref{eq:original_leftover_hashing_lemma}).
Interestingly, this relation can be interpreted as a coding theorem for our phase error correction $\Pi^{{\rm pec},g}$
that bounds the average failure probability of $\Pi^{{\rm pec},g}$ in terms of the conditional min-entropy $H_{\rm min}(\rho_{AE}^\text{ini-ap}|E)$ of the initial sub-normalized state $\rho_{AE}^\text{ini-ap}$ :
\begin{Thm}[Coding theorem of the quantum phase error correction $\Pi^{{\rm pec},g}$]
\label{thm_coding_theorem_vpec}
%Let $\ket{\Psi}_{AEA'}$ be a purification of a sub-normalized cq state $\rho_{AE}$ (as in the virtual scheme above).
%Also let random hash function $G$ be universal$_2$.
Under the setting of virtual PA scheme $(\Pi^{\rm vp},\ket{\Psi^{{\rm ini}\text{-}{\rm ap}}})$, let random hash function $G$ be universal$_2$.
Then with an appropriate choice of the ancilla measurement $M^{g,s}$, the average failure probability of $\Pi^{{\rm pec},g}$
% applied on $\ket{\Psi}$
can be bounded as
\begin{eqnarray}
\sum_{g}\Pr(G=g) P^{\rm ph}_A\left(\rho^{{\rm pec}\text{-}{\rm vp},g}\right)\le 2^{m-H_{\rm min}(\rho_{AE}^{{\rm ini}\text{-}{\rm ap}}|E)}.
\label{eq:inequality_coding_theorem}
\end{eqnarray}
%where $\rho^{{\rm pec},g}=\Pi^{{\rm pec},g}(\ket{\Psi}\bra{\Psi})$, as in eq. (\ref{eq:def_rho_pec_g}).
\end{Thm}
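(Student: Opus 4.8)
The plan is to recast the averaged failure probability of $\Pi^{{\rm pec},g}$ as the error probability of a quantum state-discrimination problem that the ancilla measurement $M^{g,s}$ is solving, and then to bound that error by combining a decoder built from the min-entropy witness with the universal$_2$ property. First I would unwind the definition (\ref{eq:failure_prob_PEC_defined}) of $P^{\rm ph}_A(\rho^{{\rm pec}\text{-}{\rm vp},g})$ for fixed $g$. Writing the syndrome projection as $P_s=\prod_j \tfrac12(\II_A+(-1)^{s_j}\bar X_j)$, using $Z^e\ket{\widetilde 0}=\ket{\widetilde e}$ so that $\bra{\widetilde 0}_AZ^e=\bra{\widetilde e}_A$, and using $\bra{\widetilde e}_AP_s=1[he=s]\,\bra{\widetilde e}_A$, the success probability $\bra{\widetilde 0}_A\rho^{{\rm pec}\text{-}{\rm vp},g}_A\ket{\widetilde 0}_A$ collapses to $\sum_e {\rm Tr}(F^{g,he,e}\,\sigma^{g,he,e}_{A'})$, where $\ket{\phi^e}_{EA'}:=\bra{\widetilde e}_A\ket{\Psi^{{\rm ini}\text{-}{\rm ap}}}$, $\sigma^{g,s,e}_{A'}:={\rm Tr}_E\ket{\phi^e}\bra{\phi^e}$, and $F^{g,s,e}:=E^{g,s,e\dagger}E^{g,s,e}$. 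Thus, for each syndrome $s$, $P^{\rm ph}_A$ is exactly the average error of identifying the phase error $e$ (equivalently, its logical part inside the coset $\set{e:he=s}$, which has $2^m$ elements) from a measurement on ${\cal H}_{A'}$ of the $2^m$ sub-normalized states $\set{\sigma^{g,s,e}_{A'}}$.

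Next I would fix the ``appropriate'' $M^{g,s}$ to be a decoder built from the min-entropy witness rather than a generic square-root measurement: taking $\sigma_E$ to be the normalized state achieving $\rho_{AE}^{{\rm ini}\text{-}{\rm ap}}\le 2^{-H_{\rm min}}\II_A\otimes\sigma_E$, I would transport $\sigma_E$ through the purification $\ket{\Psi^{{\rm ini}\text{-}{\rm ap}}}$ to a reference operator on ${\cal H}_{A'}$ and let $M^{g,s}$ be the associated reference/square-root measurement. The discrimination error $\sum_{e:he=s}{\rm Tr}((\II_{A'}-F^{g,s,e})\sigma^{g,s,e}_{A'})$ is then bounded by the Hayashi--Nagaoka operator inequality, whose virtue here is that the resulting upper bound is bilinear in the \emph{data} states $\sigma^{g,s,e}_{A'}$ (the reference, not the data, carries the square roots). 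This produces cross terms between distinct phase-error hypotheses $e\neq e'$ lying in the same coset, i.e. with $e\oplus e'\in C^g=(\ker g)^\perp$.

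Then I would average over the universal$_2$ family. The key mechanism is that summing the ${\cal H}_{A_2}$ coherences of $\sigma^{g,s,e}_{A'}$ over the $2^m$ phase errors in a coset collapses the phase factors $(-1)^{e\cdot(a\oplus a')}$ via $\sum_{e\in e_0+C^g}(-1)^{e\cdot(a\oplus a')}=(-1)^{e_0\cdot(a\oplus a')}\,2^m\,1[a\oplus a'\in\ker g]$, since $(C^g)^\perp=\ker g$. Averaging the indicator $1[a\oplus a'\in\ker g]$ over $g$ and invoking the universal$_2$ collision bound $\Pr(g(a)=g(a'))\le 2^{-m}$ then suppresses every off-diagonal ($a\neq a'$) contribution. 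What remains is controlled by the diagonal $a=a'$ part, and feeding in $\rho_{AE}^{{\rm ini}\text{-}{\rm ap}}\le 2^{-H_{\rm min}}\II_A\otimes\sigma_E$ converts the residual trace into the single factor $2^{m}\cdot 2^{-H_{\rm min}(\rho_{AE}^{{\rm ini}\text{-}{\rm ap}}|E)}$, which is the claimed right-hand side of (\ref{eq:inequality_coding_theorem}).

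The hard part will be making these three ingredients cooperate at once. The square-root structure of any good decoder is what normally obstructs the exact phase cancellation; choosing $M^{g,s}$ from $\sigma_E$ so that Hayashi--Nagaoka yields a bound \emph{bilinear} in the data is precisely what keeps the coset sum and the universal$_2$ average linear, but one must verify that both pieces of the Hayashi--Nagaoka bound (the ``$\II-S$'' part and the ``$T$'' part) reduce, after the average collapses the coset sum to the $\ker g$ indicator, to the same min-entropy factor. The delicate bookkeeping is to interchange the $g$-average, the coset/phase-error sum, and the partial trace over ${\cal H}_E$ so that the $2^m$ counting of logical phase errors, the $2^{-m}$ universal$_2$ suppression, and the $2^{-H_{\rm min}}$ flatness of $\sigma_E$ combine into $2^{m-H_{\rm min}}$ with no leftover constant.
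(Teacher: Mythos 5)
Your steps 1 and 3 are sound and, in fact, mirror the paper's own mechanics: unwinding (\ref{eq:failure_prob_PEC_defined}) into a coset state-discrimination problem on ${\cal H}_{A'}$ is exactly the right picture, and the coset-sum/phase-cancellation/universal$_2$-collision algebra of your step 3 is essentially the computation in the paper's lemma \ref{lmm:universal_hashing} (a Renner-style collision bound with $1[g(a)=g(a')]\le 1[a=a']+2^{-m}$). The genuine gap is step 2, and it is fatal rather than a matter of bookkeeping. First, the Hayashi--Nagaoka inequality $\II-(S+T)^{-1/2}S(S+T)^{-1/2}\le 2(\II-S)+4T$ carries the constants $2$ and $4$, so even in the most optimistic accounting your route yields a bound of the form $2\alpha+4\beta$, i.e.\ a constant strictly greater than $1$ in front of $2^{m-H_{\rm min}}$; theorem \ref{thm_coding_theorem_vpec} claims constant exactly $1$, and that cleanness is what lets corollary \ref{lmm:LHL_fom_CT} reproduce the LHL with only the $2\sqrt2$ coming from (\ref{eq:d_1_bounded_by_P_ph_pre}). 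Second, and more structurally, your bilinearity requirement contradicts control of the ``$\II-S$'' term. To keep the bound bilinear in the data states (square roots only on the reference), the tests must be linear in the data, $S^{g,s,e}=R^{-1/2}\sigma^{g,s,e}_{A'}R^{-1/2}$. But then the total miss term is $\sum_{s}\sum_{e:he=s}{\rm Tr}\bigl[(\II-S^{g,s,e})\sigma^{g,s,e}_{A'}\bigr]=\sum_{e\in\{0,1\}^n}\bigl({\rm Tr}\,\sigma^{e}_{A'}-{\rm Tr}[R^{-1/2}\sigma^{e}_{A'}R^{-1/2}\sigma^{e}_{A'}]\bigr)$, which is \emph{independent of $g$} (the cosets merely partition the sum over all $e$), so the universal$_2$ average never touches it; and it vanishes only if every $\sigma^e_{A'}$ is ``flat'' relative to $R$, a property that the hypothesis $\rho_{AE}\le 2^{-H_{\rm min}}\II_A\otimes\sigma_E$ does not give you. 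Making the miss term vanish forces projection-valued tests, which destroys the bilinearity your step 3 needs. So the tension you flag as ``the hard part'' cannot be resolved inside the Hayashi--Nagaoka framework.

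The missing idea --- and the paper's actual route --- is to define the ``appropriate'' ancilla measurement via Uhlmann's theorem instead of via a discrimination-type operator inequality. In lemma \ref{lmm:Pph_upperbound}, after a basis change reducing $\bar Z_i,\bar X_j$ to single-qubit Paulis, one takes the unitary $T_{A'}$ on the ancilla that achieves the fidelity between purifications of the actual and ideal final states, dresses it with CNOT gates and phase-basis projections $\ket{\widetilde e}\bra{\widetilde e}_{A_2}$ so that it acquires the syndrome-dependent form $M^{g,s}$, and obtains $P^{\rm ph}_A(\rho^{{\rm pec}\text{-}{\rm vp},g})\le 1-F(\rho^{\rm ideal}_{KE},\rho^{{\rm fin}\text{-}{\rm vp},g}_{KE})^2$ with no constant at all. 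Only then does hashing enter: lemma \ref{lmm:universal_hashing} bounds the $g$-average of $1-F^2$ by $2^{m-H_{\rm min}}$ using $\widetilde{H}^{\downarrow}_{1/2}\ge\widetilde{H}^{\downarrow}_{2}$, the elementary inequality $1-1/x\le x-1$, the collision computation, and $H_{\rm min}=\widetilde{H}^{\uparrow}_{\infty}\le\widetilde{H}^{\downarrow}_{2}$ --- that is, your step 3 carried out on the fidelity rather than on a decoding-error functional. If you want to salvage your plan, replace Hayashi--Nagaoka by this Uhlmann construction; your steps 1 and 3 then survive nearly verbatim, and the constant-free bound follows.
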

We will prove this theorem in appendix \ref{sec:proof_of_theorem}.
To reiterate, the magnitude of the conditional min-entropy $H_{\rm min}(\rho_{AE}^\text{ini-ap}|E)$ does not only guarantee the effectiveness of the leftover hashing lemma, but also that of phase error correction $\Pi^{{\rm pec},g}$, embedded in $\Pi^{\rm vp}$.
It is so effective that if one substitutes inequality (\ref{eq:inequality_coding_theorem}) to (\ref{eq:upper_bound_by_average_BLER}), one can reproduce the LHL for the non-smoothed case.

\begin{Crl}[Quantum LHL-like bound derived from the coding theorem]
\label{lmm:LHL_fom_CT}
Under the setting of virtual PA scheme $(\Pi^{\rm vp},\ket{\Psi^{{\rm ini}\text{-}{\rm ap}}})$, if random hash function $G$ is universal$_2$, we have from (\ref{eq:upper_bound_by_average_BLER}) and (\ref{eq:inequality_coding_theorem}),
\begin{eqnarray}
d_1(\rho_{KEG}^{{\rm fin}\text{-}{\rm ap}})
%&=&\sum_{g}\Pr(G=g)d_1(\rho_{KE}^g)\nonumber\\
%&\le&\sum_{g}\Pr(G=g) 
% 2\sqrt2\sqrt{P^{\rm ph}_A\left(\rho^{{\rm pec},g}_{AEB}\right)}\nonumber\\
&\le&
 2\sqrt2\sqrt{\sum_{g}\Pr(G=g) P^{\rm ph}_A\left(\rho^{{\rm pec}\text{-}{\rm vp},g}\right)}\nonumber\\
&\le& 2\sqrt2\,2^{\frac{1}{2}\left[m-H_{\rm min}(\rho_{AE}^{{\rm ini}\text{-}{\rm ap}}|E)\right]}\nonumber\\
&=& 2^{\frac{1}{2}\left[m-H_{\rm min}(\rho_{AE}^{{\rm ini}\text{-}{\rm ap}}|E)+3\right]}.
\end{eqnarray}
\end{Crl}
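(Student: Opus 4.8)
The plan is to read off the Corollary directly from the two inequalities it names. First I would recall the reduction (\ref{eq:upper_bound_by_average_BLER}) of Section~\ref{sec:reducing_security_to_PEC}, which already bounds the secrecy measure $d_1(\rho_{KEG}^{\text{fin-ap}})$ of the \emph{actual} scheme by $2\sqrt2$ times the square root of the averaged failure probability $\sum_g\Pr(G=g)P^{\rm ph}_A(\rho^{\text{pec-vp},g})$; this bound already holds for the actual scheme because its derivation invoked virtuality (Lemma~\ref{lmm:virtuality_virtual_PA}) through (\ref{eq:d_1_bounded_by_P_ph}). Then I would substitute the coding-theorem estimate (\ref{eq:inequality_coding_theorem}) of Theorem~\ref{thm_coding_theorem_vpec}, which replaces that averaged failure probability by $2^{m-H_{\rm min}(\rho_{AE}^{\text{ini-ap}}|E)}$. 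No separate Jensen step is needed here, since the concavity of the square root was already absorbed into (\ref{eq:upper_bound_by_average_BLER}).

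What remains is pure arithmetic: taking the square root of (\ref{eq:inequality_coding_theorem}) produces $2^{\frac12[m-H_{\rm min}(\rho_{AE}^{\text{ini-ap}}|E)]}$, and writing the prefactor as $2\sqrt2=2^{3/2}$ lets me fold it into the exponent to obtain $2^{\frac12[m-H_{\rm min}(\rho_{AE}^{\text{ini-ap}}|E)+3]}$, the claimed right-hand side. Thus the Corollary is immediate once Theorem~\ref{thm_coding_theorem_vpec} is granted, and all of the genuine difficulty lives in that theorem, whose proof the excerpt defers to the appendix.

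For Theorem~\ref{thm_coding_theorem_vpec} itself, I would start from $P^{\rm ph}_A(\rho^{\text{pec-vp},g})=1-\bra{\widetilde0}_A\rho^{\text{pec-vp},g}_A\ket{\widetilde0}_A$ and expand $\rho^{\text{pec-vp},g}$ through the three stages of $\Pi^{\text{pec},g}$: the syndrome measurement by $\bar X_j=X^{h_j}$, the ancilla measurement $M^{g,s}$, and the phase flip $Z^e$. Since $Z^e\ket{\widetilde x}=\ket{\widetilde{x+e}}$ translates the phase label and the syndrome $hx$ pins down its coset modulo $C^g=\ker h$, the syndrome separates cosets while the purifying ancilla $A'$ is what must separate phase errors \emph{within} a coset. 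The natural choice of $M^{g,s}$ is therefore a maximum-likelihood (pretty-good-measurement) decoder of the phase error $e$ from $A'$, conditioned on the observed $s$; with this choice the success probability becomes an overlap I can write explicitly in terms of the purification $\ket{\Psi^{\text{ini-ap}}}$.

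The hard part will be to show that, after averaging over the universal$_2$ family, this optimized failure probability collapses exactly to $2^{m-H_{\rm min}(\rho_{AE}^{\text{ini-ap}}|E)}$. I expect to use the operator inequality defining the conditional min-entropy, $2^{-H_{\rm min}}\,\II_A\otimes\sigma_E\ge\rho_{AE}^{\text{ini-ap}}$ for some normalized $\sigma_E$, to replace Alice's error statistics by the flat operator $\II_A\otimes\sigma_E$, and then to invoke the universal$_2$ collision bound $\Pr(G(a)=G(a'))\le2^{-m}$ to suppress the cross terms between distinct phase errors that survive the average---precisely the role the universal$_2$ property plays in Renner's $2$-norm estimate behind Lemma~\ref{lmm:leftover_hashing_lemma}. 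The delicate bookkeeping will be to see the factor $2^m$ emerge from summing over the $2^m$ logical phase translations and to check that it combines with the $2^{-m}$ collision probability and the $2^{-H_{\rm min}}$ weight to yield the stated bound; equivalently, this is a duality statement converting the phase-guessing power of the ancilla $A'$ into the bit-basis min-entropy of $a$ given Eve.
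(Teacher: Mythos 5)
Your proof of the corollary is correct and coincides with the paper's own: one simply substitutes the coding-theorem bound (\ref{eq:inequality_coding_theorem}) into (\ref{eq:upper_bound_by_average_BLER}) and folds $2\sqrt2=2^{3/2}$ into the exponent, the Jensen step being already contained in (\ref{eq:upper_bound_by_average_BLER}) as you note. Your closing sketch of how you would prove Theorem~\ref{thm_coding_theorem_vpec} itself lies outside the statement under review and does not affect this verdict (the paper likewise takes that theorem as given here, proving it in the appendix via Uhlmann's theorem and R\'enyi-entropy inequalities rather than the pretty-good-measurement route you outline).
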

This bound is slightly looser than (\ref{eq:original_leftover_hashing_lemma}), by the factor of $2\sqrt2$, but we consider this difference as inessential because in practice it can easily be compensated for by shortening length $m$ of hash value $k$ by three bits.

The discussion above can be considered as an alternative proof of the LHL, obtained by utilizing the PEC-based approach; as PA scheme $\Pi^{\rm ap}$ can be considered as a special case of QKD where Alice alone generates a final key in the absence of Bob.

\subsection{Generalizations}
We present two types of generalization of the above results.

\subsubsection{Case of the smooth conditional min-entropy}
\label{eq:case_of_smooth_min_entropy}
Corollary \ref{lmm:LHL_fom_CT} can be generalized for the smooth conditional min-entropy.

\begin{Crl}[Smoothed quantum LHL-like bound derived from the coding theorem]
\label{crl:smooth_LHL_from_PEC}
Under the setting of virtual PA scheme $(\Pi^{\rm vp},\ket{\Psi^{{\rm ini}\text{-}{\rm ap}}})$, if random hash function $G$ is universal$_2$, we have
\begin{eqnarray}
d_1(\rho_{KEG}^{{\rm fin}\text{-}{\rm ap}})
%\sum_{g}\Pr(G=g)d_1(\rho_{KE}^g)
%&\le& 2\varepsilon+2\sqrt2\,2^{\frac{1}{2}\left(m-H_{\rm min}^\varepsilon(\rho_{AE}|E)\right)}\nonumber\\
&\le& 2\varepsilon+ 2^{\frac{1}{2}\left[m-H_{\rm min}^\varepsilon(\rho_{AE}^{{\rm ini}\text{-}{\rm ap}}|E)+3\right]}.
\end{eqnarray}
\end{Crl}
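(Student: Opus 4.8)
The plan is to reduce the smoothed statement to the already-established non-smoothed Corollary \ref{lmm:LHL_fom_CT} by running the virtual PA scheme on the optimal smoothing state rather than on $\rho_{AE}^{\text{ini-ap}}$ itself. By the definition (\ref{eq:smooth_min_entropy_defined}) of the smooth conditional min-entropy, there exists a subnormalized state $\bar\rho_{AE}$ with $\bar\rho_{AE}\approx_\varepsilon\rho_{AE}^{\text{ini-ap}}$ (purified distance at most $\varepsilon$) and $H_{\min}(\bar\rho_{AE}|E)=H_{\min}^\varepsilon(\rho_{AE}^{\text{ini-ap}}|E)$. First I would arrange that $\bar\rho_{AE}$ may be taken to be a cq state: replacing it by its pinching in the $Z$-basis on $\mathcal{H}_A$ is a channel on $A$, which cannot decrease $H_{\min}(\cdot|E)$ and cannot increase the purified distance to the already $Z$-diagonal state $\rho_{AE}^{\text{ini-ap}}$, so both defining properties survive. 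Since Theorem \ref{thm_coding_theorem_vpec} and hence Corollary \ref{lmm:LHL_fom_CT} were derived for an arbitrary subnormalized initial cq state (normalization was never used), I would apply Corollary \ref{lmm:LHL_fom_CT} to the virtual PA scheme whose initial state is a purification of $\bar\rho_{AE}$, obtaining $d_1(\bar\rho_{KEG}^{\text{fin}})\le 2^{\frac12[m-H_{\min}^\varepsilon(\rho_{AE}^{\text{ini-ap}}|E)+3]}$, where $\bar\rho_{KEG}^{\text{fin}}$ is the final state produced from $\bar\rho_{AE}$.

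The remaining task is to transfer this bound from $\bar\rho$ back to the actual state. The key structural fact is that both the hashing channel $\Pi^{\text{ap}}$ and the idealizing map $\rho_{AE}\mapsto\rho_{KEG}^{\text{ideal}}$ (trace out $A$, tensor the uniform state $2^{-m}\II_K$, and attach the classical average over $G$) are trace-non-increasing completely positive maps, so they contract both the $L_1$ distance and the purified distance. Inserting $\bar\rho_{KEG}^{\text{fin}}$ and its ideal $\bar\rho_{KEG}^{\text{ideal}}$ by the triangle inequality gives
\begin{equation}
d_1(\rho_{KEG}^{\text{fin-ap}})\le \left\|\rho_{KEG}^{\text{fin-ap}}-\bar\rho_{KEG}^{\text{fin}}\right\|_1+d_1(\bar\rho_{KEG}^{\text{fin}})+\left\|\bar\rho_{KEG}^{\text{ideal}}-\rho_{KEG}^{\text{ideal}}\right\|_1 .
\end{equation}
The middle term is controlled by the first step. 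For the two outer terms I would use monotonicity under the two channels above to pull the distances back to the initial states, together with the standard inequality $\tfrac12\|\mu-\nu\|_1\le P(\mu,\nu)$ relating the $L_1$ and purified distances, so that each is governed by $P(\rho_{AE}^{\text{ini-ap}},\bar\rho_{AE})\le\varepsilon$. I would also exploit the identity $\|\bar\rho_{KEG}^{\text{ideal}}-\rho_{KEG}^{\text{ideal}}\|_1=\|\bar\rho_E-\rho_E\|_1$, which shows the ideal term depends only on the $E$-marginals and can be folded into the hashed-state term through data processing rather than counted separately. Substituting the first-step bound then yields the claimed inequality, with the exponential term inherited verbatim from Corollary \ref{lmm:LHL_fom_CT} (so the extra three bits in the exponent and the inessential factor $2\sqrt2$ carry over with no further work).

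The hard part will be the sharp accounting of the additive smoothing penalty. A crude use of the triangle inequality charges each outer term a full $2\varepsilon$ (since $P\le\varepsilon$ only gives $\|\cdot\|_1\le2\varepsilon$), which would produce $4\varepsilon$ rather than the $2\varepsilon$ stated here; recovering the stated constant requires the same tightened bookkeeping as in Renner's passage from (\ref{eq:original_leftover_hashing_lemma}) to the smoothed (\ref{eq:original_smoothed_leftover_hashing_lemma}) in Lemma \ref{lmm:leftover_hashing_lemma}. Concretely, I would either work with the purified distance on the final states and convert to the $L_1$ distance only once, or use the marginal identity above to avoid double-counting the hashed-state and ideal-state contributions. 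Everything else is routine, and the structure of the argument exactly mirrors the non-smoothed case, differing only in the replacement of $H_{\min}$ by $H_{\min}^\varepsilon$ and the appearance of the additive $2\varepsilon$.
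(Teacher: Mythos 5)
Your proposal takes essentially the same route as the paper's proof: pick the optimal smoothing state $\bar\rho_{AE}$ realizing $H_{\min}^{\varepsilon}$, apply the non-smoothed Corollary~\ref{lmm:LHL_fom_CT} to (a purification of) $\bar\rho_{AE}$ with an appropriately chosen ancilla measurement, and control the two outer terms of the triangle inequality by monotonicity of the $L_1$-distance under $\Pi^{{\rm ap},g}$, together with the observation that the ideal-state term is dominated by the hashed-state term. Your $Z$-pinching step is a welcome addition rather than a deviation: the paper feeds $\bar\rho_{AE}$ directly into $\Pi^{{\rm ap},g}$, Lemma~\ref{lmm:virtuality_virtual_PA} and Theorem~\ref{thm_coding_theorem_vpec}, all of which are formulated for cq initial states, and it never checks that the smoothing state can be taken cq; your argument (positivity plus unitality of the pinching on ${\cal H}_A$ give $H_{\min}$-monotonicity, CPTP-monotonicity of $P$ plus invariance of the already $Z$-diagonal $\rho_{AE}^{{\rm ini}\text{-}{\rm ap}}$ give closeness) closes that gap correctly.

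Where you and the paper part ways is precisely the step you call the hard part, and the paper's resolution is more prosaic than you anticipate. The paper charges each outer term only $\varepsilon$, not $2\varepsilon$, by invoking the relation stated in its appendix~\ref{sec:notation}, $\|\rho-\tau\|_1\le P(\rho,\tau)$, at the level of the initial states: $\|\bar\rho_{AE}^{{\rm ini}\text{-}{\rm ap}}-\rho_{AE}^{{\rm ini}\text{-}{\rm ap}}\|_1\le P(\bar\rho_{AE}^{{\rm ini}\text{-}{\rm ap}},\rho_{AE}^{{\rm ini}\text{-}{\rm ap}})\le\varepsilon$, whence the total penalty is $\varepsilon+\varepsilon=2\varepsilon$ with no further bookkeeping. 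Neither of your proposed repairs would do better: repair (b), bounding the ideal term by the hashed-state term via data processing, is exactly what the paper (and you) already do --- it shows the third term is no larger than the first, but it does not remove the double count; repair (a), running the triangle inequality in purified distance on the final states, fails because the middle term is only available as an $L_1$ bound (Corollary~\ref{lmm:LHL_fom_CT}), and converting it into a purified-distance bound costs a square root of the exponentially small term. Your instinct that something is off by a factor of two is, however, sound: under the standard Fuchs--van de Graaf-type relation one only has $\tfrac12\|\rho-\tau\|_1\le P(\rho,\tau)$ (orthogonal normalized pure states give $\|\rho-\tau\|_1=2$ while $P=1$), so the appendix relation the paper relies on is too strong as stated, and with the standard relation both your argument and the paper's yield $4\varepsilon$. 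The discrepancy is as inessential as the factor $2\sqrt2$, but to literally obtain the stated $2\varepsilon$ you must either invoke the paper's appendix inequality as it stands or define the smoothing ball with respect to the $L_1$-distance rather than the purified distance.
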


The proof is essentially the same as in corollary 5.6.1. of ref. \cite{RennerPhD}, but we write it out below to demonstrate that it can be proved entirely within our argument using phase error correction.
Note that in this proof the appropriate strategy $M^{g,s}$ for an ancilla measurement is described explicitly.
\begin{proof}
%In this proof, we abbreviate $\rho^\text{ini-ap}$ and $\ket{\Psi^\text{ini-ap}}$ as $\rho^{\rm ini}$ and $\ket{\Psi^{\rm ini}}$.
By definition of the smooth conditional min-entropy in eq. (\ref{eq:smooth_min_entropy_defined}), there exists a sub-normalized state $\bar{\rho}_{AE}^\text{ini-ap}$ satisfying $\|\bar{\rho}_{AE}^\text{ini-ap}-\rho_{AE}^\text{ini-ap}\|_1\le P(\bar{\rho}_{AE}^\text{ini-ap},\rho_{AE}^{\rm ini})\le\varepsilon$ and $H_{\rm min}^\varepsilon(\rho_{AE}^\text{ini-ap}|E)=H_{\rm min}(\bar{\rho}_{AE}^\text{ini-ap}|E)$.
%It is straightforward to prove that there also exists a purification $\ket{\bar{\Psi}^{\rm ini}}_{AEB}$ of $\bar{\rho}_{AE}^{\rm ini}$, satisfying $P(\ket{\bar{\Psi}^{\rm ini}}, \ket{\Psi^{\rm ini}})=P(\bar{\rho}_{AE}^{\rm ini},\rho_{AE}^{\rm ini})\le \varepsilon$ (cf. Theorem 3.5, \cite{TomamichelPhD}).
We denote the results of applying the actual PA algorithm $\Pi^{{\rm ap},g}$ to $\rho^\text{ini-ap}$ and $\bar{\rho}^\text{ini-ap}$ by $\rho^{\text{fin-ap},g}$ and $\bar{\rho}^{\text{fin-ap},g}$ respectively; i.e., $\rho^{\text{fin-ap},g}=\Pi^{{\rm ap},g}(\rho^\text{ini-ap})$, $\bar{\rho}^{\text{fin-ap},g}=\Pi^{{\rm ap},g}(\bar{\rho}^\text{ini-ap})$.
Also, we denote their ideal sub-normalized states by $\rho_{KE}^{\rm ideal}$ and $\bar{\rho}_{KE}^{\rm ideal}$, respectively (cf. section \ref{sec:security_criterion_PA}).

In this setting, by using the triangle inequality of the $L_1$-distance, $d_1(\rho_{KE}^{\text{fin-ap},g})$ can be bounded as
\begin{eqnarray}
d_1(\rho_{KE}^{\text{fin-ap},g})&=&\left\|\rho^{\text{fin-ap},g}_{KE}-\rho_{KE}^{\rm ideal}\right\|_1
\nonumber\\
&\le&
\left\|\rho^{\text{fin-ap},g}_{KE}-\bar{\rho}^{\text{fin-ap},g}_{KE}\right\|_1\nonumber\\
&&+\left\|\bar{\rho}^{\text{fin-ap},g}_{KE}-\bar{\rho}_{KE}^\text{ideal}\right\|_1+\left\|\bar{\rho}_{KE}^{\rm ideal}-\rho_{KE}^{\rm ideal}\right\|_1\nonumber\\
&=&
\left\|\rho^{\text{fin-ap},g}_{KE}-\bar{\rho}^{\text{fin-ap},g}_{KE}\right\|_1\nonumber\\
&&+d_1\left(\bar{\rho}^{\text{fin-ap},g}_{KE}\right)+\left\|\bar{\rho}_{KE}^{\rm ideal}-\rho_{KE}^{\rm ideal}\right\|_1.
\label{eq:bound_d_1_smooth}
\end{eqnarray}

The first and the third terms on the right hand side of (\ref{eq:bound_d_1_smooth}) can each be bounded by $\varepsilon$:
By definition of the ideal sub-normalized states, we have $\left\|\bar{\rho}_{KE}^{\rm ideal}-\rho_{KE}^{\rm ideal}\right\|_1\le \left\|\rho^{\text{fin-ap},g}_{KE}-\bar{\rho}^{\text{fin-ap},g}_{KE}\right\|_1$.
Since $\Pi^{{\rm ap},g}$ is a CPTP map, the quantities on the right hand side can be further bounded as $\left\|\rho^{\text{fin-ap},g}_{KE}-\bar{\rho}^{\text{fin-ap},g}_{KE}\right\|_1\le \left\|\Pi^{{\rm ap},g}(\rho^\text{ini-ap})-\Pi^{{\rm ap},g}(\bar{\rho}^\text{ini-ap})\right\|_1\le \left\|\rho^\text{ini-ap}-\bar{\rho}^\text{ini-ap}\right\|_1\le\varepsilon$.

Next we bound $d_1\left(\bar{\rho}^{\text{fin-ap},g}_{KE}\right)$ on the right hand side of (\ref{eq:bound_d_1_smooth}).
Choose an arbitrary purification of $\bar{\rho}_{AE}^\text{ini-ap}$ and denote it by $\ket{\bar{\Psi}^\text{ini-ap}}$.
Then from lemma \ref{lmm:virtuality_virtual_PA}, it follows that $d_1\left(\bar{\rho}^{\text{fin-ap},g}_{KE}\right)=d_1\left(\bar{\rho}^{\text{fin-vp},g}_{KE}\right)$ with $\bar{\rho}^{\text{fin-vp},g}=\Pi^{{\rm vp},g}(\ket{\bar{\Psi}^\text{ini-ap}}\bra{\bar{\Psi}^\text{ini-ap}})$.
Also choose the ancilla measurement $M^{g,s}$ in the virtual PA algorithm $\Pi^{{\rm vp},g}$, such that inequality (\ref{eq:inequality_coding_theorem}) of theorem \ref{thm_coding_theorem_vpec} holds for $\ket{\bar{\Psi}^\text{ini-ap}}$ and $\bar{\rho}^\text{ini-ap}$.
Then corollary \ref{lmm:LHL_fom_CT} can be applied, and the average of $d_1\left(\bar{\rho}^{\text{fin-ap},g}_{KE}\right)=d_1\left(\bar{\rho}^{\text{fin-vp},g}_{KE}\right)$ with respect to the random choice of $g$ can be bounded as
\begin{eqnarray}
\lefteqn{\sum_{g}\Pr(G=g)d_1\left(\rho_{KE}^{\text{fin-ap},g}\right)}\\
&\le&2^{\frac{1}{2}\left(m+3-H_{\rm min}(\bar{\rho}_{AE}^\text{ini-ap}|E)\right)}
=2^{\frac{1}{2}\left(m+3-H_{\rm min}^\varepsilon(\rho_{AE}^\text{ini-ap}|E)\right)}.\nonumber
\end{eqnarray}
Then by using eq. (\ref{eq:average_trace_dist_rewritten}), we obtain the lemma.
\end{proof}

\subsubsection{Semi-purification and twirling}
\label{sec:semi_purification_and_twirling}
Thus far we always assumed that the initial sub-normalized state of the virtual PA is a purification $\ket{\Psi^\text{ini-ap}}$ of the initial  sub-normalized cq state $\rho^\text{ini-ap}_{AE}$.
However, as we will see in our discussion on QKD in the next section, it is useful to consider a larger class of inputs that we call {\it semi}-purifications:
\begin{Dfn}[Semi-purification]
\label{dfn:associated_state}
We say that a pure sub-normalized state $\ket{\Phi}_{AEA'}$ is a semi-purification of a cq sub-normalized state $\rho_{AE}$ with respect to ${\cal H}_A$,
if its reduced sub-normalized state for ${\cal H}_{AE}$ (i.e., ${\rm Tr}_{A'}\left(\ket{\Phi}\bra{\Phi}\right)$) equals $\rho_{AE}$ after being measured in the $Z$ basis of ${\cal H}_A$.
That is, $\ket{\Phi}_{AEA'}$ is a semi-purification, if $\rho_{AE}=\Pi^\text{$Z$-mea}_{A}\left({\rm Tr}_{A'}\left\{\ket{\Phi}\bra{\Phi}\right\}\right)=\sum_z\ket{z}\bra{z}_A{\rm Tr}_{A'}\left\{\ket{\Phi}\bra{\Phi}\right\}\ket{z}\bra{z}_A$ (see appendix \ref{sec:notation} for details of the definition of $\Pi^\text{$Z$-mea}$).
\end{Dfn}
It is obvious from this definition that purifications are a special case of semi-purifications.
We note that this larger class of sub-normalized states can always be converted to a purification by the following simple operation, which we call {\it twirling}.
\begin{Dfn}[Twirling]
\label{dfn:twirling}
Twirling operation $\Pi^{\rm tw}_{A|A'}$ consists of two steps: i) add an ancilla $\ket{0}\bra{0}_{A'}$, and then ii) apply CNOT gate $\Pi^{\rm cnot}_{A|A'}=\sum_{z}\ket{z}\bra{z}_{A}\otimes X^{z}_{A'}$ (again, see appendix \ref{sec:notation} for details of the definition of $\Pi^{\rm cnot}_{A|A'}$).
\end{Dfn}
\begin{Lmm}
\label{lmm:twirling}
If $\ket{\Phi}$ is a semi-purification of a sub-normalized cq state $\rho_{AE}$, then $\ket{\Psi}=\Pi^{\rm tw}_{A|A''}(\ket{\Phi})=\Pi^{\rm cnot}_{A|A''}(\ket{\Phi}\otimes\ket{0}_{A''})$ is a purification.
\end{Lmm}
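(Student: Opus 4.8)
The plan is to verify directly that the pure state $\ket{\Psi}$ has $\rho_{AE}$ as its reduced sub-normalized state on ${\cal H}_{AE}$; since $\ket{\Psi}$ is pure by construction, this is precisely the statement that $\ket{\Psi}$ is a purification of $\rho_{AE}$. The key fact to exploit is that, after the CNOT, the ancilla $A''$ holds a copy of the $Z$-basis value of ${\cal H}_A$, so that tracing it out reproduces exactly the dephasing map $\Pi^\text{$Z$-mea}_A$ appearing in Definition \ref{dfn:associated_state}.

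First I would expand the semi-purification in the $Z$-basis of ${\cal H}_A$, writing $\ket{\Phi}_{AEA'}=\sum_z\ket{z}_A\otimes\ket{\phi_z}_{EA'}$ for some (generally unnormalized and non-orthogonal) vectors $\ket{\phi_z}_{EA'}$. By the definition of a semi-purification, the $Z$-basis measurement removes the off-diagonal blocks of ${\rm Tr}_{A'}\{\ket{\Phi}\bra{\Phi}\}$, so that
\[
\rho_{AE}=\sum_z\ket{z}\bra{z}_A\otimes\rho_E^z,\qquad \rho_E^z:={\rm Tr}_{A'}\{\ket{\phi_z}\bra{\phi_z}\}.
\]
This expresses the target marginal in terms of the components $\ket{\phi_z}$.

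Next I would compute $\ket{\Psi}$ explicitly. Appending $\ket{0}_{A''}$ and acting with $\Pi^{\rm cnot}_{A|A''}=\sum_{z'}\ket{z'}\bra{z'}_A\otimes X^{z'}_{A''}$, and using $X^z\ket{0}_{A''}=\ket{z}_{A''}$, gives
\[
\ket{\Psi}=\sum_z\ket{z}_A\otimes\ket{\phi_z}_{EA'}\otimes\ket{z}_{A''}.
\]
Forming $\ket{\Psi}\bra{\Psi}$ and tracing out $A'$ and $A''$, the overlaps $\langle z'|z\rangle_{A''}=\delta_{z,z'}$ annihilate every cross term, leaving
\[
{\rm Tr}_{A'A''}\{\ket{\Psi}\bra{\Psi}\}=\sum_z\ket{z}\bra{z}_A\otimes{\rm Tr}_{A'}\{\ket{\phi_z}\bra{\phi_z}\}=\sum_z\ket{z}\bra{z}_A\otimes\rho_E^z=\rho_{AE},
\]
which is the desired marginal; norm preservation by the unitary CNOT keeps the sub-normalization intact.

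I expect no serious obstacle, since the argument reduces to this short direct computation. The one conceptual point deserving care --- and the reason the lemma holds --- is that the orthogonality of the flag states $\ket{z}_{A''}$ created by the CNOT is exactly what enforces the $Z$-basis dephasing on ${\cal H}_A$. Without this register, $\ket{\Phi}$ alone would only purify the undephased state ${\rm Tr}_{A'}\{\ket{\Phi}\bra{\Phi}\}$, which in general differs from $\rho_{AE}$ precisely in its off-diagonal (in the $Z$-basis) blocks.
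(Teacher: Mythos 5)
Your proof is correct: the expansion $\ket{\Phi}=\sum_z\ket{z}_A\otimes\ket{\phi_z}_{EA'}$, the identity $X^z\ket{0}_{A''}=\ket{z}_{A''}$, and the observation that tracing out the orthogonal flag states $\ket{z}_{A''}$ implements the dephasing $\Pi^\text{$Z$-mea}_A$ give exactly the reduced state $\rho_{AE}$, so the pure state $\ket{\Psi}$ is a purification. The paper omits the proof as ``quite straightforward,'' and your direct computation is precisely that intended argument.
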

We omit the proof since it is quite straightforward.
By inserting this operation at the beginning of virtual PA $\Pi^{\rm vp}$, we can construct the following scheme.
\begin{oframed}
\noindent\underline{{\bf Virtual PA scheme with twirling} $(\Pi^{\rm vpt},\ket{\Phi^\text{ini-ap}})$}

\medskip
\noindent{\bf Initial state:}

A semi-purification $\ket{\Phi^\text{ini-ap}}_{AA'E}$ of a  sub-normalized cq state $\rho_{AE}^\text{ini-ap}$.

\medskip
\noindent{\bf Algorithm} $\Pi^{\rm vpt}_{A|G|A'|A''}$:
\begin{enumerate}
\item {\bf Twirling}:

Apply $\Pi^\text{\rm tw}_{A|A''}$, described in definition \ref{dfn:twirling}.
\begin{itemize}
\item[---] We denote the result by $\ket{\Psi^\text{tw-vpt}}_{AA'A''E}$.
\end{itemize}
\item {\bf Virtual PA}: Apply $\Pi^{\rm vp}_{A|G|A'A''}$, described in section \ref{sec:obtained_virtual_PA}.
\begin{itemize}
\item[---] We denote the result by $\rho^\text{fin-vpt}$

($=\rho^\text{fin-vpt}_{KAA'A''EG}$).
\end{itemize}
\end{enumerate}
\end{oframed}

The following properties are obvious from the construction:
\begin{Lmm}
\label{lmm:virtuality_virtual_PA_twirling}
$\ket{\Psi^{{\rm tw}\text{-}{\rm vpt}}}_{AA'A''E}$ is a purification of $\rho_{AE}^{{\rm ini}\text{-}{\rm ap}}$. Also,
\begin{equation}
\rho^{{\rm fin}\text{-}{\rm vpt}}_{KEG}=\rho^{{\rm fin}\text{-}{\rm ap}}_{KEG},
\end{equation}
and thus the scheme above, $(\Pi^{\rm vpt},\ket{\Phi^{{\rm ini}\text{-}{\rm ap}}})$, is indeed virtual.
\end{Lmm}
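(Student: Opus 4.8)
The plan is to prove the two assertions by composing results already established, since neither requires a fresh computation. The lemma makes two claims: first, that $\ket{\Psi^\text{tw-vpt}}_{AA'A''E}$ is a purification of $\rho_{AE}^\text{ini-ap}$, and second, that the reduced final state satisfies $\rho^\text{fin-vpt}_{KEG}=\rho^\text{fin-ap}_{KEG}$. The first claim is immediate: since $\ket{\Phi^\text{ini-ap}}_{AA'E}$ is by hypothesis a semi-purification of $\rho_{AE}^\text{ini-ap}$, and $\ket{\Psi^\text{tw-vpt}}=\Pi^{\rm tw}_{A|A''}(\ket{\Phi^\text{ini-ap}})$ is exactly the twirl of that semi-purification, Lemma \ref{lmm:twirling} gives at once that $\ket{\Psi^\text{tw-vpt}}$ is a purification of $\rho_{AE}^\text{ini-ap}$.

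For the second claim, I would use the decomposition $\Pi^{\rm vpt}=\Pi^{\rm vp}_{A|G|A'A''}\circ\Pi^{\rm tw}_{A|A''}$ built into the scheme. By the first claim, the state fed into $\Pi^{\rm vp}$ inside $\Pi^{\rm vpt}$ is a purification of $\rho_{AE}^\text{ini-ap}$ (now with purifying ancilla $A'A''$ in place of $A'$). Hence $\rho^\text{fin-vpt}=\Pi^{\rm vp}(\ket{\Psi^\text{tw-vpt}}\bra{\Psi^\text{tw-vpt}})$ is precisely the output of the virtual PA scheme applied to a purification of $\rho_{AE}^\text{ini-ap}$, so Lemma \ref{lmm:virtuality_virtual_PA} yields $\rho^\text{fin-vpt}_{KEG}=\rho^\text{fin-ap}_{KEG}$, which is the stated virtuality.

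The one point that needs care—and which I regard as the main obstacle—is that Lemma \ref{lmm:virtuality_virtual_PA} is phrased for the specific purification $\ket{\Psi^\text{ini-ap}}$ named in the scheme, whereas here $\Pi^{\rm vp}$ is applied to the different purification $\ket{\Psi^\text{tw-vpt}}$. I would close this gap by verifying that the $KEG$ marginal produced by $\Pi^{\rm vp}$ depends on its input only through $\rho_{AE}^\text{ini-ap}$. The key is the commutation $[\bar X_j,\bar Z_i]=0$ (from $hg^T=0$) noted in section \ref{sec:Code_PCg_defined}, together with $[Z^e,\bar Z_i]=0$ (both being $Z$-type): these let me move the bit-basis measurement of $\Pi^{\rm vp}$ to the very front, before the phase-error-correction step $\Pi^\text{pec}$. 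Performed first, the measurement of $\bar Z_i=Z^{g_i}$ records exactly $k=g(a)$ and produces precisely the $KEG$ state of the actual scheme, while the remaining PEC operations act only on ${\cal H}_{AA'}$ and leave $\rho_{KEG}$ untouched. Since this front-loaded measurement reads out only the bit-basis content of ${\cal H}_A$ and its correlation with ${\cal H}_E$, its outcome statistics depend solely on $\rho_{AE}^\text{ini-ap}$ and are therefore identical for every purification. This establishes the required purification-independence, after which the second claim follows directly from Lemma \ref{lmm:virtuality_virtual_PA} and the proof is complete.
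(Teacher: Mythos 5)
Your proof is correct and takes essentially the same route the paper intends: Lemma~\ref{lmm:twirling} for the purification claim, then reduction to Lemma~\ref{lmm:virtuality_virtual_PA}, whose applicability is secured by exactly the commutation relations ($[\bar{X}_j,\bar{Z}_i]=0$ from $hg^T=0$, and $[Z^e,\bar{Z}_i]=0$) that the paper itself invokes in section~\ref{sec:Code_PCg_defined} to justify inserting the PEC before the bit-basis measurement. The ``gap'' you flag is not actually present in the paper, since Lemma~\ref{lmm:virtuality_virtual_PA} is stated for an arbitrary purification $\ket{\Psi^{\text{ini-ap}}}$ of $\rho_{AE}^{\text{ini-ap}}$; your front-loading argument simply spells out why that lemma is purification-independent, a fact the paper declares evident from the construction.
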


%That is, step 1 of $\Pi^{\rm vpt}$ transforms a semi-purification into a purification.
Hence the situation in step 2 of $(\Pi^{\rm vpt},\ket{\Phi^{{\rm ini}\text{-}{\rm ap}}})$ is equivalent to that of the virtual scheme $(\Pi^{\rm vp}, \ket{\Psi^\text{ini-ap}})$ of section \ref{sec:conversion_to_VPA}.
This means that all the results obtained in this section hold also for $(\Pi^{\rm vpt},\ket{\Phi^{{\rm ini}\text{-}{\rm ap}}})$, including theorem \ref{lmm:LHL_fom_CT} and corollaries \ref{lmm:LHL_fom_CT} and \ref{crl:smooth_LHL_from_PEC}.

\section{Application to the security proof of QKD}
Next we apply our results on privacy amplification (PA), obtained in the previous section, to quantum key distribution (QKD).

As mentioned in section \ref{sec:introduction}, there are two major approaches to the security proof of QKD, which are generally considered independent of each other:
(i) The PEC-based approach, or the Mayers-Shor-Preskill approach \cite{Mayers98,SP00}:
This is essentially the same as our argument using the virtual PA scheme, given in the previous section.
One constructs a virtual QKD protocol which has a phase error correction (PEC) algorithm embedded inside.
The security analysis of the final key is then reduced to bounding the failure probability of the PEC.
(ii) The LHL-based approach, or Renner's approach \cite{RennerPhD}, which uses the leftover hashing lemma (LHL), lemma \ref{lmm:leftover_hashing_lemma} of this paper.

In the previous section, we considered PA scheme $\Pi^{\rm ap}$, which can be thought of as a special case of QKD where Alice alone generates a final key in the absence of Bob.
For this special case, we have shown that there is a direct connection between the two approaches of the security proof;
i.e., we proved essentially the same results as the LHL only by using the PEC-based approach.
Below we show that such connection also exists for QKD.
That is, for QKD protocols in general, we present an explicit procedure for converting a security proof of one approach to that of another approach.
We also show that the obtained security bounds are the same, except for the presence of an inessential constant factor.
This result shows that these two approaches are essentially the same.

The idea is to simply apply our result of the previous section on PA to QKD.
We replace an actual PA algorithm embedded inside a QKD protocol with the virtual PA defined in the previous section, and as a result, obtain a virtual QKD protocol with a virtual PEC embedded inside.
%(This virtual QKD protocol can indeed be considered as a slightly generalized form of those due to Koashi \cite{Koashi}).
%The rest of the argument parallels that of the previous section.
%The trace distance can be bounded by the failure probability $P^{\rm ph}$ of the PEC by using inequality (\ref{eq:upper_bound_by_average_BLER}), and $P^{\rm ph}$ can be again bounded by the minimum entropy $H_{\rm min}(\rho_{AE}|E)$ by using Theorem \ref{thm_coding_theorem_vpec}, the coding theorem of the virtual PEC.

However, there are apparent obstacles in carrying out this idea:
(a) The security criteria of PA and QKD are different; i.e., in addition to the secrecy required by PA, QKD also requires the correctness (see section \ref{sec:standard_security_criteria_QKD}).
(b) In our argument of the virtual PA scheme, we always assumed that the initial sub-normalized state is a purification or a semi-purification of sub-normalized cq state $\rho_{AE}$,   with $H_{\rm min}^\varepsilon(\rho_{AE}|E)$ being bounded from below, but this assumption of semi-purity may not be true for QKD.
For example, each player may generate mixed at any stage of the protocol.

Below we demonstrate that we can indeed overcome these obstacles and complete the conversions between the two approaches, by writing down every step of the conversion explicitly.

\subsection{General QKD scheme}

\subsubsection{Definition}
\label{sec:description_QKD_protocol}

As in the previous sections, we call a set consisting of an initial (sub-normalized) state $\rho^{\rm ini}$ and a QKD protocol $\Pi$ a QKD {\it scheme} $(\Pi, \rho^{\rm ini})$.
In discussions below, we restrict ourselves to actual QKD schemes having the structure $(\Pi^\text{aq},\rho^\text{ini-aq})$ specified below.
To the best of our knowledge, most practical QKD schemes known at the present (e.g., \cite{BB84,B92,TKM03,E91,IWY02,Coles2016,6620246,PhysRevA.73.010302}) can be described as a specific case of this scheme (for detailed argument on the generality of this scheme, see section \ref{sec:assumptions_on_target_protocol}).
\begin{oframed}
\noindent\underline{{\bf Actual QKD scheme} $(\Pi^\text{aq},\rho^\text{ini-aq})$} 

\medskip
\noindent{\bf Initial state}:

A mixed state $\rho^\text{ini-aq}$ ($=\rho_{A'EB'}^\text{ini-aq}$).

\medskip
\noindent{\bf Protocol} $\Pi^\text{aq}$:
\begin{enumerate}
%\item {\bf Initial state generation}:
%Alice, Bob, and Eve together generate an initial state .
\item {\bf Sample measurement}:

Alice and Bob measure ${\cal H}_{A'}$ and ${\cal H}_{B'}$ respectively.
They then discuss their measurement results over a public channel and decide if they abort or continue the protocol.
\item {\bf Sifted key measurement}:

Alice (respectively, Bob) again measures ${\cal H}_{A'}$ (respectively, ${\cal H}_{B'}$) to obtain sifted key $a$ ($b$), stores it in ${\cal H}_{A}$ (${\cal H}_{B}$) in the $Z$-basis.
\begin{itemize}
\item [---] Hence the result is classical in ${\cal H}_A$ and ${\cal H}_B$.
We denote this  sub-normalized state by $\rho^\text{sif-aq}$

 ($=\rho^\text{sif-aq}_{ABA'EB'}$).
\end{itemize}
\item {\bf Key Distillation}:
\begin{enumerate}
\item {\bf Information Reconciliation}:
\begin{enumerate}
\item {\bf Error correction}:

Alice and Bob communicate using encryption.
Bob then corrects bit errors in $b$ and obtain the corrected key $b^{\rm cor}$.
\item {\bf Verification}:

Alice chooses a universal$_2$ function $u:\{0,1\}^n\to\{0,1\}^l$ and announces the choice publicly by writing it in ${\cal H}_E$.
Alice calculates $u(a)$, encrypts it, and sends it to Bob.
Bob then decrypts it and checks if $u(a)=u(b^{\rm cor})$; if not, they return to step i above.
\end{enumerate}
For details on the encryption used here, see section \ref{sec:assumptions_QKD}.
\item {\bf Privacy amplification}:

Alice chooses a universal$_2$ hash function $g:\{0,1\}^n\to\{0,1\}^m$ and announces the choice publicly by writing it in ${\cal H}_G$.
Alice (respectively, Bob) calculates final key $k=g(a)$ (respectively, $k'=g(b^{\rm cor})$) and stores it in ${\cal H}_K$ (${\cal H}_{K'}$).
\end{enumerate}
\begin{itemize}
\item[---] We denote the final sub-normalized state by $\rho^\text{fin-aq}$

(=$\rho^\text{fin-aq}_{KK'GABA'B'E}$).
\end{itemize}
\end{enumerate}
\end{oframed}
The purpose of each step is as follows.
The sample measurement step selects out sub-normalized states that have a certain desirable property, i.e., the types of properties which ensure the final keys extracted later by PA are secure.
In the actual scheme, they do this selection by continuing the protocol only when a subsystem (i.e., samples) of the initial state passes a predetermined test.
The error correction step aims to correct bit errors in Bob's sifted key due to noise in the channel.
Then the verification step verifies that the key is indeed corrected.

\subsubsection{Assumptions}
\label{sec:assumptions_QKD}
We continue to impose the assumptions of section \ref{sec:assumptions_PA} for PA schemes, and also add the following assumptions.

As to Hilbert spaces, ${\cal H}_{A'K}$ is under Alice's control, and ${\cal H}_{BB'K'}$ is under Bob's;
%${\cal H}_{A'}$, ${\cal H}_{B'}$ include degrees of freedom of quantum channels and ancillas used for internal data processing;
${\cal H}_{A}$, ${\cal H}_{B}$ are $n$-qubit spaces, and ${\cal H}_{K}$, ${\cal H}_{K'}$ are $m$-qubit.
As to encryptions in step 3.(a), Alice and Bob use the one-time pad with pre-shared secret keys, e.g., final keys obtained in previous QKD sessions as secret keys.

There are also assumptions on the sample measurement of step 2, but we do not specify them yet;
instead we specify them in sections \ref{sec:LHL_proof_to_PEC_proof} and \ref{sec:conversion_from_PEC_to_LHL_proof}, where we discuss the security proofs.
This is because these assumptions depend strongly on the approaches taken in the security proofs (i.e., the LHL-based and the PEC-based approaches), and thus it is appropriate to specify them at the moment of each proof.

\subsubsection{Remarks on the generality of our actual scheme}
\label{sec:assumptions_on_target_protocol}
%We assume that the initial state $\rho^{\rm ini}$ is generated together by Alice, Bob, and Eve.

We note that most practical QKD protocols known at the present (e.g., \cite{BB84,B92,TKM03,E91,IWY02,Coles2016,6620246,PhysRevA.73.010302}) can be rewritten in the above form without affecting the results of a security proof (i.e., without affecting the security measure $D_1(\rho_{KK'EG}^{\rm fin})$ defined in eq. (\ref{eq:def_D_1_rho_KKEG_QKD})) by the following procedures.
\begin{itemize}
\item If any of the players, Alice, Bob, or Eve, uses a classical random variable $R$, we include it in the initial state $\rho^\text{ini-ap}$, and let him or her measure it later; for example,
\begin{itemize}
\item It is possible that the original protocol uses a one-way quantum communication, instead of an entangled state $\rho_{A'EB'}^\text{ini-ap}$ generated by Eve.
That is, it is possible that the protocol starts with Alice's state $\rho^{{\rm Alice},r}_{A'E}$ occurring with probability $\Pr(R=r)$, and it  then undergoes Eve's attack $\Pi^{\rm Eve}$ and becomes $\rho_{A'EB'}^{r}=\Pi^{\rm Eve}(\rho^{{\rm Alice},r}_{A'E})$.
In such case we rewrite the initial state $\rho_{RA'EB'}^\text{ini-ap}=\sum_{r}\Pr(R=r)\ket{r}\bra{r}_{R}\otimes \rho_{A'EB'}^{r}$, and then let Alice measure $r$ projectively in step 2.
We also redefine space ${\cal H}_{A'R}$ to be ${\cal H}_{A'}$, to simplify notation.
\item Similarly, if Alice performs different measurements depending on $r$, i.e., if she uses different POVMs $M^{r}=\{\,m^{r,i}\,|\,i\,\}$ with probability $\Pr(R=r)$, we replace the initial state with $\rho_{RAA'EB'}=\sum_{r}\Pr(R=r)\ket{r}\bra{r}_R\otimes\rho_{AA'EB'}$, and let her use POVM $M=\{\,\ket{r}\bra{r}_R\otimes m^{r,i}\,|\,r,i\, \}$.
We also redefine space ${\cal H}_{A'R}$ to be ${\cal H}_{A'}$.
\end{itemize}
\item If Alice or Bob publicly announces pieces of information during the protocol in step 1, we assume that they do so by writing them in Eve's space ${\cal H}_E$.
\item We model photon loss as follows. We assume that, in the practical system, detectors can be described mathematically as a measurement discriminating vacuum and non-vacuum states, and a subsequent $Z$-basis measurement. Note that this is indeed true for typical detectors used in practical QKD systems, such as the homodyne measurement, the threshold detector and the photon number resolving detector. In this case we do not loose security by assuming that Eve, instead of Alice, performs the vacuum/non-vacuum discrimination before the sample measurement, since this corresponds to analyzing a case worse than the actual. This situation conforms to our actual QKD scheme $(\Pi^\text{aq},\rho^\text{ini-aq})$.
\item As to post-selection, we assume that it can be modelled mathematically as a measurement that discriminates ``pass'' and ``fail'' sub-normalized states, and a subsequent $Z$-basis measurement. Note that this is certainly true of the many practical protocols including the B92 protocol; see, e.g., \cite{TKM03,Coles2016,6620246,PhysRevA.73.010302}. If we regard the pass/fail discrimination as part of the sample measurement step, the situation conforms to our scheme $(\Pi^\text{aq},\rho^\text{ini-aq})$.
\item As to measurement device-independent (MDI) protocols, we rewrite it such that, at the beginning, Alice and Bob respectively send a half of an entanglement state to Charlie, who then measures them and announces the result;  see, e.g., \cite{Coles2016}. If we regard the result of these operations  the initial state $\rho^\text{ini-aq}$, the situation conforms to our scheme $(\Pi^\text{aq},\rho^\text{ini-aq})$.
\end{itemize}

\subsection{Definition of the security}

\subsubsection{Standard security criterion of QKD ($\varepsilon$-security)}
\label{sec:standard_security_criteria_QKD}
In this subsection, we often abbreviate $\rho^\text{ini-aq}$, $\rho^\text{fin-aq}$ as $\rho^\text{ini}$, $\rho^\text{fin}$.

The goal of the security analysis of QKD is to guarantee that, except with a negligible probability, (i) Alice's and Bob's final keys $k,k'$ in ${\cal H}_{KK'}$ are uniformly distributed over $\{0,1\}^m$, and (ii) $k,k'$ are not accessible by Eve through ${\cal H}_{EG}$.
Hence in defining the security criteria, we may focus on $\rho_{KK'EG}^\text{fin-aq}$, the reduced sub-normalized state for ${\cal H}_{KK'EG}$ corresponding to  $\rho^\text{fin-aq}$ in , i.e., $\rho_{KK'EG}^\text{fin-aq}={\rm Tr}_{AA'BB'}\left(\rho^\text{fin-aq}\right)={\rm Tr}_{AA'BB'}\left(\Pi^\text{aq}\left(\rho^\text{fin-aq}\right)\right)$.
As to the quantitative definition of ``except with a negligible probability'', it is customary to use an approach similar to that used in section \ref{sec:security_criterion_PA} for the case of PA.
That is, one defines the ideal sub-normalized state $\rho_{KK'EG}^{\rm ideal}$ corresponding to $\rho_{KK'EG}^\text{fin-aq}$, and then evaluate the $L_1$-distance between $\rho_{KK'EG}^{\rm ideal}$ and $\rho_{KK'EG}^\text{fin-aq}$.
The details are as follows.

By definition of $\Pi^\text{aq}$, it is evident that $\rho_{KK'EG}^{\rm fin}$ ($=\rho_{KK'EG}^\text{fin-aq}$) generally takes the form
\begin{eqnarray}
\rho_{KK'EG}^{\rm fin}&=&\sum_g\Pr(G=g)\rho_{KK'E}^{{\rm fin},g}\otimes \ket{g}\bra{g}_G,\\
\rho_{KK'E}^{{\rm fin},g}&=&%\Pi^{{\rm qkd},g}(\ket{\Psi}\bra{\Psi})=
\sum_{k,k'}\ket{k}\bra{k}_K\otimes\ket{k'}\bra{k'}_{K'}\otimes \rho^{{\rm fin},g,k,k'}_E.
\label{eq:rho_fin_gkk_defined}
\end{eqnarray}
Corresponding to this sub-normalized state, the ideal sub-normalized state is defined to be
\begin{eqnarray}
\rho_{KK'EG}^{\rm ideal}&:=&\sum_g\Pr(G=g)\rho_{KK'E}^{\rm ideal}\otimes \ket{g}\bra{g}_G,\\
\rho_{KK'E}^{\rm ideal}&:=&\sum_{k}\ket{k}\bra{k}_K\otimes\ket{k}\bra{k}_{K'}\otimes 2^{-m}\rho_E^{\rm fin},\\
\rho_E^{\rm fin}&=&{\rm Tr}_{KK'G}(\rho_{KK'EG}^{\rm fin})\nonumber\\
&=&\sum_g\Pr(G=g)\sum_{k,k'} \rho^{{\rm fin},g,k,k'}_E.
\end{eqnarray}
Note that $\rho_{KK'EG}^{\rm ideal}$ indeed embodies the two conditions for the ideal situation, mentioned in the previous paragraph.

Then we measure the security of QKD protocols by the $L_1$-distance between these two sub-normalized states;
i.e., we use the following definition.
\begin{itemize}
\item $\varepsilon$-security: A QKD protocol is said to be $\varepsilon$-secure, if $\frac12 D_1(\rho_{KK'EG}^{\rm fin})\le\varepsilon$, where
\begin{equation}
D_1(\rho_{KK'EG}^{\rm fin}):=\left\|\rho_{KK'EG}^{\rm fin}-\rho_{KK'EG}^{\rm ideal}\right\|_1.
\label{eq:def_D_1_rho_KKEG_QKD}
\end{equation}
\end{itemize}
As in the case of PA, this security criteria is known to satisfy the universal composability \cite{BHLMO05}, and for this reason, it is considered the standard.

\subsubsection{Separation of the secrecy and the correctness}
Hence the goal of a security proof of QKD is to bound $D_1(\rho_{KK'EG}^{\rm fin})$.
Next we show that this task can further be reduced to proving the secrecy of Alice's final key $k$ alone (corollary \ref{crl:security_reduced_to_Alice}).
The key observation is that the condition $\frac12D_1(\rho_{KK'EG}^{\rm fin})\le\varepsilon$ can be divided into a set of simpler conditions.
Consider the following two conditions \cite{RennerPhD}:
\begin{itemize}
\item $\varepsilon'$-secrecy of Alice's final key alone (disregarding Bob's) against Eve: That is,
$\frac12d_1(\rho_{KEG}^{\rm fin})\le\varepsilon'$, where $d_1(\cdots)$ is the $L_1$-distance defined in section \ref{sec:security_criterion_PA}, and $\rho_{KEG}^{\rm fin}$ is the reduced  sub-normalized state for ${\cal H}_{KEG}$ corresponding to  $\rho^{\rm fin}$.
\item $\varepsilon''$-correctness: Alice's and Bob's final keys $k,k'$ agree except for probability $\varepsilon''$;

i.e., $\Pr\left(K\ne K'\,\land\,\rho^{\rm fin}\right)\le\varepsilon''$, where
\begin{eqnarray}
\lefteqn{\Pr\left(K\ne K'\, \land\, \rho^{\rm fin}\right)}\\
&:=&\sum_{\substack{k,k'\\ k\ne k'}}{\rm Tr}\left\{\left(\ket{k}\bra{k}_{K}\otimes \ket{k'}\bra{k'}_{K'}\right)\rho_{KK'}^{\rm fin}\right\},\nonumber
\end{eqnarray}
with $\rho_{KK'}^{\rm fin}$ being the reduced  sub-normalized state for ${\cal H}_{KK'}$ corresponding to  $\rho^{\rm fin}$.
\end{itemize}
Then $D_1(\rho_{KK'EG}^{\rm fin})$ can be bounded as follows.
\begin{Lmm}[Separation of the secrecy and the correctness]
\label{lmm:security_reduction_to_alice}
\begin{equation}
D_1(\rho_{KK'EG}^{\rm fin})\le d_1(\rho_{KEG}^{\rm fin})+2\Pr\left(K\ne K'\, \land\, \rho^{\rm fin}\right).
\label{eq:separation_secrecy_correctness}
\end{equation}
\end{Lmm}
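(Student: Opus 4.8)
The plan is to interpose an auxiliary ``corrected'' state between $\rho_{KK'EG}^{\rm fin}$ and $\rho_{KK'EG}^{\rm ideal}$ and then split the left-hand side with the triangle inequality. Concretely, I would let $\mathcal{C}$ be the CPTP map acting only on ${\cal H}_{KK'}$ that measures $K$ in the $Z$-basis and overwrites $K'$ with a fresh copy of the outcome, and set
\begin{equation}
\rho_{KK'EG}^{\rm cor} := \mathcal{C}\big(\rho_{KK'EG}^{\rm fin}\big),
\end{equation}
so that each off-diagonal block $\rho^{{\rm fin},g,k,k'}_E$ with $k\neq k'$ is relabelled onto the diagonal $k=k'$. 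Two features of $\mathcal{C}$ drive the argument: it is trace-preserving, it leaves the marginal on ${\cal H}_{KEG}$ unchanged (copying into $K'$ does not touch $K,E,G$), and it fixes $\rho_{KK'EG}^{\rm ideal}$ (that state already has $k=k'$). The triangle inequality then gives $D_1(\rho_{KK'EG}^{\rm fin})\le\|\rho_{KK'EG}^{\rm fin}-\rho_{KK'EG}^{\rm cor}\|_1+\|\rho_{KK'EG}^{\rm cor}-\rho_{KK'EG}^{\rm ideal}\|_1$, and I would bound these two terms by the correctness and the secrecy quantities, respectively.

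For the first term I would exploit that $\rho_{KK'E}^{{\rm fin},g}$ is block-diagonal in the classical registers $(k,k')$ by eq. (\ref{eq:rho_fin_gkk_defined}), so $\rho_{KK'EG}^{\rm fin}$ splits as $\rho^{=}+\rho^{\neq}$ into its $k=k'$ and $k\neq k'$ parts, both positive semidefinite. Since $\mathcal{C}$ fixes the diagonal blocks, $\mathcal{C}(\rho^{=})=\rho^{=}$, whence $\rho_{KK'EG}^{\rm fin}-\rho_{KK'EG}^{\rm cor}=\rho^{\neq}-\mathcal{C}(\rho^{\neq})$. Applying the triangle inequality and trace-preservation of $\mathcal{C}$ on the positive operator $\rho^{\neq}$ gives $\|\rho^{\neq}-\mathcal{C}(\rho^{\neq})\|_1\le 2\|\rho^{\neq}\|_1=2\,{\rm Tr}\,\rho^{\neq}=2\Pr(K\neq K'\land\rho^{\rm fin})$, which is exactly the desired correctness contribution.

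For the second term the key observation is that $\rho_{KK'EG}^{\rm cor}$ and $\rho_{KK'EG}^{\rm ideal}$ are both of the ``copied'' form $\sum_k\ket{k}\bra{k}_K\otimes\ket{k}\bra{k}_{K'}\otimes\sigma^k_{EG}$, in which $K'$ merely carries a redundant orthogonal label; since such a label enters as an orthogonal direct sum it does not change the trace norm, so $\|\rho_{KK'EG}^{\rm cor}-\rho_{KK'EG}^{\rm ideal}\|_1=\|\rho_{KEG}^{\rm cor}-\rho_{KEG}^{\rm ideal}\|_1$. Because $\mathcal{C}$ preserves the ${\cal H}_{KEG}$ marginal this equals $\|\rho_{KEG}^{\rm fin}-\rho_{KEG}^{\rm ideal}\|_1=d_1(\rho_{KEG}^{\rm fin})$, after checking that tracing out $K'$ from the QKD ideal state reproduces the PA ideal state of eq. (\ref{eq:rho_KE_ideal_defined}). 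Combining the two bounds yields the lemma. I expect the only genuinely delicate step to be this last identity---verifying that appending the perfect copy $K'$ leaves the trace distance unchanged and that the QKD and PA notions of ``ideal state'' are consistent under the partial trace; the correctness term is essentially bookkeeping once the block decomposition is in place.
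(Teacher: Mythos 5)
Your proof is correct and is essentially the paper's own argument: your corrected state $\mathcal{C}(\rho_{KK'EG}^{\rm fin})$ coincides exactly with the intermediate state $\rho^{\rm int}_{KK'EG}$ that the paper interposes, and the two triangle-inequality terms are bounded in the same way (the $k\ne k'$ blocks yield $2\Pr\left(K\ne K'\,\land\,\rho^{\rm fin}\right)$, and dropping the redundant $K'$ label yields $d_1(\rho_{KEG}^{\rm fin})$ after matching the QKD and PA ideal states under ${\rm Tr}_{K'}$). The only cosmetic difference is that you package the intermediate state as the output of a CPTP copying map and invoke its trace preservation, where the paper manipulates the $(k,k')$ block decomposition directly.
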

E.g., the $\varepsilon'$-secrecy and the $\varepsilon''$-correctness imply the $\varepsilon$-security with $\varepsilon=\varepsilon'+\varepsilon''$.
The proof of this lemma is quite straightforward (see appendix \ref{sec:proof_separation_of_trace_distance_and_BEC} of this paper, or section 6.1 of ref. \cite{RennerPhD}).
Furthermore, if we recall that the verification function $u$ of step 3.(a).ii is universal$_2$, the $\varepsilon''$-correctness can be guaranteed automatically.
That is, by using eq. (\ref{eq:def_universal$_2$_function}), the definition of the universal$_2$ property, we obtain the following lemma.
\begin{Lmm}[Bound on the correctness by using universal$_2$ function]
\label{lmm:verification_failure_prob}
\begin{eqnarray}
\lefteqn{\Pr\left(K\ne K'\, \land\, \rho^{\rm fin}\right)}\nonumber\\
&\le& \Pr\left(A\ne B^{\rm cor}\, \land\, U(A)=U(B^{\rm cor})\right)\le 2^{-l}.
\end{eqnarray}
\end{Lmm}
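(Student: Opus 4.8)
The plan is to establish the claimed bound on the correctness failure probability $\Pr\left(K\ne K'\, \land\, \rho^{\rm fin}\right)$ by tracing the logical dependencies between the various classical strings generated during the protocol. The final keys are $k=g(a)$ and $k'=g(b^{\rm cor})$, so the event $K\ne K'$ forces $a\ne b^{\rm cor}$; but conversely $a\ne b^{\rm cor}$ need not imply $K\ne K'$, since the hash $g$ may collide. This gives the first inequality, once we observe that the protocol only reaches the privacy amplification step after passing the verification test, i.e.\ conditioned on $U(A)=U(B^{\rm cor})$. First I would argue that the sub-normalized state $\rho^{\rm fin}$ is supported only on outcomes for which the verification check succeeded, so that the event underlying $\rho^{\rm fin}$ already entails $u(a)=u(b^{\rm cor})$. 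Combining this with $K\ne K'\Rightarrow A\ne B^{\rm cor}$ yields the inclusion of events
\begin{equation}
\left\{K\ne K'\right\}\ \subseteq\ \left\{A\ne B^{\rm cor}\,\land\, U(A)=U(B^{\rm cor})\right\},
\end{equation}
from which the first inequality $\Pr\left(K\ne K'\, \land\, \rho^{\rm fin}\right)\le \Pr\left(A\ne B^{\rm cor}\, \land\, U(A)=U(B^{\rm cor})\right)$ follows by monotonicity of probability under event inclusion.

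For the second inequality I would invoke the universal$_2$ property of the verification function $u$, announced in step 3.(a).ii. Writing $a=b^{\rm cor}+\Delta$ with $\Delta\ne 0$ on the event $A\ne B^{\rm cor}$, and conditioning on the pair $(a,b^{\rm cor})$, the collision event $U(A)=U(B^{\rm cor})$ for a fixed distinct pair has probability at most $2^{-l}$ by the defining bound (\ref{eq:def_universal$_2$_function}), since $u:\{0,1\}^n\to\{0,1\}^l$. The key point is that $u$ is chosen independently of the data $a,b^{\rm cor}$ (its functional form and selection probability are public but its random instance is drawn afterward), so I can average over the choice of $u$ and bound the collision probability uniformly. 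Concretely,
\begin{eqnarray}
\Pr\left(A\ne B^{\rm cor}\, \land\, U(A)=U(B^{\rm cor})\right)
&=&\sum_{\substack{a,b\\ a\ne b}}\Pr\left(A=a,B^{\rm cor}=b\right)\Pr\left(U(a)=U(b)\right)\nonumber\\
&\le&\sum_{\substack{a,b\\ a\ne b}}\Pr\left(A=a,B^{\rm cor}=b\right)2^{-l}\ \le\ 2^{-l}.
\end{eqnarray}

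The main obstacle I anticipate is handling the sub-normalization and the conditioning correctly: because $\rho^{\rm fin}$ is sub-normalized and incorporates the probability that earlier steps (sample measurement, verification) did not abort, I must be careful that the event $U(A)=U(B^{\rm cor})$ is genuinely baked into the support of $\rho^{\rm fin}$ rather than imposed as an extra condition, and that the independence of $u$ from $(a,b^{\rm cor})$ survives the fact that $u$ is announced \emph{within} the protocol after the sifted keys are fixed. I would resolve this by appealing to the assumption listed in section~\ref{sec:assumptions_QKD} that the universal$_2$ function is selected as a genuine random variable whose distribution does not depend on the data, so that the collision bound (\ref{eq:def_universal$_2$_function}) applies term-by-term inside the average. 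Once these bookkeeping points are settled, both inequalities are immediate, and the lemma follows by chaining them together.
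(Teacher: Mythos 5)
Your proposal is correct and takes essentially the same route as the paper: the paper gives no detailed proof of this lemma, asserting only that it follows from the protocol structure (the final state exists only on verification-passed outcomes) together with the universal$_2$ collision bound of eq.~(\ref{eq:def_universal$_2$_function}), which is exactly the event-inclusion plus averaging-over-$u$ argument you wrote out. Your version simply fills in the bookkeeping (the inclusion $\{K\ne K'\}\subseteq\{A\ne B^{\rm cor}\,\land\,U(A)=U(B^{\rm cor})\}$ on the support of $\rho^{\rm fin}$, and the independence of the freshly chosen $u$ from $(a,b^{\rm cor})$) that the paper leaves implicit.
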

By combining these two lemmas, we finally obtain
\begin{Crl}[Security bounded by the secrecy]
\label{crl:security_reduced_to_Alice}
\begin{equation}
D_1(\rho_{KK'EG}^{\rm fin})\le d_1(\rho_{KEG}^{\rm fin})+2^{-l+1}.
\label{eq:D_1_bounded_by_d1}
\end{equation}
\end{Crl}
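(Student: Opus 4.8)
The plan is to obtain Corollary \ref{crl:security_reduced_to_Alice} by simply chaining together the two immediately preceding lemmas, which between them have already done all the substantive work. First I would invoke Lemma \ref{lmm:security_reduction_to_alice}, which bounds the full QKD security measure $D_1(\rho_{KK'EG}^{\rm fin})$ by the secrecy term $d_1(\rho_{KEG}^{\rm fin})$ plus twice the correctness-failure probability $\Pr(K\ne K'\land\rho^{\rm fin})$. This already reduces the problem to controlling the correctness term alone, since the secrecy term $d_1(\rho_{KEG}^{\rm fin})$ is exactly the quantity that the PA analysis of the previous section is designed to bound.

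Next I would control the remaining correctness term using Lemma \ref{lmm:verification_failure_prob}, which exploits the universal$_2$ property of the verification function $u:\{0,1\}^n\to\{0,1\}^l$ employed in the verification step 3.(a).ii of the protocol. That lemma supplies $\Pr(K\ne K'\land\rho^{\rm fin})\le 2^{-l}$. Substituting this bound into the right-hand side of Lemma \ref{lmm:security_reduction_to_alice} then gives
\begin{equation}
D_1(\rho_{KK'EG}^{\rm fin})\le d_1(\rho_{KEG}^{\rm fin})+2\cdot 2^{-l}=d_1(\rho_{KEG}^{\rm fin})+2^{-l+1},
\end{equation}
which is precisely the claimed inequality.

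There is essentially no obstacle at this stage; the corollary is a one-line consequence of the two lemmas, both of which are stated (and proved elsewhere in the paper) before it. The genuine conceptual content lies entirely upstream: in Lemma \ref{lmm:security_reduction_to_alice}, whose triangle-inequality-style decomposition cleanly separates $D_1$ into a secrecy part and a correctness part, and in Lemma \ref{lmm:verification_failure_prob}, whose collision bound for the universal$_2$ verification hash makes the correctness term automatically small. Once those two inputs are granted, the only remaining step is the trivial arithmetic identity $2\cdot 2^{-l}=2^{-l+1}$, so I would expect no difficulty in assembling the proof.
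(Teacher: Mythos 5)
Your proposal is correct and matches the paper exactly: the paper obtains Corollary \ref{crl:security_reduced_to_Alice} precisely by combining Lemma \ref{lmm:security_reduction_to_alice} with the bound $\Pr\left(K\ne K'\,\land\,\rho^{\rm fin}\right)\le 2^{-l}$ from Lemma \ref{lmm:verification_failure_prob}, yielding $D_1(\rho_{KK'EG}^{\rm fin})\le d_1(\rho_{KEG}^{\rm fin})+2^{-l+1}$. There is nothing further to add; the substantive content indeed resides in the two lemmas, as you note.
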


\subsection{Conversion to virtual QKD schemes}
Hence the security proof of QKD is now reduced to proving $d_1(\rho_{KEG}^{{\rm fin}\text{-}{\rm aq}})\le\varepsilon'$.
Then we can apply the technique of virtual schemes, which we used for PA in section \ref{sec:conversion_to_VPA}.
For the present case of QKD, we define virtual schemes as follows:
\begin{Dfn}[Virtual QKD schemes]
We say that a QKD scheme $(\Pi',\rho^{{\rm ini}\prime})$ is virtual, if $\rho^{{\rm fin}\prime}_{KEG}$, the reduced sub-normalized state for ${\cal H}_{KEG}$ corresponding to the final  sub-normalized state $\rho^{{\rm fin}\prime}:=\Pi^{\rm aq}(\rho^{{\rm ini}\prime})$, equals that of the actual scheme, $\rho^{{\rm fin}\text{-}{\rm aq}}_{KEG}$.
\end{Dfn}
Thus, as in the case of PA,  if one wishes to prove the security of the actual scheme $(\Pi^{\rm aq},\rho^{{\rm ini}\text{-}{\rm aq}})$, it suffices instead to prove $d_1(\rho_{KEG}^{{\rm fin}\prime})\le\varepsilon'$ for an arbitrary virtual scheme $(\Pi',\rho^{{\rm ini}\prime})$.

Below we modify the actual $(\Pi^\text{aq},\rho^\text{ini-aq})$ step by step, and construct three virtual schemes.
The first virtual scheme has the actual PA embedded inside. 
In section \ref{sec:review_LHL_approach}, we will see that it allows us to prove $d_1(\rho_{KEG}^{{\rm fin}\prime})\le\varepsilon'$ (and thus $d_1(\rho_{KEG}^\text{fin-aq})\le\varepsilon'$) by using the LHL.
On the other hand, the third scheme has virtual PEC embedded inside.
In section \ref{sec:conversion_to_PEC}, we will see that it allows us to prove the security using the PEC.

\subsubsection{Omission of Bob's operations}
Suppose that Bob gives up all his operations after step 2 in $\Pi^\text{aq}$, then the actual scheme $(\Pi^{\rm aq},\rho^\text{ini-aq})$ becomes $(\Pi^{\rm vq1},\rho^\text{ini-aq})$ given below.
\begin{oframed}
\noindent\underline{{\bf Virtual QKD scheme 1} $(\Pi^{\rm vq1},\rho^\text{ini-aq})$}

\medskip
\noindent{\bf Initial state:}

A mixed state $\rho_{A'EB'}^\text{ini-aq}$

\begin{itemize}
\item[---] This is the same as in the actual scheme.
\end{itemize}

\medskip
\noindent{\bf Protocol} $\Pi^{\rm vq1}$:
\begin{enumerate}
\item {\bf Sample measurement}: Same as step 1 of $\Pi^\text{aq}$.

\item {\bf Sifted key measurement}: Alice measures ${\cal H}_{A'}$ to obtain sifted key $a$, and stores it in ${\cal H}_{A}$ in the $Z$-basis.

--- We denote the result by $\rho^\text{sif-vq1}$.
\item {\bf Privacy amplification by Alice alone}:
Alice chooses a universal$_2$ hash function $g:\{0,1\}^n\to\{0,1\}^m$ and announces the choice publicly by writing it in ${\cal H}_G$.
Then she calculates $k=g(a)$ and stores it in ${\cal H}_K$.

--- We denote the result by $\rho^\text{fin-vq1}$.
\end{enumerate}
\end{oframed}
States $\rho^\text{sif-vq1}$ and $\rho^\text{fin-vq1}$, appearing in this scheme, has the following properties:
\begin{Lmm}
\label{lmm:virtualQKD1}
\begin{eqnarray}
\rho^\text{{\rm sif}\text{-}{\rm v1}}_{AE}&=&\rho^{{\rm sif}\text{-}{\rm aq}}_{AE},
\label{eq:rho_sif_vqkd1_eq_rho_sif}
\\
\rho^{{\rm fin}\text{-}{\rm v1}}_{KEG}&=&\rho^\text{{\rm fin}\text{-}{\rm aq}}_{KEG},
\label{eq:rho_fin_vqkd1_eq_rho_fin}
\end{eqnarray}
and thus $(\Pi^{\rm vq1},\rho^{{\rm ini}\text{-}{\rm aq}})$ is indeed virtual.
\end{Lmm}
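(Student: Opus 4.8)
The plan is to prove the two equalities by tracking, operation by operation, which Hilbert spaces each step of $\Pi^\text{aq}$ acts on, and to argue that every operation present in $\Pi^\text{aq}$ but absent in $\Pi^{\rm vq1}$ either (i) acts only on registers that are traced out in the reduced state of interest, or (ii) writes to $\mathcal{H}_E$ a register that is decoupled (in a product state) from the systems we keep. I would rely on two elementary facts: a CPTP map acting solely on a subsystem leaves the reduced state of its complement unchanged; and a one-time-padded message, encrypted with a pre-shared secret key unknown to Eve (as assumed in section \ref{sec:assumptions_QKD}), is uniformly random and independent of both its plaintext and everything Eve already holds.

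For the first equality, $\rho^\text{sif-vq1}_{AE}=\rho^\text{sif-aq}_{AE}$, I note that the two schemes run an identical sample measurement (step 1), and that up through the sifted-key measurement (step 2) the only difference is that $\Pi^\text{aq}$ additionally performs Bob's measurement of $\mathcal{H}_{B'}$, storing $b$ in $\mathcal{H}_B$. This is a CPTP map acting only on Bob's registers $\mathcal{H}_{BB'}$, which are traced out in forming $\rho_{AE}=\mathrm{Tr}_{BA'B'}(\cdots)$; it therefore cannot alter the reduced state on $\mathcal{H}_{AE}$, which gives the first equality.

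For the second equality, $\rho^\text{fin-vq1}_{KEG}=\rho^\text{fin-aq}_{KEG}$, I would first observe that the sifted key $a$ stored in $\mathcal{H}_A$ after step 2 is the same in both schemes, so Alice's hash value $k=g(a)$ written to $\mathcal{H}_K$ and her choice $g$ written to $\mathcal{H}_G$ coincide. It then remains only to compare what is written to $\mathcal{H}_E$. Every operation of step 3 omitted in $\Pi^{\rm vq1}$ falls into one of two classes. Bob's bit-error correction producing $b^\text{cor}$, his verification check, and his computation of $k'=g(b^\text{cor})$ all act only on $\mathcal{H}_{BB'K'}$ and are traced out. The remaining differences are the public announcement of the verification function $u$ and the one-time-padded messages exchanged during error correction and verification (the encrypted syndrome information and the encrypted $u(a)$). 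Here I would invoke the one-time-pad fact: each ciphertext written to $\mathcal{H}_E$ is uniform and independent of its plaintext and of Eve's other data, hence of $k$ and $g$; and the announced function $u$ is chosen independently of $a$ and $g$, so its register is likewise decoupled. Consequently these extra writes produce only a fixed product register in $\mathcal{H}_E$, which factors out and leaves $\rho_{KEG}$ unchanged.

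I expect the main obstacle to be making the decoupling argument of the second equality fully rigorous, in particular justifying that the extra $\mathcal{H}_E$ registers are in product with $\mathcal{H}_{KG}$ and with the informative part of $\mathcal{H}_E$. The care required is twofold: one must use that the pre-shared one-time-pad key is uniform and independent of Eve's quantum side information, so that encryption genuinely decouples the ciphertext; and one must fix the convention that such product ancilla registers may be adjoined to or removed from $\mathcal{H}_E$ without effect. Equivalently, one may note that they appear identically in $\rho_{KEG}^\text{fin}$ and in its ideal counterpart, so they never influence the security measure $d_1(\rho_{KEG}^\text{fin})$ for which this lemma is ultimately invoked.
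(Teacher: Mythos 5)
Your proposal is correct, and its skeleton is the same as the paper's own proof: the initial state and Alice's key-producing operations coincide in $(\Pi^{\rm aq},\rho^{\text{ini-aq}})$ and $(\Pi^{\rm vq1},\rho^{\text{ini-aq}})$, and every operation present only in $\Pi^{\rm aq}$ must be shown not to disturb the reduced states on ${\cal H}_{AE}$ and ${\cal H}_{KEG}$. Where you differ is in thoroughness, and the difference is worth recording. The paper's proof is a two-sentence assertion --- initial state and Alice's operations in ${\cal H}_{AKG}$ are identical, hence the reduced states agree at every stage --- which tacitly treats every omitted operation as acting only on traced-out systems. That is literally true of Bob's sifted-key measurement and of his computation of $b^{\rm cor}$ and $k'$ (your first-equality argument, and the paper's, are the same here), but it is \emph{not} literally true of step 3(a): Alice writes the choice of $u$ into ${\cal H}_E$, and the one-time-padded messages pass over a channel Eve observes, so these operations write into the very system ${\cal H}_E$ that is retained in $\rho_{KEG}$. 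Your proposal is the only one of the two that confronts this, and your resolution --- the ciphertexts are uniform and decoupled because the pre-shared pad is uniform and unknown to Eve, the register holding $u$ is independent of $a$ and $g$, and decoupled product registers may be adjoined to or stripped from ${\cal H}_E$ without affecting $d_1(\rho_{KEG}^{\rm fin})$ --- is the standard and correct way to close the point the paper leaves implicit. One caveat applies to your write-up and the paper's alike: the phrase ``a fixed product register'' requires the number of error-correction/verification rounds to be independent of the data, whereas the protocol as written loops on verification failure, so the transcript length is in principle correlated with $a$; this is harmless only under the usual convention that the protocol aborts, or has a fixed round budget, rather than looping --- a convention the paper also assumes without stating.
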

\begin{proof}
In $(\Pi^{\rm vq1},\rho^\text{ini-aq})$, the initial state $\rho_{A'EB'}^\text{ini-aq}$ and Alice's operations in ${\cal H}_{AKG}$ are identical to those in the actual scheme $(\Pi^\text{aq},\rho^\text{ini-aq})$.
Hence at any stage of the protocol, the reduced  sub-normalized states for ${\cal H}_{AE}$ or for ${\cal H}_{KEG}$ are also identical to those in $(\Pi^\text{aq},\rho^\text{ini-aq})$.
Thus we obtain inequalities (\ref{eq:rho_sif_vqkd1_eq_rho_sif}) and (\ref{eq:rho_fin_vqkd1_eq_rho_fin}).
\end{proof}

\subsubsection{Purification of the overall state}
We modify $(\Pi^{\rm vq1},\rho^\text{ini-aq})$ above further such that the overall  sub-normalized state before step 3, Alice's PA, becomes pure.
The basic idea is to purify the initial state $\rho^\text{ini-aq}$ by introducing new ancilla spaces, and then to rewrite all measurements as a unitary operation followed by a projective measurement; for the details about these techniques, see, e.g., section 2.2.8 of ref. \cite{Nielsen-Chuang}.
More precisely, we make the following changes to $(\Pi^{\rm vq1},\rho^\text{ini-aq})$.
\begin{enumerate}
\item Let Alice execute all of Bob's operations on his behalf, in addition to her own operations.
\item Introduce Alice's new ancilla ${\cal H}_{A''}$, and replace the initial state $\rho^\text{ini-aq}_{A'EB'}$ by its purification $\ket{\Psi^{\rm ini}}_{A'EB'A''}$.
%\item  In the initial state generation step, Eve, together with Alice, generates a purification $\ket{\Psi}_{A'A''EB'}$ of $\rho_{A'EB'}$.
%After that  they do not touch the ancilla space ${\cal H}_{A''}$ until the end of the protocol.
\item Introduce Alice's new ancilla ${\cal H}_{A'''}$, and rewrite step 1, the sample measurement,  as a unitary operation $U^{\rm smp}_{A'B'A'''}$ followed by a  measurement using a projection $P_{A'''}^{\rm smp}$.
--- Note that the decision of `continue' or `abort' is represented by projection $P_{A'''}^{\rm smp}$.
%This can always be done by exploiting techniques elaborated, e.g., in section 2.2.8, ref. \cite{Nielsen-Chuang}.
\item Rewrite step 2, the sifted key measurement, as a unitary operation $U^{\rm sif}_{AA'}$ followed by the $Z$-basis measurement in ${\cal H}_{A}$.
\end{enumerate}
Then we obtain the following scheme.
Here in order to simplify notation, all of Alice's ancilla space ${\cal H}_{A'}\otimes{\cal H}_{A''}\otimes{\cal H}_{A'''}\otimes{\cal H}_{B'}$ are denoted collectively by ${\cal H}_{\bar{A}}$.

\begin{oframed}
\noindent\underline{{\bf Virtual QKD scheme 2} $(\Pi^{\rm vq2},\ket{\Psi^\text{ini-aq}})$}

\medskip
\noindent{\bf Initial state}:

A pure state $\ket{\Psi^\text{ini-aq}}_{\bar{A}E}$.
\begin{itemize}
\item[---] This can also be written as $\ket{\Psi^\text{ini-aq}}_{A'A''B'E}$ and is a purification of $\rho_{A'EB'}^\text{ini-aq}$.
\end{itemize}

\medskip
{\bf Protocol} $\Pi^{\rm vq2}$:
\begin{enumerate}
%\item {\bf Initial state generation}:
%Alice and Eve generate a pure state $\ket{\Psi}_{AA'E}$.
\item  {\bf Sample measurement}: Alice applies unitary operation $U^{\rm smp}_{\bar{A}}$ and then projection $P_{\bar{A}}^{\rm smp}$.

--- We denote the result by $\ket{\Psi^\text{smp-vq2}}$.

\item {\bf Sifted key state preparation}: Alice applies unitary operation $U^{\rm sif}_{A\bar{A}}$.

--- We denote the result by $\ket{\Psi^\text{pre-vq2}}$.
\item {\bf $Z$ basis measurement}: Alice measures ${\cal H}_{A}$ in the $Z$-basis to obtain sifted key $a$.

--- We denote the result by $\rho^\text{sif-vq2}$.
\item {\bf Privacy amplification by Alice alone}: Same as step 3 of $\Pi^{\rm vq1}$.

--- We denote the result by $\rho^\text{fin-vq2}$.
\end{enumerate}
\end{oframed}
This scheme has the following properties.
\begin{Lmm}
\label{lmm:virtualQKD2}
\begin{eqnarray}
\rho^{{\rm sif}\text{-}{\rm vq2}}_{AE}&=&
\rho^{{\rm sif}\text{-}{\rm aq}}_{AE},
\label{eq:rho_sif_vqkd2_eq_rho_sif}
\\
\rho^{{\rm fin}\text{-}{\rm vq2}}_{KEG}&=&\rho^{{\rm sif}\text{-}{\rm aq}}_{KEG},
\label{eq:rho_fin_vqkd2_eq_rho_fin}
\end{eqnarray}
and therefore $(\Pi^{\rm vq2},\ket{\Psi^{{\rm ini}\text{-}{\rm aq}}})$ is indeed virtual.
Also, $\ket{\Psi^{{\rm pre}\text{-}{\rm vq2}}}$ is a semi-purification of $\rho^{{\rm sif}\text{-}{\rm aq}}_{AE}$.
\end{Lmm}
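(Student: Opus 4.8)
The plan is to establish the three assertions of Lemma~\ref{lmm:virtualQKD2} by verifying that each of the four modifications turning $(\Pi^{\rm vq1},\rho^\text{ini-aq})$ into $(\Pi^{\rm vq2},\ket{\Psi^\text{ini-aq}})$ is a standard dilation that leaves untouched the reduced sub-normalized states on the only subsystems that matter, namely ${\cal H}_{AE}$ after the sifted-key step and ${\cal H}_{KEG}$ after privacy amplification. I would then chain the resulting equalities with those already established in Lemma~\ref{lmm:virtualQKD1} to reach the actual scheme $(\Pi^\text{aq},\rho^\text{ini-aq})$.

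First I would recall the two textbook facts underlying the construction (see section~2.2.8 of ref.~\cite{Nielsen-Chuang}): (i) replacing a sub-normalized state by a purification leaves its reduced sub-normalized state unchanged once the purifying ancilla is traced out; and (ii) any measurement can be realized as a unitary coupling to a fresh ancilla followed by a projective measurement on that ancilla, in such a way that both the outcome statistics and the post-measurement sub-normalized state of the original system---here the sub-normalized ``continue'' branch selected by the projection $P^{\rm smp}$---are reproduced after the ancilla is traced out. The essential bookkeeping observation is that every space introduced in the four modifications, namely the purifying ancilla ${\cal H}_{A''}$ and the dilation ancilla ${\cal H}_{A'''}$, together with Bob's space ${\cal H}_{B'}$ now handled by Alice, is placed on Alice's side and absorbed into ${\cal H}_{\bar A}$, whereas Eve's space ${\cal H}_E$ and Alice's key spaces ${\cal H}_A,{\cal H}_K,{\cal H}_G$ remain structurally unchanged.

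Consequently, at the end of the sifted-key measurement the reduced sub-normalized state of $\Pi^{\rm vq2}$ on ${\cal H}_{A'B'E}$ coincides with that of $\Pi^{\rm vq1}$; since step~3 and step~4 of $\Pi^{\rm vq2}$ apply to ${\cal H}_{A}$, ${\cal H}_K$ and ${\cal H}_G$ exactly the $Z$-basis and privacy-amplification operations of $\Pi^{\rm vq1}$, this gives $\rho^\text{sif-vq2}_{AE}=\rho^\text{sif-vq1}_{AE}$ and $\rho^\text{fin-vq2}_{KEG}=\rho^\text{fin-vq1}_{KEG}$. Chaining these with the equalities of Lemma~\ref{lmm:virtualQKD1} produces the two displayed identities (\ref{eq:rho_sif_vqkd2_eq_rho_sif}) and (\ref{eq:rho_fin_vqkd2_eq_rho_fin}) (the right-hand side of the latter being understood as $\rho^\text{fin-aq}_{KEG}$), which is precisely the statement that $(\Pi^{\rm vq2},\ket{\Psi^\text{ini-aq}})$ is virtual.

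For the final claim, that $\ket{\Psi^\text{pre-vq2}}$ is a semi-purification of $\rho^\text{sif-aq}_{AE}$, I would appeal directly to Definition~\ref{dfn:associated_state}. Step~3 of $\Pi^{\rm vq2}$ is exactly the $Z$-basis measurement $\Pi^\text{$Z$-mea}_A$ applied to $\ket{\Psi^\text{pre-vq2}}$, yielding $\rho^\text{sif-vq2}$; because $\Pi^\text{$Z$-mea}_A$ acts only on ${\cal H}_A$ it commutes with the partial trace over ${\cal H}_{\bar A}$, so $\Pi^\text{$Z$-mea}_A\left({\rm Tr}_{\bar A}\ket{\Psi^\text{pre-vq2}}\bra{\Psi^\text{pre-vq2}}\right)=\rho^\text{sif-vq2}_{AE}=\rho^\text{sif-aq}_{AE}$, which is the defining property of a semi-purification. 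The only genuinely delicate point I anticipate is the dilation bookkeeping in fact~(ii): one must check that conditioning on $P^{\rm smp}$ reproduces the correct sub-normalized ``continue'' branch and that none of the ancillas ${\cal H}_{A''},{\cal H}_{A'''}$ leaks into ${\cal H}_{AE}$ or ${\cal H}_{KEG}$; once this is confirmed, the remaining steps are routine partial-trace manipulations.
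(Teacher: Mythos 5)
Your proposal is correct and follows essentially the same route as the paper's own proof: argue that the purification and unitary-dilation modifications leave the reduced sub-normalized states on ${\cal H}_{AE}$ and ${\cal H}_{KEG}$ untouched, chain with Lemma~\ref{lmm:virtualQKD1}, and then verify the semi-purification claim by applying the $Z$-basis measurement to $\ket{\Psi^\text{pre-vq2}}$ and invoking Definition~\ref{dfn:associated_state}. You merely spell out the dilation bookkeeping that the paper dismisses as obvious, and you correctly read the right-hand side of (\ref{eq:rho_fin_vqkd2_eq_rho_fin}) as $\rho^\text{fin-aq}_{KEG}$ (a typo in the statement).
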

\begin{proof}
It is obvious that changes 1 through 5 above do not affect the reduced sub-normalized state for ${\cal H}_{AKEG}$ at any stage of the protocol.
Hence we have $\rho^\text{sif-vq2}_{AE}=\rho^\text{sif-vq1}_{AE}$ and $\rho^\text{fin-vq2}_{KEG}=\rho^\text{fin-vq1}_{KEG}$.
Then by using lemma \ref{lmm:virtualQKD1}, we obtain inequalities (\ref{eq:rho_sif_vqkd2_eq_rho_sif}) and (\ref{eq:rho_fin_vqkd2_eq_rho_fin}).
After the $Z$-basis measurement (in step 3) is applied to $\ket{\Psi^\text{pre-vq2}}$, it becomes $\rho^{\rm sif-vq2}$, whose reduced sub-normalized state $\rho^{\rm sif-vq2}_{AE}$ equals $\rho_{AE}^\text{sif-aq}$ due to (\ref{eq:rho_sif_vqkd2_eq_rho_sif}).
Therefore $\ket{\Psi^\text{pre-vq2}}$ is a semi-purification of $\rho^{\rm sif}_{AE}$ (cf. definition \ref{sec:semi_purification_and_twirling}).
\end{proof}

\subsubsection{Replacing the actual PA with the virtual PA}
Finally we apply the argument of section \ref{sec:semi_purification_and_twirling}, and replace the actual PA included in $\Pi^{\rm vq2}$ (i.e. steps 3 and 4 of $\Pi^{\rm vq2}$) by the virtual PA algorithm with twirling, $\Pi^{\rm vpt}$.
\begin{oframed}
\underline{{\bf Virtual QKD scheme 3} $(\Pi^{\rm vq3},\ket{\Psi^\text{ini-aq}})$}

\medskip
\noindent{\bf Initial state}: A pure state $\ket{\Psi^\text{ini-aq}}_{\bar{A}E}$
\begin{itemize}
\item[---] This is the same as in virtual QKD scheme 2.
\end{itemize}

\medskip
\noindent{\bf Protocol} $\Pi^{\rm vq3}$:

\begin{enumerate}
\item  {\bf Sample measurement}: Alice applies unitary operation $U^{\rm smp}_{\bar{A}}$ and then projection $P_{\bar{A}}^{\rm smp}$.

\item {\bf Sifted key state preparation}: Alice applies unitary operation $U^{\rm sif}_{A\bar{A}}$.

--- We denote the result by $\ket{\Psi^\text{pre-vq3}}$.
\item {\bf Twirling}: $\Pi^{\rm tw}_{A|T}$, described in definition \ref{dfn:twirling}.

--- We denote the result by $\ket{\Psi^\text{tw-vq3}}$.
\item {\bf Choice of a hash function}:
Alice chooses a universal$_2$ hash function $g:\{0,1\}^n\to\{0,1\}^m$ and announces the choice publicly by writing it in ${\cal H}_G$.
\item {\bf Virtual PA}: $\Pi^{{\rm vp},g}_{A|\bar{A}T}$, described in section \ref{sec:obtained_virtual_PA}; that is,
\begin{enumerate}
\item {\bf Phase error correction using  $PC^g$ assisted by ancilla measurements in ${\cal H}_{\bar{A}T}$} ($\Pi^{{\rm pec},g}_{A|\bar{A}T}$).

--- We denote the result by $\ket{\Psi^\text{pec-vq3}}$.

\item {\bf Bit basis measurement}: Measure eigenvalues $(-1)^{k_1},\dots,(-1)^{k_{m}}$ of operators $\bar{Z}_1,\cdots,\bar{Z}_m$ in ${\cal H}_{A}$, and store hash value $k=(k_1,\dots,k_m)\in\{0,1\}^{m}$ in ${\cal H}_K$.
\end{enumerate}
--- We denote the result by $\rho^\text{fin-vq3}$.
\end{enumerate}
\end{oframed}
We note that steps 1 and 2 of $\Pi^{\rm vq3}$ above are identical to those of $\Pi^{\rm vq2}$;
we wrote them out explicitly for the sake of completeness, since this scheme will be referred to frequently in later subsections.
This scheme has the following properties.
\begin{Lmm}
\label{lmm:virtualQKD3}
\begin{eqnarray}
\rho^{{\rm fin}\text{-}{\rm vq3}}_{KEG}&=&\rho^{{\rm fin}\text{-}{\rm aq}}_{KEG},
\label{eq:rho_fin_vqkd3_eq_rho_fin}
\end{eqnarray}
and thus $(\Pi^{\rm vq3},\ket{\Psi^{{\rm ini}\text{-}{\rm aq}}})$ is virtual.
Also, $\ket{\Psi^{{\rm pre}\text{-}{\rm vq3}}}$ is a semi-purification, and $\ket{\Psi^{{\rm tw}\text{-}{\rm vq3}}}$ is a purification of $\rho^{{\rm sif}\text{-}{\rm aq}}_{AE}$.
\end{Lmm}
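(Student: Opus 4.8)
The plan is to exploit the fact that $\Pi^{\rm vq3}$ is obtained from $\Pi^{\rm vq2}$ by leaving its first two steps untouched and replacing its privacy-amplification block (the $Z$-basis measurement of step~3 together with Alice's actual PA of step~4) by the virtual PA with twirling $\Pi^{\rm vpt}$ of section~\ref{sec:semi_purification_and_twirling}. Since steps~1 and~2 of $\Pi^{\rm vq3}$ coincide verbatim with those of $\Pi^{\rm vq2}$, the states produced right after step~2 are identical, $\ket{\Psi^\text{pre-vq3}}=\ket{\Psi^\text{pre-vq2}}$. Lemma~\ref{lmm:virtualQKD2} already tells us that $\ket{\Psi^\text{pre-vq2}}$ is a semi-purification of $\rho^\text{sif-aq}_{AE}$, so the first auxiliary claim---that $\ket{\Psi^\text{pre-vq3}}$ is a semi-purification of $\rho^\text{sif-aq}_{AE}$---follows immediately.

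Next I would apply the twirling lemma. Step~3 of $\Pi^{\rm vq3}$ is exactly $\Pi^{\rm tw}_{A|T}$ of definition~\ref{dfn:twirling}, applied to the semi-purification $\ket{\Psi^\text{pre-vq3}}$. Lemma~\ref{lmm:twirling} then guarantees that the result $\ket{\Psi^\text{tw-vq3}}=\Pi^{\rm tw}_{A|T}(\ket{\Psi^\text{pre-vq3}})$ is a purification of $\rho^\text{sif-aq}_{AE}$, which is the second auxiliary claim.

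The remaining and central task is the identity $\rho^\text{fin-vq3}_{KEG}=\rho^\text{fin-aq}_{KEG}$. Here the key observation is that steps~3--5 of $\Pi^{\rm vq3}$ together are precisely the virtual PA with twirling $\Pi^{\rm vpt}$ acting on the semi-purification $\ket{\Psi^\text{pre-vq3}}$ of $\rho^\text{sif-aq}_{AE}$, whereas steps~3--4 of $\Pi^{\rm vq2}$ together constitute Alice's actual PA $\Pi^{\rm ap}$ acting on the same state, whose output in ${\cal H}_{KEG}$ depends only on the cq state $\rho^\text{sif-vq2}_{AE}=\rho^\text{sif-aq}_{AE}$. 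Lemma~\ref{lmm:virtuality_virtual_PA_twirling} asserts exactly that these two blocks produce the same reduced sub-normalized state on ${\cal H}_{KEG}$, so $\rho^\text{fin-vq3}_{KEG}=\rho^\text{fin-vq2}_{KEG}$. Chaining this with eq.~(\ref{eq:rho_fin_vqkd2_eq_rho_fin}) of lemma~\ref{lmm:virtualQKD2} gives $\rho^\text{fin-vq3}_{KEG}=\rho^\text{fin-aq}_{KEG}$, establishing virtuality.

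I do not expect a genuine obstacle, since every ingredient has already been prepared. The only point requiring care is the bookkeeping in the last paragraph---matching the block ``twirling $+$ choice of $g$ $+$ virtual PA'' of $\Pi^{\rm vq3}$ with the scheme $(\Pi^{\rm vpt},\ket{\Phi^\text{ini-ap}})$ of section~\ref{sec:semi_purification_and_twirling}, and confirming that the semi-purification fed into it, $\ket{\Psi^\text{pre-vq3}}$, plays the role of $\ket{\Phi^\text{ini-ap}}$ with $\rho^\text{sif-aq}_{AE}$ in place of $\rho^\text{ini-ap}_{AE}$---so that lemma~\ref{lmm:virtuality_virtual_PA_twirling} applies verbatim.
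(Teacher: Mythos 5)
Your proposal is correct and follows essentially the same route as the paper's own proof: identify $\ket{\Psi^{\text{pre-vq3}}}=\ket{\Psi^{\text{pre-vq2}}}$ from the identical steps 1--2, invoke lemma~\ref{lmm:virtualQKD2} for the semi-purification property, lemma~\ref{lmm:twirling} for the purification property, and the argument of section~\ref{sec:semi_purification_and_twirling} (lemma~\ref{lmm:virtuality_virtual_PA_twirling}) chained with lemma~\ref{lmm:virtualQKD2} for eq.~(\ref{eq:rho_fin_vqkd3_eq_rho_fin}). Your write-up is in fact somewhat more explicit than the paper's, which merely says the equality ``follows by repeating the argument of section~\ref{sec:semi_purification_and_twirling}.''
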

\begin{proof}
Eq. (\ref{eq:rho_fin_vqkd3_eq_rho_fin}) follows by repeating the argument of section \ref{sec:semi_purification_and_twirling}.
Since the initial state and steps 1 and 2 are identical in the second and in the third virtual QKD schemes, we have $\ket{\Psi^\text{pre-vq2}}=\ket{\Psi^\text{pre-vq3}}$.
These sub-normalized states are a semi-purification of $\rho^{{\rm sif}\text{-}{\rm aq}}_{AE}$ due to lemma \ref{lmm:virtualQKD2}.
Thus $\ket{\Psi^{{\rm tw}\text{-}{\rm vq3}}}$ is a purification due to lemma \ref{lmm:twirling}.
\end{proof}

\subsection{Converting the LHL-based proofs to the PEC-based proofs}
\label{sec:LHL_proof_to_PEC_proof}

By making use of the virtual schemes obtained above, we next show that there are direct connections between proofs of the LHL-based approach and of the the PEC-based approach.
First we show that the LHL-based proofs can always be converted to those of the PEC-based approach.

\subsubsection{Brief review of the LHL-based security proof}
\label{sec:review_LHL_approach}
We begin by reviewing the LHL-based security proof of QKD briefly \cite{RennerPhD}.

\paragraph{LHL-type QKD scheme}
As mentioned in section \ref{sec:description_QKD_protocol}, the sample measurement step of the actual protocol $\Pi^\text{aq}$ is designed to select out sub-normalized states having a certain desirable property.
When one uses the LHL-based approach for the security proof, it is customary to define the desirable property as follows, such that the LHL (lemma \ref{lmm:leftover_hashing_lemma}) can readily be applied to bound $d_1(\rho_{KEG}^\text{fin-aq})$.
\begin{oframed}
\noindent{\bf Assumption SM-LHL} (Typical assumption on the sample measurement in the LHL-based approach):
\begin{equation}
H_{\rm min}^{\varepsilon'''}(\rho_{AE}^\text{sif-aq}|E)\ge H_{\rm min}^{\rm th}\quad {\rm with}\ H_{\rm min}^{\rm th}\ {\rm being\ a\ constant}.
\label{eq:bound_H_min_actual}
\end{equation}
\end{oframed}
Hence we call QKD schemes with this assumption the {\it LHL-type} QKD schemes.

\paragraph{LHL-based security proof of the LHL-type schemes}
For these schemes, the LHL-based proof usually proceeds as follows.

\begin{itemize}
\item[(i)] The situation of step 3 in $(\Pi^{\rm vq1}, \rho^\text{ini-aq})$ is equivalent to that of the actual PA, defined in section \ref{sec:privacy_amplification}.
Thus one can bound $d_1(\rho_{KEG}^\text{fin-vq1})$ by using the LHL (lemma \ref{lmm:leftover_hashing_lemma}) as
\begin{equation}
d_1(\rho_{KEG}^\text{fin-vq1})\le 2\varepsilon''' + 2^{\frac12[m-H^{\varepsilon'''}_{\rm min}(\rho_{AE}^\text{sif-vq1}|E)]}.
\label{eq:LHL_qkd_v1}
\end{equation}
\item[(ii)] From the virtuality of scheme $(\Pi^{\rm vq1},\rho^\text{ini-aq})$, i.e., from lemma \ref{lmm:virtualQKD1}, we have
\begin{eqnarray}
d_1(\rho_{KEG}^\text{fin-vq1})&=&d_1(\rho_{KEG}^\text{fin-aq}),\\
H^{\varepsilon'''}_{\rm min}(\rho_{AE}^\text{sif-vq1}|E)&=&H^{\varepsilon'''}_{\rm min}(\rho_{AE}^\text{sif-aq}|E).
\end{eqnarray}
Thus inequality (\ref{eq:LHL_qkd_v1}) can be rewritten as
\begin{equation}
d_1(\rho_{KEG}^\text{fin-aq})\le 2\varepsilon'''+2^{\frac12[m-H^{\varepsilon'''}_{\rm min}(\rho_{AE}^\text{sif-aq}|E)]}
\label{eq:LHL_qkd_a}
\end{equation}
\item[(iii)] Combining inequalities (\ref{eq:D_1_bounded_by_d1}), (\ref{eq:bound_H_min_actual}) and (\ref{eq:LHL_qkd_a}), we obtain
\begin{eqnarray}
\lefteqn{D_1(\rho_{KK'EG}^\text{fin-aq})}\nonumber\\
&\le& 2^{-l+1} + 2\varepsilon'''+2^{\frac12[m-H^{\varepsilon'''}_{\rm min}(\rho_{AE}^\text{sif-aq}|E)]}\nonumber\\
&\le&2^{-l+1} + 2\varepsilon'''+ 2^{\frac12[m-H_{\rm min}^{\rm th}]},
\label{eq:LHL_based_security_bound_QKD}
\end{eqnarray}
which completes the security proof.
\end{itemize}

\subsubsection{Conversion to the PEC-based security proofs}
\label{sec:conversion_to_PEC}
Next we show that this type of security proofs can always be converted to those of the PEC-based approach.
That is, for an arbitrary scheme of the LHL-type, with assumption SM-LHL, one can always construct a proof that relies solely on the properties of the PEC, and not on the LHL.

\paragraph{PEC-based proof of the LHL-type schemes}
The proof proceeds by similar steps as in section \ref{sec:review_LHL_approach}, only the argument leading to the bound on $d_1(\rho_{KEG}^\text{fin-aq})$ is quite different.
Namely, (a) We use the third virtual scheme $(\Pi^{\rm vq3}, \rho^\text{ini-vq3})$, not the first, and reduce the security proof to bounding $d_1(\rho_{KEG}^\text{fin-vq3})$.
(b) In order to bound this quantity, we make use of corollary \ref{crl:smooth_LHL_from_PEC}, not of the LHL.
The actual proof proceeds as follows.
\begin{itemize}
\item[(i)] The situation in steps 4 and 5 of $(\Pi^{\rm vq3}, \rho^\text{ini-vq3})$ is identical to that of the virtual PA scheme $(\Pi^{\rm vp},\ket{\Psi^\text{ini-ap}})$ of section \ref{sec:obtained_virtual_PA}, with the input being $\ket{\Psi^\text{ini-ap}}=\ket{\Psi^\text{tw-vq3}}$, a purification of $\rho_{AE}^\text{sif-aq}$ (see lemma \ref{lmm:virtualQKD3}).
Hence we can apply corollary \ref{crl:smooth_LHL_from_PEC} and obtain
\begin{equation}
d_1(\rho_{KEG}^\text{fin-vq3})\le 2\varepsilon''' + 2^{\frac12[m-H^{\varepsilon'''}_{\rm min}(\rho_{AE}^\text{sif-aq}|E)+3]}.
\label{eq:LHL_qkd_v3}
\end{equation}
\item[(ii)] From the virtuality of scheme $(\Pi^{\rm vq3}, \rho^\text{ini-vq3})$, i.e., from lemma \ref{lmm:virtualQKD3}, we have
\begin{equation}
d_1(\rho_{KEG}^\text{fin-vq3})=d_1(\rho_{KEG}^\text{fin-aq}),
\end{equation}
and thus inequality (\ref{eq:LHL_qkd_v3}) can be rewritten as
\begin{equation}
d_1(\rho_{KEG}^\text{fin-aq})\le 2\varepsilon'''+2^{\frac12[m-H^{\varepsilon'''}_{\rm min}(\rho_{AE}^\text{sif-aq}|E)+3]}
\label{eq:bound_d1_qkd_PEC}
\end{equation}
\item[(iii)] Combining inequalities (\ref{eq:D_1_bounded_by_d1}), (\ref{eq:bound_H_min_actual}) and (\ref{eq:bound_d1_qkd_PEC}), we obtain
\begin{eqnarray}
\lefteqn{D_1(\rho_{KK'EG}^\text{fin-aq})}\nonumber\\
&\le& 2^{-l+1} + 2\varepsilon'''+2^{\frac12[m-H^{\varepsilon'''}_{\rm min}(\rho_{AE}^\text{sif-aq}|E)+3]}\nonumber\\
&\le&2^{-l+1} + 2\varepsilon'''+ 2^{\frac12[m-H_{\rm min}^{\rm th}+3]}.
\label{eq:bound_on_D1_by_PEC_for_LHL_type}
\end{eqnarray}
which completes the security proof.
\end{itemize}
We again stress that the essential part of this proof, inequality (\ref{eq:LHL_qkd_v3}), is obtained solely from the properties of the PEC algorithm $\Pi^{\rm pec}$, embedded in the virtual scheme $(\Pi^{\rm vq3}, \rho^\text{ini-vq3})$.
In this sense, we say that this is a proof by the PEC-based approach.
In section \ref{sec:brief_review_PEC_approach}, we also give a direct comparison between this and the conventional forms of the PEC-based proofs.

Inequality (\ref{eq:bound_on_D1_by_PEC_for_LHL_type}) is slightly weaker than the counterpart (\ref{eq:LHL_based_security_bound_QKD}) obtained in the LHL-approach by a constant factor of 3/2 appearing in an exponent of the r.h.s. (cf. corollary \ref{crl:smooth_LHL_from_PEC} for the case of PA).
We regard this factor as inessential because it can easily be compensated for in practice, e.g., by setting length $m$ of final key $k$ to be 3 bits shorter.

\subsection{Converting the PEC-based proofs to the LHL-based proof}
\label{sec:conversion_from_PEC_to_LHL_proof}
Next we present the conversion of the reverse direction.
That is, we show that any security proofs of the PEC-based approach can be converted to those of the LHL-based approach.

\subsubsection{Brief review of the PEC-based security proof}
\label{sec:brief_review_PEC_approach}

We again begin by reviewing the existing method (see, e.g., \cite{Koashi, KoashiNJP09,H07,HT12}).

\paragraph{PEC-type QKD schemes}
Again, as in the case of the LHL-approach, the sample measurement step of the actual protocol $\Pi^\text{aq}$ is supposed to select out sub-normalized states having a certain desirable property.
However, when using the PEC-based approach \cite{Koashi, KoashiNJP09,H07,HT12}, it is customary to define the desirable property with respect to the pure sub-normalized state $\ket{\Psi^\text{pre-vq3}}$ appearing in the third virtual scheme.
Note that this is in strong contrast to the case of the LHL-based approach, where the property was defined with respect to $\rho^{\rm sif}$ in the actual scheme.
In addition, it is also customary to define the desirable property of $\ket{\Psi^\text{pre-vq3}}$ in terms of its phase probability distribution $p_X$, defined below.
\begin{Dfn}
The distribution of phase degrees of freedom $X^{{\rm pre}\text{-}{\rm vq3}}$ of $\ket{\Psi^{{\rm pre}\text{-}{\rm vq3}}}$ is defined by
\begin{equation}
\Pr(X^{{\rm pre}\text{-}{\rm vq3}}=x):=\bra{\widetilde{x}}_{A}\rho^{{\rm pre}\text{-}{\rm vq3}}_A\ket{\widetilde{x}}_A,
\label{eq:Pr_X_prevq3_defined}
\end{equation}
where $\rho^{{\rm pre}\text{-}{\rm vq3}}_A$ is the reduced sub-normalized state for ${\cal H}_A$ corresponding to $\ket{\Psi^{{\rm pre}\text{-}{\rm vq3}}}$.
%$\rho^{{\rm pre}\text{-}{\rm vq3}}={\rm Tr}_{\bar{A}E}\left\{\ket{\Psi^{{\rm pre}\text{-}{\rm vq3}}}\bra{\Psi^{{\rm pre}\text{-}{\rm vq3}}}_{A\bar{A}E}\right\}$.
\end{Dfn}

In this setting, one assumes that the randomness of $X^{{\rm pre}\text{-}{\rm vq3}}$ is sufficiently small, such that the PEC $\Pi^{\rm pec}$ inside the virtual protocol $\Pi^{\rm vq3}$ almost always succeeds.

\begin{oframed}
\noindent{\bf Assumption SM-PEC} (Typical assumptions on the sample measurement in the PEC-based approach):

\medskip
The randomness of $X^{{\rm pre}\text{-}{\rm vq3}}$ is sufficiently small; e.g.,
\begin{itemize}
\item Example 1: The error probability $\Pr(X^{{\rm pre}\text{-}{\rm vq3}}\ne 0)$ is bounded from above.
\item Example 2: The  Shannon entropy $H(X^{{\rm pre}\text{-}{\rm vq3}})$ is bounded from above.
\end{itemize}
\end{oframed}
We will call QKD schemes with these types of assumptions the {\it PEC-type} QKD schemes.

\paragraph{PEC-based proof of the PEC-type schemes}
The basic idea is the same as in the proof of the LHL-type schemes given in section \ref{sec:conversion_to_PEC}.
The only difference is that, in order to bound the failure probability $P^{\rm ph}_A$ of the PEC, we may use any types of a coding theorems of the PEC, besides theorem \ref{thm_coding_theorem_vpec}.
The actual proof proceeds as follows.
\begin{itemize}
\item[(i)] By using a certain coding theorem for the PEC (e.g., theorem \ref{thm_coding_theorem_vpec}, but not necessarily restricted to it), one bounds the average failure probability of the PEC $\Pi^{{\rm pec},g}$, embedded inside the third virtual scheme  $(\Pi^{\rm vq3}, \rho^\text{ini-vq3})$ as
\begin{equation}
P^{\rm ph}_A(\rho^\text{pec-vq3})\le P^{\rm th},
\label{eq:P_Ph_bounded_by_P_th}
\end{equation}
where function $P^{\rm ph}(\cdots)$ is defined in (\ref{eq:failure_prob_PEC_defined}), and $P^{\rm th}$ is a constant determined by  assumption SM-PEC.
Then by substituting (\ref{eq:P_Ph_bounded_by_P_th}) in inequality (\ref{eq:upper_bound_by_average_BLER}), one obtains the bound
\begin{equation}
d_1(\rho_{KEG}^\text{fin-vq3})\le 2\sqrt2 \sqrt{P^{\rm th}}.
\label{eq:PEC_proof_of_PEC1}
\end{equation}
\item[(ii)] From the virtuality of the third virtual scheme (lemma \ref{lmm:virtualQKD3}), we have $d_1(\rho_{KEG}^\text{fin-vq3})=d_1(\rho_{KEG}^\text{fin-aq})$.
Thus inequality (\ref{eq:PEC_proof_of_PEC1}) can be rewritten as
\begin{equation}
d_1(\rho_{KEG}^\text{fin-aq})\le 2\sqrt2 \sqrt{P^{\rm th}}.
\label{eq:PEC_proof_of_PEC2}
\end{equation}
\item[(iii)] Combining inequalities (\ref{eq:D_1_bounded_by_d1}) and (\ref{eq:PEC_proof_of_PEC2}), we obtain
\begin{equation}
D_1(\rho_{KK'EG}^\text{fin-aq})\le 2^{-l+1} + 2\sqrt2 \sqrt{P^{\rm th}},
\end{equation}
which completes the security proof.
\end{itemize}

Note that the PEC-based proof for the LHL-type schemes, given in section \ref{sec:conversion_to_PEC}, was indeed a special case of this argument:
There, the randomness of $X^{{\rm pre}\text{-}{\rm vq3}}$ was measured using the conditional min-entropy $H_{\rm min}^{\varepsilon'''}(\rho_{AE}^\text{sif-aq}|E)$, and theorem \ref{thm_coding_theorem_vpec} was used as an example of inequality (\ref{eq:P_Ph_bounded_by_P_th}).

\subsubsection{Special case using R\'{e}nyi entropy of degree one half}
The description of the PEC-type schemes above becomes drastically simple by using the following quantity as the measure of the randomness of $X^{{\rm pre}\text{-}{\rm vq3}}$:
\begin{Dfn}
The R\'{e}nyi  entropy of degree $1/2$ of a probability distribution $\Pr(X=x)$ is defined by
\begin{equation}
H_{1/2}(X)=2\log\left(\sum_{x}\Pr(X=x)^{1/2}\right)^2.
\end{equation}
\end{Dfn}
More precisely, suppose one uses the particular form of assumption SM-PEC, which takes the from
\begin{oframed}
\noindent{\bf Assumption SM-PEC-R\'{e}nyi} (An example of assumption SM-PEC using the R\'{e}nyi entropy):
\begin{equation}
H_{1/2}(X^{{\rm pre}\text{-}{\rm vq3}})\le H_{1/2}^{\rm th}\quad {\rm with}\ H_{1/2}^{\rm th}\ {\rm being\ a\ constant}.
\label{eq:bound_H_12X}
\end{equation}
\end{oframed}
Then this condition becomes reduced to a special case of assumption SM-LHL, with $H_{\rm min}^{\rm th}=n-H_{1/2}^{\rm th}$ and $\varepsilon'''=0$.
\begin{Lmm}
\label{lmm:duality}
Assumption SM-PEC-R\'{e}nyi implies
\begin{equation}
H_{\rm min}(\rho^{{\rm sif}\text{-}{\rm aq}}_{AE}|E)\ge n-H_{1/2}^{\rm th}.
\label{eq:H_min_bounded_by_H_12_th}
\end{equation}
\end{Lmm}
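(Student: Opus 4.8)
The plan is to verify the defining inequality of the conditional min-entropy directly: I would exhibit one normalized state $\sigma_E$ for which
\[
2^{-\left(n-H_{1/2}(X^\text{pre-vq3})\right)}\,\II_A\otimes\sigma_E\ \ge\ \rho^\text{sif-aq}_{AE}.
\]
By the definition of $H_{\rm min}$ (the maximum $\lambda$ with $2^{-\lambda}\II_A\otimes\sigma_E\ge\rho_{AE}$), this gives $H_{\rm min}(\rho^\text{sif-aq}_{AE}|E)\ge n-H_{1/2}(X^\text{pre-vq3})$, and Assumption SM-PEC-R\'{e}nyi, $H_{1/2}(X^\text{pre-vq3})\le H_{1/2}^\text{th}$, then yields (\ref{eq:H_min_bounded_by_H_12_th}). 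Here I will use the identity $2^{H_{1/2}(X)}=\left(\sum_x\sqrt{\Pr(X=x)}\right)^2$.

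First I would pass to the pure state. By Lemma \ref{lmm:virtualQKD3}, $\ket{\Psi^\text{pre-vq3}}_{A\bar AE}$ is a semi-purification of $\rho^\text{sif-aq}_{AE}$, so (Definition \ref{dfn:associated_state}) tracing out $\bar A$ and dephasing ${\cal H}_A$ in the $Z$ basis returns $\rho^\text{sif-aq}_{AE}$. Expanding ${\cal H}_A$ in the phase basis, I write $\ket{\Psi^\text{pre-vq3}}=\sum_{x}\ket{\widetilde x}_A\otimes\ket{\phi_x}_{\bar AE}$, where $\langle\phi_x|\phi_x\rangle=\Pr(X^\text{pre-vq3}=x)=:p_x$ by (\ref{eq:Pr_X_prevq3_defined}). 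A short computation using $\langle a|\widetilde x\rangle=2^{-n/2}(-1)^{a\cdot x}$ then shows that the cq state is block diagonal, $\rho^\text{sif-aq}_{AE}=\sum_a\ket a\bra a_A\otimes\rho^a_E$, with each block given by $\rho^a_E=2^{-n}\,{\rm Tr}_{\bar A}\ket{\Phi_a}\bra{\Phi_a}$, where $\ket{\Phi_a}:=\sum_x(-1)^{a\cdot x}\ket{\phi_x}$.

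The heart of the argument is an operator Cauchy--Schwarz inequality applied with the weights $c_x=\sqrt{p_x}$. Writing $\ket{\Phi_a}=\sum_x\sqrt{c_x}\left(c_x^{-1/2}(-1)^{a\cdot x}\ket{\phi_x}\right)$ and testing against an arbitrary vector gives
\[
\ket{\Phi_a}\bra{\Phi_a}\ \le\ \Big(\sum_x\sqrt{p_x}\Big)\sum_x\frac{1}{\sqrt{p_x}}\ket{\phi_x}\bra{\phi_x},
\]
where crucially the phase-dependent signs $(-1)^{a\cdot x}$ cancel on the right-hand side, making the bound independent of $a$. Taking ${\rm Tr}_{\bar A}$ (monotone for the Loewner order) and setting $\sigma_E:=\left(\sum_x\sqrt{p_x}\right)^{-1}\sum_x p_x^{-1/2}\,{\rm Tr}_{\bar A}\ket{\phi_x}\bra{\phi_x}$ --- a normalized state, since ${\rm Tr}\,{\rm Tr}_{\bar A}\ket{\phi_x}\bra{\phi_x}=p_x$ --- yields $\rho^a_E\le 2^{-n}\left(\sum_x\sqrt{p_x}\right)^2\sigma_E$ for every $a$. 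Reassembling the $a$-blocks via $\sum_a\ket a\bra a_A=\II_A$ produces exactly the target operator inequality.

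I expect the main obstacle to be precisely this sign cancellation: one needs a \emph{single} $\sigma_E$ dominating all conditional states $\rho^a_E$ simultaneously, and it is the specific choice $c_x=\sqrt{p_x}$ (rather than, say, $c_x=p_x$) that both eliminates the $a$-dependence and produces the factor $\left(\sum_x\sqrt{p_x}\right)^2=2^{H_{1/2}(X^\text{pre-vq3})}$ matching the R\'{e}nyi-$1/2$ entropy. A secondary point to check is that $\ket{\Psi^\text{pre-vq3}}$ is only a semi-purification, so the extra ancilla $\bar A$ is present; this is harmless because every step uses only ${\rm Tr}_{\bar A}$ and never purity of the $AE$ marginal. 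As an alternative route, the same bound follows conceptually from the min/max entropic uncertainty relation for the complementary $Z$- and $X$-measurements on ${\cal H}_A$, but the direct Cauchy--Schwarz computation is self-contained and keeps $\sigma_E$ explicit.
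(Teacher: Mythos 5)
Your proposal is correct, but it takes a genuinely different route from the paper. The paper's proof is modular: it applies the min/max entropic uncertainty relation $H_{\rm min}(\rho_{AE}^{\text{pre-vq3},Z\text{-mea}}|E)+H_{\rm max}(\rho_{A\bar{A}}^{\text{pre-vq3},X\text{-mea}}|\bar{A})\ge n$ (theorem 7.1 of ref.~\cite{TomamichelPhD}) to the pure state $\ket{\Psi^{\text{pre-vq3}}}$, then uses the semi-purification property to identify the $Z$-measured min-entropy with $H_{\rm min}(\rho^{\text{sif-aq}}_{AE}|E)$, the data-processing inequality to drop the conditioning system $\bar{A}$ from the max-entropy, and finally the identification of $H_{1/2}(X^{\text{pre-vq3}})$ with the unconditional max-entropy of the $X$-measured state. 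You instead prove the needed special case of that uncertainty relation from scratch: expanding $\ket{\Psi^{\text{pre-vq3}}}$ in the phase basis, applying the operator Cauchy--Schwarz inequality with weights $c_x=\sqrt{p_x}$ so that the $a$-dependent signs $(-1)^{a\cdot x}$ drop out, and thereby exhibiting a single explicit normalized $\sigma_E$ with $\rho^{\text{sif-aq}}_{AE}\le 2^{-(n-H_{1/2}(X^{\text{pre-vq3}}))}\,\II_A\otimes\sigma_E$; your verification steps (normalization of $\sigma_E$, monotonicity of the Loewner order under ${\rm Tr}_{\bar A}$, harmlessness of the ancilla $\bar A$) are all sound, modulo the trivial convention that terms with $p_x=0$ (hence $\ket{\phi_x}=0$) are omitted from the sums. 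What each approach buys: the paper's argument leans on established machinery, stays uniform with the rest of its entropic framework, and would extend directly to smoothed or conditional variants; yours is self-contained, elementary, and makes transparent both why the R\'enyi-$1/2$ entropy is the natural quantity (it is exactly the Cauchy--Schwarz constant $(\sum_x\sqrt{p_x})^2$) and what the witness state $\sigma_E$ achieving the min-entropy bound actually is.
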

The proof is a direct consequence of the duality relation between the conditional min- and max- entropies \cite{TomamichelPhD}; see appendix \ref{sec:proof_duality}.

Thus the situation is now reduced to that of the LHL-type schemes of section \ref{sec:LHL_proof_to_PEC_proof}, and the security can be proved by the LHL-based approach and also by the PEC-based approach.
The proof of the PEC-based approach can be given by repeating the discussion of section \ref{sec:conversion_to_PEC} or \ref{sec:brief_review_PEC_approach}, and the result is the security bound
\begin{equation}
D_1(\rho_{KK'EG}^\text{fin-aq})\le2^{-l+1} + 2^{\frac12[m-n+H_{\rm 1/2}^{\rm th}+3]}.
\end{equation}
If one wishes to convert this proof to the LHL-based proof, one can use the discussion of section \ref{sec:review_LHL_approach} and obtain
\begin{equation}
D_1(\rho_{KK'EG}^\text{fin-aq})\le2^{-l+1} + 2^{\frac12[m-n+H_{\rm 1/2}^{\rm th}]}.
\end{equation}

\section{Conclusion}
We have shown the equivalence between the two major approaches to the security proof of quantum key distribution (QKD), which had generally been regarded as independent of each other; namely, the phase error correction (PEC)-based approach and the leftover hashing lemma (LHL)-based approach.
These are also referred to as the Shor-Preskill-Mayers approach and as Renner's approach.

In order to show the equivalence, we gave an explicit procedure for converting security proofs by one approach to those of another approach.
The conversions are made step by step in a constructive manner so that the relation between the two approaches is kept transparent.
The security bounds obtained are the same regardless of the approach, except for the presence of an inessential constant factor.

\ 

\noindent{\bf Acknowledgment}

We thank Kiyoshi Tamaki and Akihiro Mizutani for comments and encouragement.
This work was partially supported by ImPACT Program of Council for Science, Technology and Innovation (Cabinet Office, Government of Japan).

\appendices

\section{Notation}
\label{sec:notation}
Random variables are written in capital letters.
Modulo 2 is always assumed in arithmetic of binary strings; e.g., the inner product of binary strings $x=(x_1,\cdots,x_d)$, $z=(z_1,\cdots,z_d)\in\{0,1\}^d$ is denoted $x\cdot z=\sum_{i=1}^d x_iz_i\, \mod\,  2$.
We use the usual representation of the Pauli matrices, $Z=\begin{pmatrix}1&0\\0&-1 \end{pmatrix}$, $X=\begin{pmatrix}0&1\\1&0 \end{pmatrix}$.
In a $d$-qubit space, $Z,X$ of the $i$-th qubit are often denoted $Z_i$, $X_i$.
A tensor product of $Z_i$ is often abbreviated as $Z^z=\bigotimes_{i-1}^d Z_i^{z_i}=Z^{z_1}\otimes\cdots\otimes Z^{z_d}$; $X^x$ is also defined similarly.
Eigenvectors of $Z$ operators are denoted $\ket{z}$ for $z\in\{0,1\}^d$, satisfying $Z^{z'}\ket{z}=(-1)^{z'\cdot z}\ket{z}$.
Eigenvectors of $X$ operators are chosen to be their Fourier (or Hadamard) transforms
\begin{equation}
\ket{\widetilde{x}}=2^{-d/2}\sum_{z}(-1)^{x\cdot z}\ket{z}
\end{equation}
for $x\in\{0,1\}^d$; note that these vectors satisfy $X^{x'}\ket{\widetilde{x}}=(-1)^{x'\cdot x}\ket{\widetilde{x}}$.
We often call $\ket{z}$ the bit basis, and $|\widetilde{x}\rangle$ the phase basis.

Completely positive trace-preserving (CPTP) maps are denoted by $\Pi$; e.g., $\Pi^{\rm vp}_{A|G|A'}$ for the virtual PA algorithm defined in section \ref{sec:obtained_virtual_PA}.
Subscripts of $\Pi$ indicate systems on which it operate. 
When we wish to indicate clearly that different systems have a different role (such as the input and the output registers, and ancillas), we insert symbol `$|$' between them.

We denote by operator $\Pi^\text{$Z$-mea}$ an operation of erasing non-diagonal elements in the $Z$ basis, $\Pi^\text{$Z$-mea}_D(\rho_D)=\sum_{z\in\{0,1\}^d}\ket{z}\bra{z}_D\rho_D\ket{z}\bra{z}_D$, and call it the {\it measurement} in the $Z$-basis, with a slight abuse of terminology.
Similarly, we also define the $X$-basis measurement by $\Pi^\text{$X$-mea}_D(\rho_D)=\sum_{x\in\{0,1\}^d}\ket{\widetilde{x}}\bra{\widetilde{x}}_D\rho_D\ket{\widetilde{x}}\bra{\widetilde{x}}_D$.
Whenever we discuss phase error correction, we choose $\ket{\widetilde{0}}$ to be the correct phase, and consider all other phases as errors; hence the phase error rate of a sub-normalized state $\rho_{D}$ is given by $P^{\rm ph}_{D}(\rho_{D}):=1-{\rm Tr}\left\{\ket{\widetilde{0}}\bra{\widetilde{0}}_{D}\rho_{D}\right\}$.

In a composite space of two $d$-qubit spaces ${\cal H}_{A} \otimes {\cal H}_{B}$, we define a CNOT operation $N_{A|B}$ by
\begin{eqnarray}
N_{A|B}
&=&\sum_{z}\Ket{z}\Bra{z}_{A}\otimes X^{z}_{B}=\sum_{x} Z^{x}_{A}\otimes\Ket{\widetilde{x}}\Bra{\widetilde{x}}_{B}.
\label{eq:N_D1_D2_defined}
\end{eqnarray}

A positive semi-definite operator $\rho$ is normalized if ${\rm Tr}\rho=1$, and sub-normalized if ${\rm Tr}\rho\le1$.
A (sub-normalized) state $\rho_{AB}$ in ${\cal H}_{A}\otimes{\cal H}_{B}$ is a (sub-normalized) classical-quantum (cq) state, if its ${\cal H}_{A}$ part is diagonalized with a basis $\{\, \ket{v_i}\,\}$: $\rho_{AB}=\sum_i\ket{v_i}\bra{v_i}_A\otimes\rho_B^i$.

The $L_1$-norm (or 1-norm) of a matrix $M$ is defined by $\left\|M\right\|_1:={\rm Tr}\left\{\left|M\right|\right\}={\rm Tr}\left\{\sqrt{M^\dagger M}\right\}$.

For sub-normalized states $\rho,\tau$, the $L_1$-distance is defined to be $\|\rho-\tau\|_1$, and the generalized quantum fidelity $F$ and the purified distance $P$ are defined to be
\begin{eqnarray}
F(\rho, \tau)&:=&\left\|\sqrt{\rho}\sqrt{\tau}\right\|_1+\sqrt{(1-{\rm Tr}\rho)(1-{\rm Tr}\tau)}\nonumber\\
&=&
{\rm Tr}\left\{\sqrt{\sqrt{\rho}\tau\sqrt{\rho}}\right\}\nonumber\\
&&+\sqrt{(1-{\rm Tr}\rho)(1-{\rm Tr}\tau)},\\
P(\rho,\tau)&:=&\sqrt{1-F(\rho,\tau)^2}
\end{eqnarray}
(see refs. \cite{PhysRevA.66.042304,Rastegin_2003,2006quant.ph..2112R,PhysRevA.71.062310,TomamichelPhD}).
$F$ equals the usual fidelity, as defined in \cite{Nielsen-Chuang}, when at least one of $\rho,\tau$ is normalized.
$P$ is an upper bound on the $L_1$-distance: $\|\rho-\tau\|_1\le P(\rho,\tau)$.
Sub-normalized states $\rho,\tau$ are called $\varepsilon$-close, and denoted $\rho\approx_\varepsilon \tau$, when $P(\rho,\tau)\le\varepsilon$.

\section{Proof of theorem \ref{thm_coding_theorem_vpec}}
\label{sec:proof_of_theorem}
First note that the both sides of inequality (\ref{eq:inequality_coding_theorem}) scale linearly with respect to the norm of input sub-normalized state, ${\rm Tr}(\rho^\text{ini-ap})={\rm Tr}(\rho^{\text{pec-vp},g})$.
Hence it suffices to prove this theorem for the normalized case, i.e. for ${\rm Tr}(\rho^\text{ini-ap})=1$.
For this particular case, we divide theorem \ref{thm_coding_theorem_vpec} into two lemmas as follows.
\begin{Lmm}
\label{lmm:Pph_upperbound}
Consider the setting of virtual PA scheme $(\Pi^{\rm vp},\ket{\Psi^{{\rm ini}\text{-}{\rm ap}}})$, and suppose that $\rho^{{\rm ini}\text{-}{\rm ap}}$ is normalized.
Then by an appropriate choice of the ancilla measurement $M^{g,s}$, the failure probability $P^{\rm ph}_A\left(\rho^{{\rm pec}\text{-}{\rm vp},g}\right)$ of phase error correction $\Pi^{{\rm pec},g}$ can be bounded from above as
\begin{eqnarray}
P^{\rm ph}_A\left(\rho^{{\rm pec}\text{-}{\rm vp},g}\right)
 &\le&1-F\left(\rho_{KE}^{\rm ideal}, \rho_{KE}^{{\rm fin}\text{-}{\rm vp},g}\right)^2,
\label{eq:lemma1}
\end{eqnarray}
where $\rho_{KE}^{\rm ideal}$ is defined in (\ref{eq:rho_KE_ideal_defined}).
\end{Lmm}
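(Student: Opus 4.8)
The plan is to construct the required ancilla measurement $M^{g,s}$ from Uhlmann's theorem. Since $\rho^\text{ini-ap}$ is normalized, both $\rho_{KE}^{\rm ideal}$ and $\rho_{KE}^{{\rm fin}\text{-}{\rm vp},g}$ have unit trace, so the generalized fidelity collapses to the ordinary one and Uhlmann's theorem applies without the sub-normalization correction term.

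First I would rewrite the left-hand side of (\ref{eq:lemma1}) as an explicit optimization over $M^{g,s}$. Expanding the purification in the phase basis of ${\cal H}_A$, $\ket{\Psi^\text{ini-ap}}=\sum_x\ket{\widetilde{x}}_A\otimes\ket{\phi_x}_{EA'}$, and tracing through the three substeps of $\Pi^{{\rm pec},g}$ (syndrome measurement, ancilla measurement with outcome $e$, and the phase flip $Z^e$, which sends $\ket{\widetilde{x}}\mapsto\ket{\widetilde{x+e}}$), one sees that only the term $x=e$ survives the overlap with $\ket{\widetilde{0}}_A$. Hence $1-P^{\rm ph}_A(\rho^{{\rm pec}\text{-}{\rm vp},g})=\bra{\widetilde{0}}_A\rho^{{\rm pec}\text{-}{\rm vp},g}_A\ket{\widetilde{0}}_A=\sum_e\bra{\phi_e}E^{g,s,e\dagger}E^{g,s,e}\ket{\phi_e}$, where $s$ is the syndrome of $e$. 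Choosing $M^{g,s}$ is therefore equivalent to choosing a POVM on ${\cal H}_{A'}$ that guesses the phase value $e$ of ${\cal H}_A$, and the quantity to be maximized is exactly the associated guessing probability.

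Next I would produce two convenient purifications on ${\cal H}_{KAA'E}$ obtained from one common isometry, namely the coherent bit-basis measurement $V_{\bar Z}:\ket{a}_A\mapsto\ket{g(a)}_K\ket{a}_A$. Applying $V_{\bar Z}$ to $\ket{\Psi^\text{ini-ap}}$ yields a purification $\ket{\Omega}$ of $\rho_{KE}^{{\rm fin}\text{-}{\rm vp},g}$; here I use lemma \ref{lmm:virtuality_virtual_PA}, which guarantees that the phase operations of $\Pi^{{\rm pec},g}$ leave $\rho_{KE}$ untouched, so that no phase correction need appear in $\ket{\Omega}$. Applying the same $V_{\bar Z}$ to the ideal input $\ket{\widetilde{0}}_A\otimes\ket{\eta}_{EA'}$, with $\ket{\eta}$ a purification of $\rho_E$, yields a state $\ket{\Theta}$; because $g$ is linear and surjective, the projection of $\ket{\widetilde{0}}_A$ onto each key sector is uniform and of equal weight $2^{-m}$, so $\ket{\Theta}$ purifies $\rho_{KE}^{\rm ideal}$ of (\ref{eq:rho_KE_ideal_defined}). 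The salient point is that $\ket{\Theta}$ carries ${\cal H}_A$ in the perfectly phase-corrected configuration while $\ket{\Omega}$ carries the uncorrected phase, and the two differ only by a recovery acting on the purifying system ${\cal H}_{AA'}$.

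By Uhlmann's theorem \cite{Nielsen-Chuang}, there is a unitary $U_{AA'}$ on the purifying system with $\bra{\Theta}(\II_{KE}\otimes U_{AA'})\ket{\Omega}=F(\rho_{KE}^{\rm ideal},\rho_{KE}^{{\rm fin}\text{-}{\rm vp},g})$. I would then read off $M^{g,s}$ from $U_{AA'}$ and show that the resulting $\Pi^{{\rm pec},g}$ steers ${\cal H}_A$ into $\ket{\widetilde{0}}_A$ with probability at least $|\bra{\Theta}(\II_{KE}\otimes U_{AA'})\ket{\Omega}|^2=F^2$, which is precisely $1-P^{\rm ph}_A(\rho^{{\rm pec}\text{-}{\rm vp},g})\ge F^2$, i.e.\ (\ref{eq:lemma1}). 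The main obstacle I anticipate is this final link: the guessing probability from the second step is a property of Eve's states indexed by the \emph{phase} $e$, whereas $F$ is a property of Eve's states indexed by the \emph{key} $k$, and bridging these complementary quantities is the delicate part. Concretely, one must verify that the Uhlmann-optimal $U_{AA'}$ can be taken in the restricted form allowed by $\Pi^{{\rm pec},g}$ — an ${\cal H}_{A'}$-measurement followed by a Pauli-$Z$ correction on ${\cal H}_A$ — which I expect to follow from the commuting CSS structure ($\bar Z_i$ and $\bar X_j$ commute, and $V_{\bar Z}$ commutes with the phase operations), forcing the optimal recovery into the ancilla-controlled phase-correction form.
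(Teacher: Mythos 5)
Your preparatory steps are sound: the identity $1-P^{\rm ph}_A(\rho^{{\rm pec}\text{-}{\rm vp},g})=\sum_e\bra{\phi_e}E^{g,s,e\dagger}E^{g,s,e}\ket{\phi_e}$ (with $s$ the syndrome of $e$) is correct, the isometry $V_{\bar Z}$ does yield purifications $\ket{\Omega}$ of $\rho_{KE}^{{\rm fin}\text{-}{\rm vp},g}$ and $\ket{\Theta}$ of $\rho_{KE}^{\rm ideal}$, and Uhlmann's theorem is indeed the engine of the paper's own proof. But the final link is not merely ``delicate''---it is broken in two places. First, the step you defer (that the Uhlmann-optimal $U_{AA'}$ can be forced into the restricted form: syndrome measurement on ${\cal H}_A$, POVM on ${\cal H}_{A'}$, then $Z^e$) is the entire content of the lemma, and the CSS commutation relations do not supply it: Uhlmann hands you an \emph{arbitrary} unitary on ${\cal H}_{AA'}$, whereas $\Pi^{{\rm pec},g}$ may touch ${\cal H}_A$ only through phase-basis projections and Pauli-$Z$ corrections. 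Second, and fatally, your ``salient point'' that $\ket{\Theta}$ carries ${\cal H}_A$ in the perfectly phase-corrected configuration is false. Since $\ket{\Theta}=2^{-n/2}\sum_a\ket{g(a)}_K\otimes\ket{a}_A\otimes\ket{\eta}_{EA'}$, the register ${\cal H}_A$ has zero \emph{syndrome} but its \emph{logical} phase is maximally entangled with ${\cal H}_K$ (as it must be, by complementarity, because the key is uniform); explicitly, $\left\|\left(\ket{\widetilde{0}}\bra{\widetilde{0}}_A\otimes\II\right)\ket{\Theta}\right\|^2=2^{-m}$. Hence even a recovery that steers the state \emph{exactly} onto $\ket{\Theta}$ (Uhlmann overlap $F=1$) would leave success probability $2^{-m}$, not $F^2$: the inequality ``$1-P^{\rm ph}_A\ge F^2$'' simply does not follow from overlap with $\ket{\Theta}$. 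This is precisely the key-versus-phase tension you flagged, and your plan does not bridge it.

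The paper closes both gaps by putting ${\cal H}_A$ on the \emph{other} side of the Uhlmann divide. After rotating by a unitary $V$ so that $g$ becomes truncation (${\cal H}_A={\cal H}_K\otimes{\cal H}_S$), it chooses the specific purification $\ket{\Psi^{\rm ini}}=\sum_a\ket{a}_A\otimes\ket{\psi^a}_{EA_1}\otimes\ket{a}_{A_2}$ containing a bit-basis replica ${\cal H}_{A_2}={\cal H}_{K'}\otimes{\cal H}_{S'}$ of ${\cal H}_A$ inside Alice's ancilla, and applies $\Pi^{\rm cnot}_{S'|S}$ to both the actual and the ideal purifications. Both then purify $\ket{0}\bra{0}_S\otimes(\cdot)_{KE}$, i.e.\ \emph{all} of ${\cal H}_A$ sits on the system side, so Uhlmann's unitary $T_{A'}$ acts on ${\cal H}_{A'}$ alone---where it is harmless, being absorbed into the arbitrary measurement $M^{g,s}$. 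The only gates touching ${\cal H}_A$ are CNOTs, and the identity $N_{A|B}=\sum_x Z^x_A\otimes\ket{\widetilde{x}}\bra{\widetilde{x}}_B$ shows each one is literally of the allowed form ``measure the replica in the phase basis, then apply $Z^e$''; moreover the final CNOT maps the ideal purification onto $\ket{\Phi}=\Pi^{\rm cnot}_{A|A_2}\ket{\Psi^{\rm ideal}}=\ket{\widetilde{0}}_A\otimes\ket{\varphi}_{EA_1}\otimes\ket{0}_{A_2}$, a genuine success state, so that the Uhlmann overlap $F$ really does lower-bound the success amplitude of $W_{AA'}=\Pi^{\rm cnot}_{A|A_2}\Pi^{\rm cnot}_{S'|S}T_{A'}\Pi^{\rm cnot}_{S'|S}$. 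In short, the decomposition you hoped would ``follow from the commuting CSS structure'' is instead \emph{engineered} by the choice of purification and the double-CNOT sandwich; without that construction---or some equally explicit substitute---your proposal does not close.
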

%Note that the right hand side of (\ref{eq:lemma1}) coincides with the squared purified distance $P(\rho_{KE}^{\rm ideal}, \rho_{KE}^g)^2$, introduced in ref. \cite{TomamichelPhD}.

\begin{Lmm}
\label{lmm:universal_hashing}
Under the setting of lemma \ref{lmm:Pph_upperbound}, by averaging the right hand side of Inequality (\ref{eq:lemma1}) with respect to a universal$_2$ hash function $G$, we obtain
\begin{align}
&\sum_{g}\Pr(G=g)\,\left( 1-F\left(\rho_{KE}^{\rm ideal}, \rho_{KE}^{{\rm fin}\text{-}{\rm vp},g}\right)^2\right)\nonumber\\
&\quad\quad\le2^{m-H_{\rm min}(\rho_{AE}^{{\rm ini}\text{-}{\rm ap}}|E)}.
\label{eq:lemma2}
\end{align}
\end{Lmm}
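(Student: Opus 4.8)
The plan is to prove inequality (\ref{eq:lemma2}) as a statement purely about the generalized fidelity $F(\rho_{KE}^{\rm ideal},\rho_{KE}^{{\rm fin}\text{-}{\rm vp},g})$, deferring all reference to phase error correction (that role is already played by lemma~\ref{lmm:Pph_upperbound}). Since the preamble of the appendix reduces everything to the normalized case ${\rm Tr}\,\rho^{{\rm ini}\text{-}{\rm ap}}=1$, both $\rho_{KE}^{\rm ideal}$ and $\rho_{KE}^{{\rm fin}\text{-}{\rm vp},g}$ are normalized, so the cross term in the generalized fidelity vanishes and $F$ reduces to the ordinary fidelity $\|\sqrt{\cdot}\,\sqrt{\cdot}\,\|_1$. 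First I would exploit the fact that both states are classical on ${\cal H}_K$: writing $\rho_{KE}^{{\rm fin}\text{-}{\rm vp},g}=\sum_k\ket{k}\bra{k}\otimes\rho_E^{{\rm fin},g,k}$ with $\rho_E^{{\rm fin},g,k}=\sum_{a\in g^{-1}(k)}\rho_E^{{\rm ini},a}$ (eq.~(\ref{eq:final_state_rhogk})) and $\rho_{KE}^{\rm ideal}=2^{-m}\sum_k\ket{k}\bra{k}\otimes\rho_E$ (eq.~(\ref{eq:rho_KE_ideal_defined})), the block-diagonal structure gives $F_g:=F(\rho_{KE}^{\rm ideal},\rho_{KE}^{{\rm fin}\text{-}{\rm vp},g})=2^{-m/2}\sum_k\|\sqrt{\rho_E}\sqrt{\rho_E^{{\rm fin},g,k}}\|_1$. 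One checks that when all $\rho_E^{{\rm ini},a}$ are equal this equals $1$, so that $1-F_g^2$ precisely measures the deviation of Eve's conditional states from uniformity.

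The heart of the argument is a lower bound on $F_g^2$, equivalently an upper bound on $1-F_g^2$, that is genuinely \emph{second order} in the family $\{\rho_E^{{\rm ini},a}\}$. I stress this because the naive route through the trace distance, e.g.\ $1-F_g^2\le 2(1-F_g)\le\|\rho_{KE}^{\rm ideal}-\rho_{KE}^{{\rm fin}\text{-}{\rm vp},g}\|_1$, only produces a first-order quantity scaling like $2^{\frac12[m-H_{\rm min}]}$ and can never reach the claimed $2^{m-H_{\rm min}}$; the quadratic (collision) structure is essential. To obtain it I would invoke Uhlmann's theorem and construct an explicit, near-optimal alignment of purifications of $\rho_{KE}^{\rm ideal}$ and $\rho_{KE}^{{\rm fin}\text{-}{\rm vp},g}$, built from the min-entropy optimizer $\sigma_E$, which by definition satisfies $\rho_{AE}^{{\rm ini}\text{-}{\rm ap}}\le 2^{-H_{\rm min}(\rho_{AE}^{{\rm ini}\text{-}{\rm ap}}|E)}\,\II_A\otimes\sigma_E$, i.e.\ $\rho_E^{{\rm ini},a}\le 2^{-H_{\rm min}}\sigma_E$ for every $a$. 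Conjugating by $\sigma_E^{-1/2}$ renders each $\sigma_E^{-1/2}\rho_E^{{\rm ini},a}\sigma_E^{-1/2}\le 2^{-H_{\rm min}}\II$ a bounded operator, which is exactly what lets me linearize the square root of the sum $\rho_E^{{\rm fin},g,k}=\sum_{a\in g^{-1}(k)}\rho_E^{{\rm ini},a}$ and reduce $1-F_g^2$ to a collision-type double sum $\sum_{a\neq a'}1[g(a)=g(a')]\,c_{a,a'}$, with $c_{a,a'}$ a $\sigma_E$-weighted bilinear expression in $\rho_E^{{\rm ini},a}$ and $\rho_E^{{\rm ini},a'}$.

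Finally I would average over the random hash function. The universal$_2$ property, namely $\sum_g\Pr(G=g)\,1[g(a)=g(a')]\le 2^{-m}$ for $a\neq a'$, pulls a factor $2^{-m}$ in front of the double sum, and the remaining $\sum_{a\neq a'}c_{a,a'}$ is controlled using $\sigma_E^{-1/2}\rho_E^{{\rm ini},a}\sigma_E^{-1/2}\le 2^{-H_{\rm min}}\II$ together with the normalization $\sum_a{\rm Tr}\,\rho_E^{{\rm ini},a}={\rm Tr}\,\rho_E=1$; these combine through the bookkeeping $2^{m}\cdot 2^{-m}\cdot 2^{-H_{\rm min}}$ that collapses to the advertised $2^{m-H_{\rm min}(\rho_{AE}^{{\rm ini}\text{-}{\rm ap}}|E)}$. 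I expect the main obstacle to be the middle step: converting a fidelity that involves the square root of the \emph{sum} $\sum_{a\in g^{-1}(k)}\rho_E^{{\rm ini},a}$ into a bilinear form in the individual $\rho_E^{{\rm ini},a}$ without losing the tight constant. This is precisely where the boundedness $\sigma_E^{-1/2}\rho_E^{{\rm ini},a}\sigma_E^{-1/2}\le 2^{-H_{\rm min}}\II$ supplied by the conditional min-entropy must be used with care, since a crude operator bound would spoil either the $2^{-m}$ gain from universality or the linear dependence on $2^{-H_{\rm min}}$.
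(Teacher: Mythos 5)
Your high-level reading of the problem is correct: the bound must be quadratic (collision-type) in the family $\{\rho_E^{{\rm ini},a}\}$, the universal$_2$ property must be spent on pairs $a\neq a'$, and a first-order detour through the trace distance can never reach $2^{m-H_{\rm min}}$. But the step you yourself flag as ``the main obstacle'' is a genuine gap, and the mechanism you propose for it does not work. The missing ingredient is a noncommutative analogue of ``squared Hellinger distance $\le$ $\chi^2$-divergence'': a lower bound on $F\left(\rho_{KE}^{\rm ideal},\rho_{KE}^{{\rm fin}\text{-}{\rm vp},g}\right)^2$ by one over a sandwiched collision quantity. In the paper this is exactly where the machinery of sandwiched R\'enyi entropies enters: with $\widetilde{H}_{1/2}^\downarrow(K|E)_\rho=2\log_2 F(\II_K\otimes\rho_E,\rho_{KE})$ and $\widetilde{H}_{2}^\downarrow(K|E)_\rho=-\log_2{\rm Tr}\{(\rho_{KE}(\II_K\otimes\rho_E^{-1/2}))^2\}$, the paper invokes $\widetilde{H}_{1/2}^\downarrow\ge\widetilde{H}_{2}^\downarrow$ from ref.~\cite{TBH14} and then $1-1/x\le x-1$, which is what converts $1-F^2$ into the collision form (\ref{eq:lemma2:1-2mF}). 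That inequality is a real theorem (its known proofs go through operator convexity or interpolation); Uhlmann's theorem plus the operator bound $\sigma_E^{-1/2}\rho_E^{{\rm ini},a}\sigma_E^{-1/2}\le 2^{-H_{\rm min}}\II$ does not deliver it, and your sketch asserts the ``linearization'' rather than proving it.

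Second, and independently, your plan anchors the collision quantity to the min-entropy optimizer $\sigma_E$, whereas the fidelity you must bound is anchored to Eve's actual marginal $\rho_E$ through $\rho_{KE}^{\rm ideal}$ of (\ref{eq:rho_KE_ideal_defined}). The paper's averaging step (\ref{eq:proof_lemma2_EFTr_rho}) works only because the conjugation is by $\rho_E^{-1/4}$: after applying universality ($1[a=a']+2^{-m}$), the $2^{-m}$-weighted \emph{full} double sum collapses exactly, $\sum_{a,a'}{\rm Tr}\{(\rho_E^{-1/4}\rho_E^{{\rm ini},a}\rho_E^{-1/4})(\rho_E^{-1/4}\rho_E^{{\rm ini},a'}\rho_E^{-1/4})\}={\rm Tr}\,\rho_E=1$, cancelling the $-1$, while the surviving diagonal part carries the bound $2^{m-\widetilde{H}_2^\downarrow(A|E)}$; the final step $H_{\rm min}\le\widetilde{H}_2^\downarrow(A|E)$, eq.~(\ref{eq:lemma2:H_Alpha_relation}), is nontrivial precisely because $\widetilde{H}_2^\downarrow$ is conditioned on $\rho_E$ rather than on the optimizer, and is again imported from ref.~\cite{TBH14}. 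If you instead conjugate by $\sigma_E^{-1/4}$ so as to use the elementary bound, the off-diagonal sum no longer telescopes: bounding each $c_{a,a'}\le 2^{-H_{\rm min}}{\rm Tr}\,\rho_E^{{\rm ini},a'}$ gives $\sum_{a\neq a'}c_{a,a'}\le(2^n-1)\,2^{-H_{\rm min}}$, hence a final bound of order $2^{n-m-H_{\rm min}}$ instead of $2^{m-H_{\rm min}}$ (off by $2^{n-2m}$), and bounding the sum via $\rho_E\le 2^{n-H_{\rm min}}\sigma_E$ is equally lossy. A symptom of this confusion is your bookkeeping: $2^m\cdot 2^{-m}\cdot 2^{-H_{\rm min}}$ equals $2^{-H_{\rm min}}$, not $2^{m-H_{\rm min}}$; in the correct accounting the diagonal terms carry $2^m\cdot 2^{-H_{\rm min}}$ and the $2^{-m}$ from universality is spent entirely on neutralizing the off-diagonal sum. (Renner's 2-norm argument, lemma~5.4.3 of ref.~\cite{RennerPhD}, does accommodate an arbitrary $\sigma_E$ because he expands the square of the \emph{difference} $\rho_{KE}-\rho_{KE}^{\rm ideal}$, so the ideal-state term itself provides the cancellation --- but that route bounds $d_1$, not $1-F^2$, and therefore cannot yield (\ref{eq:lemma2}) as required by lemma~\ref{lmm:Pph_upperbound}.)
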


Below we give the proofs of these lemmas.

\subsection{Proof of lemma \ref{lmm:Pph_upperbound}}
\label{sec:proof_P_ph_upperbound}

In this proof, we abbreviate $\rho^\text{ini-ap}$, $\ket{\Psi^\text{ini-ap}}$ and $\rho^{{\rm pec}\text{-}{\rm vp},g}$, as $\rho^\text{ini}$, $\ket{\Psi^\text{ini}}$ and $\rho^{{\rm pec},g}$, respectively.

\subsubsection{Reduction to a special case}
\label{sec:reduction_to_special_case}
First we show that it suffices to prove this lemma only for the special case where $g$ is chosen such that hash value $k$  equals the most significant $m$ bits of sifted key $a$, and the syndrome measurement are phase value of the least significant $n-m$ qubits.

\begin{proposition}
There exists a unitary transform $V$ in an $n$-qubit space that transforms the logical $Z$ operators $\bar{Z}_i=Z^{g_i}$ and the syndrome measurement operators $\bar{X}_i=X^{h_i}$ to the normal Pauli matrices acting on the $i$th qubit:
\begin{eqnarray}
V \bar{Z}_i V^\dagger&=&Z_i,\label{eq:barZ_transformed_by_V}\\
V \bar{X}_i V^\dagger&=&X_i.\label{eq:barX_transformed_by_V}
\end{eqnarray}
\end{proposition}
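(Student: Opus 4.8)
The plan is to realize $V$ concretely as a \emph{linear} (CNOT-type) Clifford circuit, exploiting the fact that the map demanded by \eqref{eq:barZ_transformed_by_V}--\eqref{eq:barX_transformed_by_V} never mixes bit and phase: each $\bar Z_i=Z^{g_i}$ is a pure $Z$-type operator sent to a pure $Z$-type operator $Z_i$, and each $\bar X_j=X^{h_j}$ is a pure $X$-type operator sent to a pure $X$-type operator. Hence no Hadamards are needed, and it suffices to take $V=U_R$, the permutation unitary defined by $U_R\ket{z}=\ket{Rz}$ for some invertible $R\in GL(n,\FF_2)$ (a circuit built from the CNOT gates $N_{A|B}$ of \eqref{eq:N_D1_D2_defined}). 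A direct computation gives the conjugation rules $U_R\,Z^{w}\,U_R^\dagger=Z^{(R^{-1})^T w}$ and $U_R\,X^{x}\,U_R^\dagger=X^{Rx}$, so the whole problem collapses to producing a single invertible $R$ that performs both tasks simultaneously.

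First I would translate the two target identities into linear conditions on the rows of $R$. Requiring $U_R Z^{g_i}U_R^\dagger=Z^{e_i}$ for $i=1,\dots,m$ is equivalent to $(R^{-1})^T g_i=e_i$, i.e.\ to fixing the $i$-th row of $R$ to equal $g_i$. Requiring $U_R X^{h_j}U_R^\dagger=X^{e_{m+j}}$ (the syndrome qubits being the least significant $n-m$ ones, which is the content of ``$X_i$'' in the statement) is equivalent to $R\,h_j=e_{m+j}$. Here the hypothesis $hg^T=0$ does the crucial work: because $g_i\cdot h_j=0$, the first $m$ entries of $R\,h_j$ vanish automatically, so it only remains to pick the last $n-m$ rows $r_{m+1},\dots,r_n$ of $R$ to satisfy $r_{m+k}\cdot h_j=\delta_{kj}$. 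Such rows exist since $h_1,\dots,h_{n-m}$ are linearly independent, which makes the map $r\mapsto(r\cdot h_1,\dots,r\cdot h_{n-m})$ surjective.

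The step I expect to be the real crux is checking that the $R$ so assembled is invertible; one must resist the temptation to simply stack $g$ on top of $h$, since over $\FF_2$ the matrix $\binom{g}{h}$ can be singular (this happens exactly when $C^g\cap\ker g\neq\{0\}$). Building the lower block instead from the \emph{dual} basis of the $h_j$ makes invertibility fall out cleanly: in any relation $\sum_i a_i g_i+\sum_k b_k r_{m+k}=0$, pairing with $h_j$ annihilates every $g$-term (by $g_i\cdot h_j=0$) and isolates $b_j=0$, after which the linear independence of $g_1,\dots,g_m$ forces all $a_i=0$; so the rows of $R$ are independent and $V=U_R$ satisfies both \eqref{eq:barZ_transformed_by_V} and \eqref{eq:barX_transformed_by_V}. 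Conceptually this is nothing but the assertion that the mutually commuting independent Pauli set $\{\bar Z_i,\bar X_j\}$ spans a Lagrangian subspace of the same symplectic type as $\{Z_i,X_{m+j}\}$, so that a symplectic—indeed purely linear—change of basis relates them; the explicit $R$ merely makes this transparent and fixes the signs with no extra Pauli correction.
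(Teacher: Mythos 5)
Your proposal is correct and follows essentially the same route as the paper: the paper's $V\ket{a}=\ket{av}$ with $v=(g_1^T,\dots,g_n^T)$ and $g_{m+i}\cdot h_j=\delta_{ij}$ is exactly your linear permutation unitary $U_R$ whose first $m$ rows are the $g_i$ and whose last $n-m$ rows form a dual basis to the $h_j$ (written in row-vector rather than column-vector convention). The only difference is one of detail: you spell out the existence of the dual basis, the invertibility of $R$, and the correct target $X_{m+j}$ for $\bar X_j$, all of which the paper leaves as ``a straightforward calculation.''
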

\begin{proof}
Recall that $\{g_1,\cdots,g_m\}$ and $\{h_1,\cdots,h_{n-m}\}$, given in section \ref{sec:Code_PCg_defined}, are orthogonal to each other, and choose vectors $g_{m+1},\dots,g_n$ satisfying $g_{i+m}\cdot h_j=\delta_{ij}$ for $1\le i,j\le n-m$.
Consider a permutation over $n$-bit strings defined by a linear transform $a'= av$ with $v:=(g_1^T,\dots,g_n^T)$, and let $V$  be the corresponding unitary transform in the $n$-qubit space
\begin{eqnarray}
V\ket{a}&=&\ket{av}.
%=\Ket{(g_1\cdot a,\dots,g_n\cdot a)}_A=\Ket{(zg_1^T,\dots,zg_n^T)}_A
\label{eq:U_A_defined}
\end{eqnarray}
Then relations (\ref{eq:barZ_transformed_by_V}), (\ref{eq:barX_transformed_by_V}) follow from a straightforward calculation.
%\begin{eqnarray}
%V\ket{\widetilde{x}}&=&
%V\sum_z(-1)^{x\cdot z}\ket{z}=\sum_z(-1)^{x(v^{-1})^Tv^T z^T}\ket{zv}=\sum_{z}(-1)^{x(v^{-1})^T (zv)^T}\ket{zv}\nonumber\\
%&=&\sum_{z}(-1)^{(x(v^{-1})^T)\cdot (zv)}\ket{zv}=\ket{\widetilde{x(v^{-1})^T}},\\
%VZ^{z} V^\dagger&=&
%V\sum_{z'}(-1)^{z\cdot z'}\ket{z'}\bra{z'} V^\dagger=\sum_{z'}(-1)^{z\cdot z'}\Ket{z'v}\Bra{z'v}\nonumber\\
%&=&\sum_{z'}(-1)^{z\cdot (z'v^{-1})}\Ket{z'}\Bra{z'}=\sum_{z'}(-1)^{z(v^{-1})^Tz'^T}\Ket{z'}\Bra{z'}\nonumber\\
%&=&\sum_{z'}(-1)^{z(v^{-1})^T\cdot z'}\Ket{z'}\Bra{z'}=
%Z^{z(v^{-1})^T},\\
%VX^{x} V^\dagger&=&
%X^{x((v^{-1})^T)^{-1})^T}=
%X^{xv},
%\end{eqnarray}
%where we set $z''=z'v$ in the second equation, and used the fact that $1=(AA^{-1})^T=(A^{-1})^TA^T$ and thus $(A^T)^{-1}=(A^{-1})^T$ holds for any invertible matrix $A$ in the third equation.
%As a result the syndrome measurement reads for $m+1\le i\le n$, 
%\begin{eqnarray}
%R^s_A 
%&=&U^\dagger\left(2^{-(n-m)}\sum_{s'} (-1)^{k\cdot s'}\prod_{i=m+1}^{n} \left(X^{e_{i}}\right)^{s'_{i-m}}\right)U\nonumber\\
%&=&2^{-(n-m)}\sum_{s'} (-1)^{s\cdot s'} U^\dagger X^{(0^{m}\|s')}U\nonumber\\
%&=&
%U^\dagger\sum_{x\in \{0,1\}^m}\ket{\widetilde{x\| s}}\bra{\widetilde{x\| s}}U\nonumber\\
%&=&U^\dagger_A \left(\II_Z\otimes \ket{\widetilde{s}}\bra{\widetilde{s}}_S\right)U_A,
%\end{eqnarray}
%\begin{equation}
%W S^s_A W^\dagger=\II_Z\otimes \ket{\widetilde{s}}\bra{\widetilde{s}}_S.
%\end{equation}
%$C_1^f$ and $C_2^f$ of $PC^g$ become $C'_1=\{0,1\}^n$ and $C'_2=\set{(0^m\|s)|s\in\{0,1\}^{n-m}}$, and accordingly, 
\end{proof}

We transform our virtual PA by applying this $V$ to ${\cal H}_A$ in a straightforward manner:
We transform the initial state as $\Ket{\Psi^{{\rm ini}\prime}}=V_A\Ket{\Psi^{\rm ini}}$, which is also a purification of a cq state with respect to the bit basis. 
%with $V$ of (\ref{eq:U_A_defined}) being a permutation).
%$\rho'_{AE}=V_A\rho_{AE}V_A^\dagger=\sum_{a}\ket{av}\bra{av}_A\otimes \rho_E^{a}=\sum_{a}\ket{a}\bra{a}_A\otimes \rho_E^{av^{-1}}$.
We also replace all the operators $O^i_{AB}\in\left\{(\bar{X}_i)_A, E^{g,s,e}_B, (Z_i)_A, (\bar{Z}_i)_A\right\}$ used for measurement or operation in $\Pi^{{\rm vp},g}$ with $O^{\prime i}_{AB}=V_AO^i_{AB}V_A^\dagger$, and call it the transformed virtual PA algorithm $\Pi^{{\rm vp}\prime,g}$.
We also define the transformed phase error correction $\Pi^{{\rm pec}\prime,g}$ similarly.
We denote the outputs of $\Pi^{{\rm vp}\prime,g}$, $\Pi^{{\rm pec}\prime,g}$ by $\rho^{\text{fin-vp}\prime,g}=\Pi^{{\rm vp}\prime,g}(\ket{\Psi^{{\rm ini}\prime}}\bra{\Psi^{{\rm ini}\prime}})$, $\rho^{{\rm pec}\prime,g}=\Pi^{{\rm pec}\prime,g}(\ket{\Psi'}\bra{\Psi'})$ respectively.

Then it can easily be shown that the statement of lemma \ref{lmm:Pph_upperbound} is invariant under $V_A$:
\begin{proposition}
Quantities on each side of (\ref{eq:lemma1}) are invariant under $V_A$; that is,
$P^{\rm ph}_A\left(\rho^{{\rm pec}\prime,g}\right)=P^{\rm ph}_A\left(\rho^{{\rm pec},g}\right)$ and 
$1-F\left(\rho_{KE}^{{\rm ideal}\prime}, \rho_{KE}^{{\rm fin}\text{-}{\rm vp}\prime, g}\right)^2=1-F\left(\rho_{KE}^{\rm ideal}, \rho_{KE}^{{\rm fin}\text{-}{\rm vp},g}\right)^2$.
\end{proposition}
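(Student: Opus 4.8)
The plan is to propagate the unitary $V_A$ through the entire construction and then to handle the two quantities in (\ref{eq:lemma1}) separately, exploiting that $V_A$ acts only on ${\cal H}_A$ and that its fate differs depending on whether ${\cal H}_A$ is retained or traced out.

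The one structural fact I would isolate first is that $V$ merely permutes the bit basis: since $V\ket{a}=\ket{av}$ with $v=(g_1^T,\dots,g_n^T)$ invertible, the map $a\mapsto av$ is a bijection on $\{0,1\}^n$, so $V_A$ fixes the uniform superposition, $V_A\ket{\widetilde 0}_A=2^{-n/2}\sum_a\ket{av}_A=\ket{\widetilde 0}_A$. I would also record the elementary fact that, because $\Pi^{{\rm pec}\prime,g}$ is $\Pi^{{\rm pec},g}$ with every operator conjugated by $V_A$ and its input is $\ket{\Psi^{{\rm ini}\prime}}=V_A\ket{\Psi^{\rm ini}}$, the conjugation commutes past each instrument and past the final phase flip, so $\rho^{{\rm pec}\prime,g}=V_A\,\rho^{{\rm pec},g}\,V_A^\dagger$. (Operators supported on ${\cal H}_{A'}$, such as $E^{g,s,e}$, are untouched by $V_A$.)

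For the left-hand quantity I would trace out only ${\cal H}_{A'E}$, which commutes with $V_A$, so $\rho^{{\rm pec}\prime,g}_A=V_A\rho^{{\rm pec},g}_AV_A^\dagger$; invariance of the phase error probability then follows immediately from $\langle\widetilde 0|V_A=\langle\widetilde 0|$, giving $P^{\rm ph}_A(\rho^{{\rm pec}\prime,g})=1-\bra{\widetilde 0}V_A\rho^{{\rm pec},g}_AV_A^\dagger\ket{\widetilde 0}=P^{\rm ph}_A(\rho^{{\rm pec},g})$. For the right-hand quantity I would instead trace out ${\cal H}_A$ completely. The transformed bit-basis measurement uses $\bar Z_i'=V_A\bar Z_iV_A^\dagger=Z_i$, so its projector onto outcome $k$ is $P_k'=V_AP_kV_A^\dagger$, with $P_k$ the corresponding projector of the untransformed scheme; hence the $KE$-block for each $k$ is ${\rm Tr}_{AA'}(V_AP_k\rho^{{\rm pec},g}P_kV_A^\dagger)$. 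Since $V_A$ acts entirely within the traced-out subsystem, ${\rm Tr}_{AA'}\bigl(V_A(\cdot)V_A^\dagger\bigr)={\rm Tr}_{AA'}(\cdot)$, so the conjugation cancels and $\rho_{KE}^{{\rm fin}\text{-}{\rm vp}\prime,g}=\rho_{KE}^{{\rm fin}\text{-}{\rm vp},g}$. The same cancellation applied to $\rho_E$ gives $\rho_{KE}^{{\rm ideal}\prime}=\rho_{KE}^{\rm ideal}$, and since both arguments of $F$ are literally unchanged, $1-F^2$ is invariant.

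No step presents a genuine obstacle; the whole content is bookkeeping of the support of $V_A$. The one point I would state carefully is the contrast that makes the two invariances hold for slightly different reasons: $V_A$ survives ${\rm Tr}_{A'E}$ (which keeps ${\cal H}_A$) but is annihilated by the trace over ${\cal H}_A$, so the left quantity is invariant because $\ket{\widetilde 0}$ is a fixed point of $V$, whereas the right quantity is invariant because the reduced $KE$ states are in fact identical.
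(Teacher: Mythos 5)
Your proposal is correct and follows essentially the same route as the paper's proof: both rest on the two facts that $\rho^{{\rm pec}\prime,g}=V_A\,\rho^{{\rm pec},g}\,V_A^\dagger$ together with $V_A\ket{\widetilde{0}}_A=\ket{\widetilde{0}}_A$ (for the phase error probability), and that the reduced state $\rho_{KE}^{{\rm fin}\text{-}{\rm vp},g}$ is literally unchanged because $V_A$ acts only on the traced-out system ${\cal H}_A$ (for the fidelity term). You merely fill in details the paper leaves implicit, namely the permutation argument showing $V_A$ fixes $\ket{\widetilde{0}}_A$ and the explicit cancellation of the conjugation under ${\rm Tr}_{AA'}$.
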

\begin{proof}
$1-F\left(\rho_{KE}^{\rm ideal}, \rho_{KE}^{{\rm fin}\text{-}{\rm vp}, g}\right)^2$ is invariant because $\rho_{KE}^{{\rm fin}\text{-}{\rm vp}, g}$ is invariant; note here that operation in ${\cal H}_K$ is not changed by the transformation above.
The invariance of $P^{\rm ph}_A\left(\rho^{{\rm pec},g}\right)$ follows by recalling the definition of the failure probability (\ref{eq:failure_prob_PEC_defined}) and by noting $\rho_{KE}^{{\rm pec}\prime,g}=V_A\rho_{KE}^{{\rm pec},g}V_A^\dagger$ and $V_A\ket{\widetilde{0}}_A=\ket{\widetilde{0}}_A$.
\end{proof}
Thus it suffices to prove lemma \ref{lmm:Pph_upperbound} for the transformed case of $\Pi^{{\rm vp}\prime,g}$ and $\Ket{\Psi^{{\rm ini}\prime}}$, and from now on we restrict ourselves to this case.
Note that in this case, measurements in ${\cal H}_A$ are greatly simplified due to (\ref{eq:barZ_transformed_by_V}), (\ref{eq:barX_transformed_by_V}).
Syndrome measurements of step 1(a) are phase basis measurement using the normal Pauli operators $X_i$ of qubits $i=m+1,\dots,n$, and hash value measurements of step 2 are bit basis measurements using $Z_i$ of qubits $i=1,\dots,m$.
Thus it is convenient to rewrite bit string $a\in\{0,1\}^n$ as a concatenation $a=a_K\|a_S$ of $a_K\in\{0,1\}^m$ and $a_S\in\{0,1\}^{n-m}$, and according to that, divide ${\cal H}_A$ as ${\cal H}_A={\cal H}_K\otimes{\cal H}_S$, with
\begin{equation}
\ket{a}_A=\Ket{a_K\|a_S}_A=\ket{a_K}_{K}\otimes\ket{a_S}_{S}.
\label{eq:Space_A_divided}
\end{equation}
Then hash value equals $a_K$, and the syndrome measurement becomes the $X$-basis measurement in ${\cal H}_S$.

\subsubsection{Fixing purification  $\ket{\Psi^{\rm ini}}$}
Next we write down purification $\ket{\Psi^{\rm ini}}$ of $\rho_{AE}$ explicitly.
The form of $\ket{\Psi^{\rm ini}}_{AEB}$ is ambiguous up to unitary transforms in ${\cal H}_B$, but this ambiguity is canceled out when defining ancilla measurement $M_B^{g,s}$, included in the phase error correction $\Pi^{{\rm pec},g}$.
Hence we loose no generality by choosing an arbitrary $\ket{\Psi^{\rm ini}}$.
Here for the ease of notation, we divide ${\cal H}_{A'}$ into ${\cal H}_{A'}={\cal H}_{A_1}\otimes{\cal H}_{A_2}$, and let
\begin{eqnarray}
\ket{\Psi^{\rm ini}}_{AEA'}&=&\sum_{a}\ket{a}_{A}\otimes |\psi^{a}\rangle_{EA_1}\otimes|a\rangle_{A_2}\nonumber\\
&=&\sum_{a_K,a_S}\ket{a_K}_{K}\otimes\ket{a_S}_{S}\otimes \ket{\psi^{a_K\|a_S}}_{EA_1}\nonumber\\
&&\qquad\quad\otimes\ket{a_K}_{K'}\otimes\ket{a_S}_{S'},
\label{eq:purification_Phi_defined}
\end{eqnarray}
where $\ket{\psi^{a}}_{EA_1}$ are arbitrarily chosen purifications of Eve's sub-normalized state $\rho_E^{{\rm ini},a}$ included in the initial state 
(\ref{eq:initial_state_def}), i.e., ${\rm Tr}_{A_1}|\psi^{a}\rangle\langle \psi^{a}|=\rho_E^{a}$, and ${\cal H}_{A_1}$ is an ancilla space introduced for this purpose.
System ${\cal H}_{A_2}$ is a replica of the $n$-qubit system ${\cal H}_A$, which we introduce in order to make $\rho_{AE}={\rm Tr}_{A'}(\ket{\Psi^{\rm ini}}\bra{\Psi^{\rm ini}})$ a cq state. 
In the second line of (\ref{eq:purification_Phi_defined}), we divided ${\cal H}_{A_2}$ further into ${\cal H}_{A_2}={\cal H}_{K'}\otimes{\cal H}_{S'}$, in the same way as in (\ref{eq:Space_A_divided}), as
\begin{equation}
\ket{a}_{A_2}=\Ket{a_K\|a_S}_{A_2}=\ket{a_K}_{K'}\otimes\ket{a_S}_{S'}.
\end{equation}

We note that, in summary, ${\cal H}_{A}$ and ${\cal H}_{A'}$ are divided as
\begin{eqnarray}
{\cal H}_{A}&=&{\cal H}_{K}\otimes{\cal H}_{S},\\
{\cal H}_{A'}&=&{\cal H}_{A_1}\otimes{\cal H}_{A_2}={\cal H}_{A_1}\otimes{\cal H}_{K'}\otimes{\cal H}_{S'}.
\end{eqnarray}

For later convenience, we also define a purification of the ideal state $\rho^{\rm ideal}_{AE}$ to be
\begin{eqnarray}
\ket{\Psi^{\rm ideal}}_{AEA'}&:=&2^{-n/2}
\sum_{a}|a\rangle_{A}\otimes |\varphi\rangle_{EA_1}\otimes|a\rangle_{A_2}\nonumber\\
&=&2^{-n/2}
\sum_{a_K,a_S}\ket{a_K}_{K}\otimes\ket{a_S}_{S}\otimes \ket{\varphi}_{EA_1}\nonumber\\
&&\qquad\qquad\otimes\ket{a_K}_{K'}\otimes\ket{a_S}_{S'},
\end{eqnarray}
where $\ket{\varphi}_{EA_1}$ is a purification of $\rho_E$ defined in (\ref{eq:rho_E_defined}).
Thus it satisfies ${\rm Tr}_{A_1}\ket{\varphi}\bra{\varphi}=\rho_E={\rm Tr}_A\left(\rho_{AE}^{\rm ini}\right)$.

\subsubsection{Unitary operation for phase error correction}
By applying CNOT gate $\Pi^{\rm cnot}_{S'| S}$, defined in (\ref{eq:N_D1_D2_defined}), to these states, we obtain
\begin{align}
&\Pi^{\rm cnot}_{S'| S}\Ket{\Psi^{\rm ini}}_{AEA'}\\
&\ =\sum_{a_K,a_S}|a_K\rangle_{K}\otimes\ket{0}_S\otimes |\psi^{a_K\|a_S}\rangle_{EA_1}\otimes|a_K\rangle_{K'}\otimes|a_S\rangle_{S'},\nonumber\\
&\Pi^{\rm cnot}_{S'| S}\Ket{\Psi^{\rm ideal}}_{AEA'}\\
&\ =2^{-m/2}\sum_{a_K}\ket{a_K}_K\otimes\ket{0}_S\otimes\ket{\varphi}_{EA_1}\otimes\ket{a_K}_{K'}\otimes\ket{\widetilde{0}}_{S'},\nonumber
\end{align}
which are purifications of $\ket{0}\bra{0}_S\otimes\rho_{KE}^{\text{fin-vp},g}$ and $\ket{0}\bra{0}_S\otimes\rho_{KE}^{\rm ideal}$ respectively.
Hence due to Uhlmann's theorem, we can define a unitary operation $T_{A'}$ satisfying
\begin{eqnarray}
\lefteqn{\bra{\Psi^{\rm ideal}} \Pi^{\rm cnot}_{S'| S}T_{A'}\Pi^{\rm cnot}_{S'| S}\Ket{\Psi^{\rm ini}}}\nonumber\\
&=&F\left(\ket{0}\bra{0}_S\otimes\rho^{\rm ideal}_{KE},\ket{0}\bra{0}_S\otimes\rho_{KE}^{\text{fin-vp},g}\right)\nonumber\\
&=&F(\rho^{\rm ideal}_{KE},\rho_{KE}^{\text{fin-vp},g})
\end{eqnarray}
(see, e.g., exercise 2.81 and theorem 9.4 of ref. \cite{Nielsen-Chuang}).
We also note that
\begin{equation}
\ket{\Phi}:=\Pi^{\rm cnot}_{A|A_2}\Ket{\Psi^{\rm ideal}}_{AEA'}=\ket{\widetilde{0}}_A\otimes\ket{\varphi}_{EA_1}\otimes\ket{0}_{A_2}
\end{equation}
is a pure state with zero phase error.
Thus if we define a unitary operation $W_{AA'}$ for phase error correction
\begin{eqnarray}
W_{AA'}&:=&\Pi^{\rm cnot}_{A| A_2}\Pi^{\rm cnot}_{S'| S}T_{A'}\Pi^{\rm cnot}_{S'| S},
\label{eq:define_V_AB}
\end{eqnarray}
its failure probability can be bounded from above as
\begin{eqnarray}
\lefteqn{P_A^{\rm ph}\left(W\ket{\Psi^{\rm ini}}\bra{\Psi^{\rm ini}}W^\dagger\right)}\nonumber\\
&=&1-{\rm Tr}\left\{
\left(\ket{\widetilde{0}}\bra{\widetilde{0}}_A\otimes\II_{EA'}\right)
W\ket{\Psi^{\rm ini}}\bra{\Psi^{\rm ini}}W^{\dagger}
\right\}\nonumber\\
&\le&1-{\rm Tr}\left\{
\left(\Ket{\Phi}\Bra{\Phi}\right)W\ket{\Psi^{\rm ini}}\bra{\Psi^{\rm ini}}W^\dagger
\right\}\nonumber\\
&=&1-
\left|\Bra{\Psi^{\rm ideal}}\Pi^{\rm cnot}_{S'| S}T_{A'}\Pi^{\rm cnot}_{S'| S}\ket{\Psi^{\rm ini}}\right|^2\nonumber\\
%&=&\left|\bra{\Psi^{\rm ideal}}V\ket{\Psi}\right|^2\nonumber\\
&=&1-F(\rho^{\rm ideal}_{KE},\rho_{KE}^{\text{fin-vp},g})^2.
\label{eq:unitary_QEC_BLER}
\end{eqnarray}

\subsubsection{Actual phase error correction}
It remains to show that there exists a phase error correction algorithm of the form of $\Pi^{{\rm pec},g}$ (specified inside $\Pi^{\rm vp}$ of section \ref{sec:obtained_virtual_PA}) that achieves the same failure probability as unitary phase error correction $W_{AA'}$.

To this end, we define an operator set $w$ to be
\begin{equation}
w:=\{\,W^{s,e}_{AA'}\,|\,s\in\{0,1\}^{n-m}, e\in\{0,1\}^n\,\},
%V _{AB}&=&\sum_{s,a}V^{s,a}_{AB},
\end{equation}
where
\begin{equation}
W^{s,e}_{AA'}:=\left(\Ket{\widetilde{s+e_S}}\Bra{\widetilde{s+e_S}}_{S}\otimes\ket{\widetilde{e}}\bra{\widetilde{e}}_{A_2}\right)W_{AA'},
%=u_{A}^{f,s,a}\otimes v_{B}^{f,s,a},\\
%p^a_{B_2}&:=&\ket{\widetilde{a}}\bra{\widetilde{a}}_{B_2},\\
%p^s_S&:=&\ket{\widetilde{s}}\bra{\widetilde{s}}_{S},\\
%u_{A}^{f,s,a}&:=&Z^{a+(0\|s)}_A \left(\II_Z\otimes \ket{\widetilde{s}}\bra{\widetilde{s}}_{S}\right)=Z^{a+(0\|s)}_A q^s_S,\\
%v_{B}^{f,s,a}&:=&\ket{\widetilde{a}}\bra{\widetilde{a}}_{B_2}Z^{s}_{S'}T_BZ^{s}_{S'}=p^a_{B_2}Z^{s}_{S'}T_BZ^{s}_{S'}.
\end{equation}
and $e_S$ denotes the least significant $n-m$ bits of $e$, as in eq. (\ref{eq:Space_A_divided}); $e=e_K\|e_S$.
It is easy to see that $w$ is indeed a quantum operation satisfying
$
\sum_{s,e}W^{s,e\dagger}_{AA'}W^{s,e}_{AA'}=\II_{AA'}
%\label{eq:V_fsa_Kraus_condition}
$.
Further, by using eqs. (\ref{eq:define_V_AB}) and (\ref{eq:N_D1_D2_defined}), and relations $Z^e_A=Z^{e_K}_K\otimes Z^{e_S}_S$ and $
Z^{e_S}_S\ket{\widetilde{e_S+s}}_S=\ket{\widetilde{s}}_S$,  
we can rewrite $W^{s,e}_{AA'}$ as
\begin{eqnarray}
W^{s,e}_{AA'}&=&Z^e_A\ket{\widetilde{s}}\bra{\widetilde{s}}_{S}\otimes E_{A'}^{s,e},\\
E_{A'}^{s,e}&=&\ket{\widetilde{e}}\bra{\widetilde{e}}_{A_2}Z_{S'}^sT_{A'}Z_{S'}^s,
\end{eqnarray}
where $E_{A'}^{s,e}$ satisfy
\begin{equation}
\sum_e E_{A'}^{s,e\dagger}E_{A'}^{s,e}=\II_{A'}.
\end{equation}
Thus the quantum operation $w=\set{W^{s,e}_{AA'}|s,e}$ is indeed of the form of $\Pi^{{\rm pec},g}$.

This operation attains the same failure probability as unitary phase error correction $W$, since
\begin{eqnarray}
\lefteqn{P^{\rm ph}_A\left(\sum_{s,e}W^{s,e}_{AA'}\Ket{\Psi^{\rm ini}}\Bra{\Psi^{\rm ini}}W^{s,e\dagger}_{AA'}\right)}\nonumber\\
&=&
1-\sum_{s,e}
{\rm Tr}
\left\{
\ket{\widetilde{0}}\bra{\widetilde{0}}_A
\left(W^{s,e}_{AA'}\Ket{\Psi^{\rm ini}}\Bra{\Psi^{\rm ini}}W^{s,e\dagger}_{AA'}\right)
\right\}\nonumber\\
&=&
1-\sum_{s,e}
{\rm Tr}
\left\{
\ket{\widetilde{0}}\bra{\widetilde{0}}_A
\left(W^{s+e_S,e}_{AA'}\Ket{\Psi^{\rm ini}}\Bra{\Psi^{\rm ini}}W^{s+e_S,e\dagger}_{AA'}\right)
\right\}\nonumber\\
&=&
1-\sum_{s,e}
{\rm Tr}
\Bigl\{
\left(\left(\ket{\widetilde{s}}\bra{\widetilde{s}}_{S}\ket{\widetilde{0}}\bra{\widetilde{0}}_A\ket{\widetilde{s}}\bra{\widetilde{s}}_{S}\right)\otimes \ket{\widetilde{e}}\bra{\widetilde{e}}_{A_2}\right)\nonumber\\
&&\qquad\qquad\quad\times\left(W_{AB}\Ket{\Psi^{\rm ini}}\Bra{\Psi^{\rm ini}}W^{\dagger}_{AB}\right)
\Bigr\}\nonumber\\
&=&
1-{\rm Tr}
\left\{
\ket{\widetilde{0}}\bra{\widetilde{0}}_A
\left(W_{AA'}\Ket{\Psi^{\rm ini}}\Bra{\Psi^{\rm ini}}W^{\dagger}_{AA'}\right)
\right\}\nonumber\\
&=&P_A^{\rm ph}\left(W_{AA'}\Ket{\Psi^{\rm ini}}\Bra{\Psi^{\rm ini}}W_{AA'}^\dagger\right),
\end{eqnarray}
where we used $\left(\ket{\widetilde{e}}\bra{\widetilde{e}}_{A_2}\right)^2=\ket{\widetilde{e}}\bra{\widetilde{e}}_{A_2}$, $\sum_e\ket{\widetilde{e}}\bra{\widetilde{e}}_{A_2}=\II_{A_2}$.
This completes the proof of lemma \ref{lmm:Pph_upperbound}.

\subsection{Proof of lemma \ref{lmm:universal_hashing}}
\label{sec:proof_purified_dist_upperbound}

In this proof, we abbreviate $\rho^\text{ini-ap}$ and $\rho^{\text{fin-vp},g}$(=$\rho^{\text{fin-ap},g}$), as $\rho^{\rm ini}$ and $\rho^{{\rm fin},g}$, respectively.

According to ref. \cite{TBH14}, there are quantities
\begin{eqnarray}
\widetilde{H}_{1/2}^\downarrow(K|E)_\rho&=&2\log_2 F(\II_K\otimes\rho_E,\rho_{KE}),\\
\widetilde{H}_{2}^\downarrow(K|E)_\rho&=&-\log_2 {\rm Tr}\left\{\left(\rho_{KE}\left(\II_K\otimes \rho_E^{-1/2}\right)\right)^2\right\},\nonumber\\
\end{eqnarray}
which are defined for an arbitrary state $\rho_{KE}$.
These quantities satisfy $\widetilde{H}_{1/2}^\downarrow(K|E)_\rho\ge \widetilde{H}_{2}^\downarrow(K|E)_\rho$, and so the quantity appearing on the right hand side of eq. (\ref{eq:lemma1}) can be bounded as
\begin{eqnarray}
\lefteqn{1-2^{-m}F(\II_K\otimes\rho,\rho_{KE})^2=1-2^{-m+\widetilde{H}_{1/2}^\downarrow(K|E)_\rho}}\nonumber\\
&\le&1-2^{-m+\widetilde{H}_{2}^\downarrow(K|E)_\rho}\le2^{m-\widetilde{H}_{2}^\downarrow(K|E)_\rho}-1\nonumber\\
&=&2^m{\rm Tr}\left\{\left(\rho_{KE}\left(\II_K\otimes \rho_E^{-1/2}\right)\right)^2\right\}-1.
\label{eq:lemma2:1-2mF}
\end{eqnarray}
On the second line of (\ref{eq:lemma2:1-2mF}), we used the fact that $1-1/x\le x-1$ for $ x>0$.

Then by letting $\rho_{KE}=\rho_{KE}^{{\rm fin},g}$ in this inequality and by averaging it with respect to the random choice of universal$_2$ hash function $g$, we obtain
\begin{eqnarray}
\lefteqn{\sum_{g}\Pr(G=g) \left(1-2^{-m}F(\II_K\otimes\rho_E,\rho_{KE}^{{\rm fin},g})^2\right)}\nonumber\\
%&\le&\sum_{g}\Pr(G=g)\,2^{m-\widetilde{H}_{2}^\downarrow(K|E)_{\rho^g}}-1\nonumber\\
&\le&\sum_{g}\Pr(G=g)\,2^m{\rm Tr}\left\{\left(\rho_{KE}^{{\rm fin},g}\left(\II_K\otimes \rho_E^{-1/2}\right)\right)^2\right\}-1\nonumber\\
&=&-1+\sum_{g}\sum_{k}2^m\Pr(G=g)\nonumber\\
&&\times{\rm Tr}\left\{\sum_{\substack{a,a'\\ \in g^{-1}(k)}}(\rho_E^{-1/4}\rho^{{\rm ini},a}_{E}\rho_E^{-1/4})(\rho_E^{-1/4}\rho^{{\rm ini},a'}_{E}\rho_E^{-1/4})\right\}\nonumber\\
&=&-1+\,\sum_{g}\Pr(G=g)2^m\sum_{a,a'}1[g(a)=g(a')]\nonumber\\
&&\qquad\times{\rm Tr}\left\{\left(\rho_E^{-1/4}\rho^{{\rm ini},a}_{E}\rho_E^{-1/4}\right)\left(\rho_E^{-1/4}\rho^{{\rm ini},a'}_{E}\rho_E^{-1/4}\right)\right\}\nonumber\\
&\le&-1+\,2^m\sum_{a,a'}\left(1[a=a']+2^{-m}\right)\nonumber\\
&&\qquad\times
{\rm Tr}\left\{\left(\rho_E^{-1/4}\rho^{{\rm ini},a}_{E}\rho_E^{-1/4}\right)\left(\rho_E^{-1/4}\rho^{{\rm ini},a'}_{E}\rho_E^{-1/4}\right)\right\}\nonumber\\
%&=&
%2^m\sum_x{\rm Tr}\left\{\left(\rho_E^{-1/4}\rho^{x}_{E}\rho_E^{-1/4}\right)^2\right\}
%+{\rm Tr}\left\{\left(\rho_E^{-1/4}\rho_{E}\rho_E^{-1/4}\right)^2\right\}-1\nonumber\\
&=&2^{m-\widetilde{H}_{2}^\downarrow(A|E)_{\rho^{\rm ini}}}.
\label{eq:proof_lemma2_EFTr_rho}
\end{eqnarray}
Here, function $1[\cdots]$ takes value one if the condition inside brackets holds, and zero otherwise.
Due to eq. (\ref{eq:rho_E_defined}), the reduced states for system ${\cal H}_E$ corresponding to $\rho^{{\rm fin},g}$ and to $\rho^{\rm ini}$ are the same, and we denoted it here as $\rho_E$.
The second inequality of (\ref{eq:proof_lemma2_EFTr_rho}) holds because the quantity ${\rm Tr}\left\{\cdots \right\}$ is non-negative, and $g$ is a universal$_2$ hash function, satisfying inequality (\ref{eq:def_universal$_2$_function}).
It should be noted that, in fact, the second and further lines of inequality (\ref{eq:proof_lemma2_EFTr_rho}) are essentially a special case of lemma 5.4.3, ref. \cite{RennerPhD}, with $\sigma_E=\rho_E$.

Next by setting $\alpha=\infty$ in eq. (48), corollary 4, ref. \cite{TBH14} (also see the paragraph of eq. (52) of the same paper), we obtain
\begin{equation}
H_{\min}(\rho_{AE}^{\rm ini}|E)=\widetilde{H}_{\infty}^\uparrow(A|E)_{\rho^{\rm ini}}\le\widetilde{H}_{2}^\downarrow(A|E)_{\rho^{\rm ini}}.
\label{eq:lemma2:H_Alpha_relation}
\end{equation}

Finally by combining inequalities (\ref{eq:proof_lemma2_EFTr_rho}) and (\ref{eq:lemma2:H_Alpha_relation}), we obtain (\ref{eq:lemma2}).

\section{Proof of Inequality (\ref{eq:d_1_bounded_by_P_ph_pre})}
\label{sec:Proof_lemma_phase_error_vs_trace_distance}

Again by using the same argument as in appendix \ref{sec:reduction_to_special_case}, we may assume that system ${\cal H}_A$ is divided as ${\cal H}_A={\cal H}_K\otimes{\cal H}_S$, as in (\ref{eq:Space_A_divided}).
In this case, the step 2.(b) of the virtual PA scheme is the usual $Z$-basis measurement in ${\cal H}_K$.
In addition, it is straightforward to see that phase error probability in ${\cal H}_K$ is bounded by that in ${\cal H}_A$,
\begin{equation}
P^{\rm ph}_{K}\left(\rho^{\text{pec-vp},g}\right)\le P^{\rm ph}_{A}\left(\rho^{\text{pec-vp},g}\right).
\end{equation}
Thus it suffices to prove the following lemma.

\begin{Lmm}
\label{lmm:trace_dist_vs_phase_error}
Let ${\cal H}_K$ be a system of $m$-qubits, and ${\cal H}_E$ be a system of an arbitrary dimension.
Then for an arbitrary sub-normalized state $\rho_{KE}$ in ${\cal H}_{KE}$, we have
\begin{equation}
d_1\left(\Pi^{Z\text{-}{\rm mea}}_K\left(\rho_{KE}\right)\right)\le 2\sqrt2\sqrt{P^{\rm ph}_{K}\left(\rho_{KE}\right)},
\label{eq:d_1_bound_non_smoothed}
\end{equation}
where $\Pi^{Z\text{-}{\rm mea}}_K$ is the $Z$-basis measurement in ${\cal H}_K$; $\Pi^{Z\text{-}{\rm mea}}_K\left(\rho_{KE}\right)=\sum_{z}\ket{z}\bra{z}_K\rho_{KE}\ket{z}\bra{z}_K$ (cf. appendix \ref{sec:notation}).
\end{Lmm}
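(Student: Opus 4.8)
The plan is to realize the ideal state as the output of the very same $Z$-basis measurement applied to a state with a definite correct phase, and then to combine contractivity of CPTP maps with the fidelity–trace-distance relationship. Concretely, writing $\rho_E=\mathrm{Tr}_K\rho_{KE}$ and $\sigma_{KE}:=\ket{\widetilde{0}}\bra{\widetilde{0}}_K\otimes\rho_E$, one checks directly that $\Pi^{Z\text{-}{\rm mea}}_K(\sigma_{KE})=2^{-m}\II_K\otimes\rho_E=\rho_{KE}^{\rm ideal}$, since $\ket{\widetilde{0}}$ is uniform in the bit basis. Because $\Pi^{Z\text{-}{\rm mea}}_K$ is trace preserving and hence $L_1$-contractive, this immediately gives
\begin{equation}
d_1\!\left(\Pi^{Z\text{-}{\rm mea}}_K(\rho_{KE})\right)
=\left\|\Pi^{Z\text{-}{\rm mea}}_K(\rho_{KE}-\sigma_{KE})\right\|_1
\le\left\|\rho_{KE}-\sigma_{KE}\right\|_1 ,
\end{equation}
so it remains to bound the distance between $\rho_{KE}$ and the ``phase-correct'' reference $\sigma_{KE}$, which has the same trace as $\rho_{KE}$.

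Next I would pass from the $L_1$-distance to the generalized fidelity via the Fuchs--van de Graaf inequality $\frac12\|\rho_{KE}-\sigma_{KE}\|_1\le\sqrt{1-F(\rho_{KE},\sigma_{KE})^2}$ (valid for the equal-trace sub-normalized states at hand, see appendix~\ref{sec:notation}), and then evaluate $F$ explicitly. Since the $K$-part of $\sigma_{KE}$ is the rank-one projector $\ket{\widetilde{0}}\bra{\widetilde{0}}_K$, the fidelity collapses onto $\mathcal H_E$: setting $\tau_E:=\bra{\widetilde{0}}_K\rho_{KE}\ket{\widetilde{0}}_K$, a short computation shows $\|\sqrt{\rho_{KE}}\sqrt{\sigma_{KE}}\|_1=\|\sqrt{\rho_E}\sqrt{\tau_E}\|_1$, while by the definition of the phase-error probability $\mathrm{Tr}\,\tau_E=\bra{\widetilde{0}}_K\rho_K\ket{\widetilde{0}}_K=1-P^{\rm ph}_K(\rho_{KE})$.

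The main obstacle, and the crux of the argument, is the lower bound $F(\rho_{KE},\sigma_{KE})\ge 1-P^{\rm ph}_K(\rho_{KE})$. I would derive it from two observations. First, $\tau_E\le\rho_E$: indeed $\rho_E=\sum_{x}\bra{\widetilde{x}}_K\rho_{KE}\ket{\widetilde{x}}_K$ is a sum of positive operators of which $\tau_E$ is the $x=0$ summand, so $\rho_E-\tau_E\ge 0$. Second, operator monotonicity of the square root yields $\sqrt{\rho_E}\ge\sqrt{\tau_E}$, whence
\begin{equation}
\|\sqrt{\rho_E}\sqrt{\tau_E}\|_1\ge\mathrm{Tr}\!\left(\sqrt{\rho_E}\sqrt{\tau_E}\right)\ge\mathrm{Tr}\!\left(\sqrt{\tau_E}\,\sqrt{\tau_E}\right)=\mathrm{Tr}\,\tau_E=1-P^{\rm ph}_K .
\end{equation}
Together with the non-negative correction term in the generalized fidelity, this gives $F\ge 1-P^{\rm ph}_K$. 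Substituting into Fuchs--van de Graaf and using the elementary estimate $1-(1-P^{\rm ph}_K)^2\le 2P^{\rm ph}_K$ yields $\|\rho_{KE}-\sigma_{KE}\|_1\le 2\sqrt{2P^{\rm ph}_K}$, which is exactly the claimed bound $2\sqrt2\sqrt{P^{\rm ph}_K(\rho_{KE})}$.

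Finally, to keep the sub-normalized case honest I would work throughout with the generalized fidelity and generalized trace distance of appendix~\ref{sec:notation}: since $\mathrm{Tr}\rho_{KE}=\mathrm{Tr}\sigma_{KE}$ the generalized trace distance reduces to $\tfrac12\|\cdot\|_1$, and the extra $\sqrt{(1-\mathrm{Tr}\rho_{KE})(1-\mathrm{Tr}\sigma_{KE})}$ term in $F$ is non-negative and only helps. A direct check confirms that $1-F^2\le 2P^{\rm ph}_K$ persists for every $\mathrm{Tr}\rho_{KE}\le 1$, so no separate homogeneity argument is required. I expect the fidelity lower bound to be the only genuinely delicate step; the remaining ingredients are contractivity and the identity $\rho_{KE}^{\rm ideal}=\Pi^{Z\text{-}{\rm mea}}_K(\ket{\widetilde{0}}\bra{\widetilde{0}}_K\otimes\rho_E)$.
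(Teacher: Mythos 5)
Your proof is correct, and it takes a route genuinely different from the paper's, so it is worth comparing the two. Both arguments share the same skeleton---compare $\rho_{KE}$ to a phase-correct reference state whose image under $\Pi^{Z\text{-}{\rm mea}}_K$ is the ideal state, use contractivity of $\|\cdot\|_1$, then convert a fidelity bound into a distance bound---but the crucial step, namely $F(\rho_{KE},\sigma_{KE})\ge 1-P^{\rm ph}_K(\rho_{KE})$, is established by different means. Your version is abstract and stays in the original space: the collapse $\|\sqrt{\rho_{KE}}\sqrt{\sigma_{KE}}\|_1=\|\sqrt{\rho_E}\sqrt{\tau_E}\|_1$ is valid because $\sqrt{\sigma_{KE}}\,\rho_{KE}\,\sqrt{\sigma_{KE}}=\ket{\widetilde{0}}\bra{\widetilde{0}}_K\otimes\sqrt{\rho_E}\,\tau_E\,\sqrt{\rho_E}$, the operator inequality $\tau_E\le\rho_E$, the L\"{o}wner--Heinz monotonicity of the square root, the estimate ${\rm Tr}(\sqrt{\rho_E}\sqrt{\tau_E})\ge{\rm Tr}\,\tau_E$, and the final arithmetic $1-F^2\le 2P^{\rm ph}_K$ are all sound, also in the sub-normalized case. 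The paper instead reduces to the normalized case by homogeneity and then works with explicit purifications: it forms $\ket{\Psi}=\ket{\Psi'}_{KEB}\otimes\ket{\widetilde{0}}_C$ and the ``swapped'' vector $\ket{\Psi^{\rm ideal}}=\ket{\widetilde{0}}_K\otimes\sum_k\ket{\psi^k}_{EB}\otimes\ket{k}_C$, which is precisely a purification of your $\sigma_{KE}$; it then needs only the elementary pure-state relation $\|\ket{\Psi}\bra{\Psi}-\ket{\Psi^{\rm ideal}}\bra{\Psi^{\rm ideal}}\|_1\le 2\sqrt{1-|\langle\Psi|\Psi^{\rm ideal}\rangle|^2}$ together with monotonicity under ${\rm Tr}_{BC}$ and $\Pi^{Z\text{-}{\rm mea}}_K$, and computes the overlap $|\langle\Psi|\Psi^{\rm ideal}\rangle|=1-P^{\rm ph}_K$ by Fourier expansion. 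In other words, the paper constructs purifications whose overlap witnesses, via Uhlmann's theorem, exactly the fidelity bound that you prove by operator monotonicity. What each buys: your argument needs no ancilla, no purification, and no normalization reduction, but it imports two nontrivial external facts (L\"{o}wner--Heinz and the generalized Fuchs--van de Graaf inequality for sub-normalized states); the paper's argument is self-contained at the level of pure-state linear algebra. One caution for your write-up: the paper's notation appendix asserts $\|\rho-\tau\|_1\le P(\rho,\tau)$, which as stated is off by a factor of two (orthogonal pure states would give $2\le 1$); you correctly used the valid form $\frac12\|\rho-\tau\|_1\le P(\rho,\tau)=\sqrt{1-F(\rho,\tau)^2}$ for equal-trace sub-normalized states, so your proof does not inherit that slip---just cite the purified-distance bound from Tomamichel's thesis rather than the paper's appendix for this step.
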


\begin{proof}
It suffices to bound this lemma for the case where $\rho_{KE}$ is normalized, because the left hand side of (\ref{eq:d_1_bound_non_smoothed}) scales linearly with ${\rm Tr}(\rho_{KE})$, and the right hand side linearly with its square root.

Choose an arbitrary purification $\ket{\Psi'}_{KEB}$ of $\rho_{KE}$, with ${\cal H}_{B}$ being an ancilla space.
Also define another $m$-qubit ancilla space ${\cal H}_{C}$, and let $\ket{\widetilde{0}}_{C}$ be one of the $X$-basis states there.
Then 
\begin{equation}
|\Psi\rangle_{KEBC}=|\Psi'\rangle_{KEB}\otimes\ket{\widetilde{0}}_{C}
\end{equation}
is also a purification of $\rho_{KE}$.
This state can be expanded with respect to the $Z$-basis of ${\cal H}_K$ as
\begin{equation}
|\Psi\rangle_{KEBC}=\sum_{k}|k\rangle_K\otimes|\psi^k\rangle_{EB}\otimes\ket{\widetilde{0}}_{C}.
\end{equation}
With this setting, it is straightforward to verify that $|\psi^k\rangle_{EB}$ and $\Pi^{Z\text{-}{\rm mea}}_K\left(\rho_{KE}\right)$ are related as
\begin{eqnarray}
\Pi^{Z\text{-}{\rm mea}}_K\left(\rho_{KE}\right)&=&\sum\ket{k}\bra{k}_K\otimes \rho^k_E,\\
\rho^k_E&=&{\rm Tr}_{B}(\ket{\psi^k}\bra{\psi^k}_{EB}).
\end{eqnarray}

The ideal state corresponding to $\Pi^\text{$Z$-mea}_K(\rho_{KE})$ takes the form
\begin{eqnarray}
\left(\Pi^\text{$Z$-mea}_K(\rho_{KE})\right)^{\rm ideal}&=&2^{-m}\II_K\otimes\sum_k\rho_E^k,
\end{eqnarray}
and we can choose a semi-purification of $\left(\Pi^\text{$Z$-mea}_K(\rho_{KE})\right)^{\rm ideal}$ to be 
\begin{eqnarray}
%&=&\Pi^\text{$Z$-mea}_K\left({\rm Tr}_{B}\left(|\Psi^{\rm ideal}\rangle\langle\Psi^{\rm ideal}|\right)\right),\\
|\Psi^{\rm ideal}\rangle_{KEBC}&:=&|\tilde{0}\rangle_K\otimes\sum_{k}|\psi^k\rangle_{EB}\otimes|k\rangle_{C};
%\nonumber\\
%&=&|\tilde{0}\rangle_A\otimes\sum_{x}|\widetilde{\phi}^x\rangle_{EB}\otimes|\tilde{x}\rangle_C
\end{eqnarray}
this is equivalent to saying that $\ket{\Psi^{\rm ideal}}$ satisfies  the following relation
\begin{equation}
(\Pi^\text{$Z$-mea}_K(\rho_{KE}))^{\rm ideal}=\Pi^\text{$Z$-mea}_K\left({\rm Tr}_{BC}\ket{\Psi^{\rm ideal}}\bra{\Psi^{\rm ideal}}\right).
\end{equation}

Thus
\begin{eqnarray}
\lefteqn{d_1(\Pi^\text{$Z$-mea}_K(\rho_{KE}))}\nonumber\\
&=&\left\|\Pi^\text{$Z$-mea}_K(\rho_{KE})-(\Pi^\text{$Z$-mea}_K(\rho_{KE}))^{\rm ideal}\right\|_1\nonumber\\
&=&\Bigl\|
\Pi^\text{$Z$-mea}_K\left({\rm Tr}_{BC}\ket{\Psi}\bra{\Psi}\right)\nonumber\\
&&\qquad-\Pi^\text{$Z$-mea}_K\left({\rm Tr}_{BC}\ket{\Psi^{\rm ideal}}\bra{\Psi^{\rm ideal}}\right)\Bigr\|_1\nonumber\\
&\le&\left\|
\ket{\Psi}\bra{\Psi}-\ket{\Psi^{\rm ideal}}\bra{\Psi^{\rm ideal}}
\right\|_1\nonumber\\
%&\le&2\sqrt{1-F(\rho_{AE},\ \rho_{AE}^{\rm ideal})^2}\nonumber\\
&\le&2\sqrt{1-\left|\langle\Psi|\Psi^{\rm ideal}\rangle\right|^2}\nonumber\\
&=&2\sqrt{\left(1-\left|\langle\Psi|\Psi^{\rm ideal}\rangle\right|\right)\left(1+\left|\langle\Psi|\Psi^{\rm ideal}\rangle\right|\right)}\nonumber\\
%&=&2\sqrt{1-(1-P^{\rm ph}(\bar{\rho}_Z))^2}\nonumber\\
&\le&2\sqrt{2}\sqrt{1-\left|\langle\Psi|\Psi^{\rm ideal}\rangle\right|}.
\label{eq:d_1_diag_le_innerprod}
\end{eqnarray}
On the fourth line we used the monotonicity of $L_1$-distance with respect to CPTP maps, $\Pi^\text{$Z$-mea}_K$ and ${\rm Tr}_{BC}$.
The fourth line follows by definitions of the $L_1$ distance and of the fidelity; see, e.g., section 9.2.3, ref. \cite{Nielsen-Chuang}.
The sixth line follows from $\left|\langle\Psi|\Psi^{\rm ideal}\rangle\right|\le 1$.

Next note that $|\Psi\rangle$ and $|\Psi^{\rm ideal}\rangle$ can be rewritten as 
\begin{eqnarray}
|\Psi\rangle
&=&
\sum_{x}|\widetilde{x}\rangle_K\otimes|\widetilde{\psi}^x\rangle_{EB}\otimes\ket{\widetilde{0}}_{C},\\
|\Psi^{\rm ideal}\rangle
&=&
\sum_{x}|\widetilde{0}\rangle_K\otimes|\widetilde{\psi}^x\rangle_{EB}\otimes|\widetilde{x}\rangle_{C},\\
|\widetilde{\psi}^x\rangle_{EB}&:=&2^{-m/2}\sum_{k}(-1)^{x\cdot k}|\psi^k\rangle_{EB},
\end{eqnarray}
and thus $P^{\rm ph}_K\left(\rho_{KE}\right)$ takes the form
\begin{equation}
P^{\rm ph}_{K}\left(\rho_{KE}\right)=1-\langle \widetilde{\psi}^0 |\widetilde{\psi}^0\rangle_{EB}=1-\left|\langle\Psi|\Psi^{\rm ideal}\rangle\right|.
\label{eq:phase_error_innerprod}
\end{equation}
By combining (\ref{eq:d_1_diag_le_innerprod}) and (\ref{eq:phase_error_innerprod}), we obtain (\ref{eq:d_1_bound_non_smoothed}).

\end{proof}

\section{Proof of lemma \ref{lmm:security_reduction_to_alice}}
\label{sec:proof_separation_of_trace_distance_and_BEC}

In this proof, we often abbreviate $\rho_{KK'EG}^\text{fin-aq}$ as $\rho_{KK'EG}^{\rm fin}$.
By using the triangle inequality of the $L_1$-distance, $D_1(\rho_{KK'EG}^{\rm fin})$ can be bounded as
\begin{eqnarray}
\lefteqn{D_1(\rho_{KK'EG}^{\rm fin})}\nonumber\\
&=&\left\|\rho_{KK'EG}^{\rm fin}-\rho_{KK'E}^{\rm ideal}\right\|_1
\label{eq:triangle_inequality_applied_to_D_1}\\
&\le&\left\|\rho_{KK'EG}^{\rm fin}-\rho^{\rm int}_{KK'EG}\right\|_1+\left\|\rho^{\rm int}_{KK'EG}-\rho_{KK'EG}^{\rm ideal}\right\|_1.
\nonumber
\end{eqnarray}
where $\rho^{\rm int}_{KK'EG}$ is defined as
\begin{eqnarray}
\rho^{\rm int}_{KK'EG}&:=&\sum_{k}\ket{k}\bra{k}_K\otimes\ket{k}\bra{k}_{K'}\otimes\left(\sum_{k'}\rho^{{\rm fin},k,k'}_{EG}\right),\nonumber\\
\\
\rho^{k,k'}_{EG}&=&\sum_{g}\Pr(G=g)\rho^{{\rm fin},g,k,k'}_{E}\otimes\ket{g}\bra{g}_G
\end{eqnarray}
with $\rho^{{\rm fin},g,k,k'}_{E}$ being $\rho^{\text{fin-aq},g,k,k'}_{E}$ defined in (\ref{eq:rho_fin_gkk_defined}).
Superscript `int' of $\rho^{\rm int}$ means that it is an `intermediate' state between $\rho_{KK'EG}^{\rm fin}$ and $\rho_{KK'EG}^{\rm ideal}$.

The two terms on the right hand side of (\ref{eq:triangle_inequality_applied_to_D_1}) can each be bounded as
\begin{eqnarray}
\lefteqn{\left\|\rho_{KK'EG}^{\rm fin}-\rho^{\rm int}_{KK'EG}\right\|_1}\nonumber\\
&=&\biggl\|\sum_{k, k'}\ket{k}\bra{k}_K\otimes\ket{k'}\bra{k'}_{K'}\nonumber\\
&&\qquad\otimes\left(\rho^{{\rm fin},k,k'}_{EG}-\delta_{k,k'}\left(\sum_{k'}\rho^{{\rm fin},k,k'}_{EG}\right)\right)\biggr\|\nonumber\\
&=&\sum_{k\ne k'}\left\|\rho^{{\rm fin},k,k'}_{EG}\right\|
+\sum_{k}\left\|\rho^{{\rm fin},k,k}_{EG}-\sum_{k'}\rho^{{\rm fin},k,k'}_{EG}\right\|
\nonumber\\
&=&\sum_{k\ne k'}\left\|\rho^{{\rm fin},k,k'}_{EG}\right\|
+\sum_{k}\left\|\sum_{k'(\ne k)}\rho^{{\rm fin},k,k'}_E\right\|
\nonumber\\
&\le&2\sum_{k\ne k'}\left\|\rho^{{\rm fin},k,k'}_{EG}\right\|=2\sum_{k\ne k'}{\rm Tr}\left(\rho^{{\rm fin},k,k'}_{EG}\right)\nonumber\\
&=&2\Pr(K\ne K'\,\land\,\rho_{KK'}^{\rm fin}),
\label{eq:appD:term1}
\end{eqnarray}
and
\begin{eqnarray}
\lefteqn{\left\|\rho^{\rm int}_{KK'EG}-\rho_{KK'EG}^{\rm ideal}\right\|_1}\nonumber\\
&=&
\left\|
\sum_{k}\ket{k}\bra{k}_K\otimes\ket{k}\bra{k}_{K'}\otimes
\left(\sum_{k'}\rho^{{\rm fin},k,k'}_{EG}-2^{-m}\rho^{\rm fin}_{EG}\right)\right\|\nonumber\\
&=&
\left\|
\sum_{k}\ket{k}\bra{k}_K\otimes\left(\sum_{k'}\rho^{{\rm fin},k,k'}_{EG}-2^{-m}\rho_{EG}^{\rm fin}\right)\right\|
\nonumber\\
&=&\left\|{\rm Tr}_{K'}(\rho_{KK'EG}^{\rm fin})-2^{-m}\II_K\otimes\rho_{EG}^{\rm fin}\right\|_1\nonumber\\
&=&d_1(\rho_{KEG}^{\rm fin}).
\label{eq:appD:term2}
\end{eqnarray}
Combining Inequalities (\ref{eq:triangle_inequality_applied_to_D_1}), (\ref{eq:appD:term1}), and (\ref{eq:appD:term2}), we obtain inequality (\ref{eq:separation_secrecy_correctness}).

\section{Proof of lemma \ref{lmm:duality}}
\label{sec:proof_duality}

Let $\rho_{A\bar{A}E}^\text{pre-vq3,$Z$-mea}$ be the result of applying the $Z$-basis measurement in system ${\cal H}_A$ on $\ket{\Psi^\text{pre-vq3}}_{A\bar{A}E}$,
\begin{eqnarray}
\lefteqn{\rho_{A\bar{A}E}^\text{pre-vq3,$Z$-mea}}\nonumber\\
&=&\Pi^\text{$Z$-mea}_A(\ket{\Psi^\text{pre-vq3}}\bra{\Psi^\text{pre-vq3}}_{A\bar{A}E})
\label{eq:rho_sif_zmea}\\
&=&\sum_{z}\ket{z}\bra{z}_A\left(\ket{\Psi^\text{pre-vq3}}\bra{\Psi^\text{pre-vq3}}_{A\bar{A}E}\right)\ket{z}\bra{z}_A.\nonumber
\end{eqnarray}
Simlarly,  let $\rho_{A\bar{A}E}^\text{pre-vq3,$X$-mea}$ be the result of applying the $X$-basis measurement in system ${\cal H}_A$ on $\ket{\Psi^\text{pre-vq3}}_{A\bar{A}E}$,
\begin{eqnarray}
\lefteqn{\rho_{A\bar{A}E}^\text{pre-vq3,$X$-mea}}\nonumber\\
&=&\Pi^\text{$X$-mea}_A(\ket{\Psi^\text{pre-vq3}}\bra{\Psi^\text{pre-vq3}}_{A\bar{A}E})
\label{eq:rho_sif_xmea}\\
&=&\sum_{x}\ket{\widetilde{x}}\bra{\widetilde{x}}_A\left(\ket{\Psi^\text{pre-vq3}}\bra{\Psi^\text{pre-vq3}}_{A\bar{A}E}\right)\ket{\widetilde{x}}\bra{\widetilde{x}}_A
\nonumber
\end{eqnarray}
Then by using the uncertainty relation between the conditional min- and max- entropies (theorem 7.1, ref. \cite{TomamichelPhD}), we obtain
\begin{equation}
H_{\rm min}(\rho_{AE}^\text{pre-vq3,$Z$-mea}|E)+H_{\rm max}(\rho_{A\bar{A}}^\text{pre-vq3,$X$-mea}|\bar{A})\ge n,
\label{eq:duality_inequality}
\end{equation}
where $\rho_{AE}^\text{pre-vq3,$Z$-mea}={\rm Tr}_{\bar{A}}(\rho_{A\bar{A}E}^\text{pre-vq3,$Z$-mea})$, and 
$\rho_{A\bar{A}}^\text{pre-vq3,$X$-mea}={\rm Tr}_{E}(\rho_{A\bar{A}E}^\text{pre-vq3,$X$-mea})$.

Next note that $\ket{\Psi^\text{pre-vq3}}_{A\bar{A}E}$ is a semi-purification of $\rho^\text{sif-aq}_{AE}$ due to lemma \ref{lmm:virtualQKD3}.
Then it follows that
\begin{equation}
\rho_{AE}^\text{pre-vq3,$Z$-mea}=\rho^\text{sif-aq}_{AE}
\end{equation}
and thus
\begin{equation}
H_{\rm min}(\rho_{AE}^\text{pre-vq3,$Z$-mea}|E)=H_{\rm min}(\rho^\text{sif-aq}_{AE}|E)
\label{eq:Hminrhoprevq3equalHminrhosif}
\end{equation}

Also by using the data processing inequality of the maximum entropy \cite{TomamichelPhD},
$H_{\rm max}(\rho_{A\bar{A}}^\text{pre-vq3,$X$-mea}|\bar{A})$ can be bounded as
\begin{equation}
H_{\rm max}(\rho_{A}^\text{pre-vq3,$X$-mea})\ge H_{\rm max}(\rho_{A\bar{A}}^\text{pre-vq3,$X$-mea}|\bar{A}),
\label{eq:H_max_data_processing}
\end{equation}
where $\rho_{A}^\text{pre-vq3,$X$-mea}={\rm Tr}_{\bar{A}}(\rho_{A\bar{A}}^\text{pre-vq3,$X$-mea})$.
More precisely, inequality (\ref{eq:H_max_data_processing}) follows from Result 4, section 5.1.1, ref. \cite{TomamichelPhD}, which is a special case of theorem 5.7 of the same literature, with ${\cal H}_{B'}$, ${\cal H}_C$ and ${\cal H}_{C'}$ being  one-dimensional spaces, and maps ${\cal E}$ and ${\cal F}$ being the identity, and  ${\cal G}$ being tracing out of ${\cal H}_{B}$.

Further, due to (\ref{eq:Pr_X_prevq3_defined}), $\rho_{A}^\text{pre-vq3,$X$-mea}$ takes the form
\begin{equation}
\rho_{A}^\text{pre-vq3,$X$-mea}=\sum_{x}\Pr(X^\text{pre-vq3}=x)\ket{\widetilde{x}}\bra{\widetilde{x}}_A.
\end{equation}
Thus by using the results of the section 4.3.2, ref. \cite{TomamichelPhD}, it follows that
\begin{equation}
H_{1/2}(X^\text{pre-vq3})=H_{\rm max}(\rho_{A}^\text{pre-vq3,$X$-mea}).
\label{eq:renyi_equal_max_ent}
\end{equation}

Finally combining (\ref{eq:bound_H_12X}), (\ref{eq:duality_inequality}), (\ref{eq:Hminrhoprevq3equalHminrhosif}), (\ref{eq:H_max_data_processing}), and (\ref{eq:renyi_equal_max_ent}), we obtain the lemma.

\bibliographystyle{IEEEtran}
\bibliography{minentropy}

\begin{IEEEbiographynophoto}{Toyohiro Tsurumaru} was born in Japan in 1973.
He received the B.S. degree from the Faculty of Science, University of Tokyo, Japan in 1996,
and the M.S. and Ph.D. degrees in physics from the Graduate School of Science, University of Tokyo, Japan in 1998 and 2001, respectively.
Then he joined Mitsubishi Electric Corporation in 2001.
His research interests include theoretical aspects of quantum cryptography, as well as modern cryptography.
\end{IEEEbiographynophoto}

\end{document}